\documentclass[11pt]{amsart}
\usepackage{amsmath,amssymb,amsthm,amsfonts}
\usepackage{mathtools}
\usepackage{pgfplots}
\pgfplotsset{compat=1.18}
\usepackage{tikz}
\usepackage{float}
\usepackage{enumitem}
\usepackage[T1]{fontenc}
\usepackage{xcolor}
\usepackage{hyperref}
\usepackage[backend=biber, style=numeric]{biblatex}
\newtheorem{theorem}{Theorem}[section]
\newtheorem{lemma}[theorem]{Lemma}
\newtheorem{proposition}[theorem]{Proposition}
\newtheorem{corollary}[theorem]{Corollary}

\newtheorem{claim}[theorem]{Claim}

\theoremstyle{definition}

\newtheorem{assumption}[theorem]{Assumption}
\newtheorem{remark}[theorem]{Remark}

\DeclareMathOperator{\Var}{Var}

\DeclareMathOperator{\GOE}{GOE}
\DeclareMathOperator{\Sp}{Sp}
\DeclareMathOperator{\GL}{GL}
\DeclareMathOperator{\SL}{SL}

\newcommand{\R}{\mathbb{R}}
\newcommand{\E}{\mathbb{E}}

\newcommand{\Id}{\mathrm{Id}}

\title{Quantitative Furstenberg Theory for Large Transfer Matrices}
\author{Reuben Drogin}

\begin{document}

\begin{abstract}

We consider the $2W\times 2W$ transfer matrices associated to the block Anderson
model with GOE potential blocks. We make the classical Lyapunov exponent
theory quantitative in two ways. First, we prove a quantitative limit theorem for the top
Lyapunov exponent. Second, we prove every gap between
Lyapunov exponents is at least $c/W$.
As a corollary, this implies 
the localization length of this $1$d block Anderson 
model is $CW^2$. 
The proof uses Furstenberg type formulas 
for the Lyapunov exponents, 
and Malliavin calculus style
arguments in the symplectic group 
to show the product of sufficiently many 
transfer matrices 
has a sufficiently smooth density. The main 
technical input for the latter 
is a least singular value 
estimate for a structured random matrix. 
\end{abstract}

\maketitle

\section{Introduction}

It is a classical fact that the asymptotic rates of expansion 
and contraction of random matrix products are governed by the Lyapunov exponents. 
Precisely, for any probability measure $\mu$ 
on $\SL_d(\mathbb{R})$ with $\int \log \|g\|\mu(dg)<\infty$, 
there exists deterministic
$L_1\geq\cdots\geq L_d$ such that 
if $(M_i)_{i\in \mathbb{N}}$ are independent samples from $\mu$, then
$$L_i = \lim_{n\to\infty}\frac{1}{n}\log \sigma_i(M_1...M_n).$$
Here $\sigma_i(M)$ denotes the $i$-th singular value of $M$, 
and the above limit holds almost surely. 
The existence of these limits follows from the
Furstenberg--Kesten theorem \cite{FurstenbergKesten1960}, or 
Kingman's subadditive ergodic theorem.

Given their existence, the fundamental question in most situations is 
to estimate the values of the $L_i$, either showing 
that $L_1$ is \textit{positive} or that the $L_i$ 
are \textit{simple}, i.e. $L_i-L_{i+1}>0$. 
In this random setting, when $\mu$ is sufficiently non-degenerate, 
the positivity of $L_1$ 
follows from Furstenberg's Theorem \cite{Furstenberg1963}, and 
the simplicity of the $L_i$ follows from the work of 
Goldsheid--Margulis \cite{Margulis87}. However, 
both of these landmark results are \textit{non-quantitative}, 
while many applications require or would benefit from a quantitative estimate. 
One goal of the present work is to present a method 
for making these results 
quantitative. 
For concreteness, we consider the transfer matrices 
of the block Anderson model.

\subsection{Main Results}
For any $E\in \R$, $W\in \mathbb{N}$ and $A\in \mathbb{R}^{W\times W}$ 
define the $2W\times 2W$ \textit{transfer matrix}, given by 
\begin{equation}
  \label{def:M}
  T_E(A):= \begin{pmatrix}
    A - E\,\Id_W & -\Id_W \\
    \Id_W & 0
  \end{pmatrix}. 
\end{equation}
A model case to have in mind is when $A$ has the normalized $\GOE$ distribution, i.e. 
$A\in \mathbb{R}_{\mathrm{Sym}}^{W\times W}$, with
$A_{xy} \sim \mathcal N(0, \frac{1+\delta_{xy}}{W})$, and $(A_{xy})_{1\leq x\leq y\leq W}$ independent.

Our first theorem makes Furstenberg's Theorem quantitative for this matrix.
\begin{theorem}\label{thm:top}
Fix $E\in \mathbb{R}$ and $K>0$. 
There exists $C:= C(E,K)>0$, 
such that if $W\geq 1$, 
$A\in \mathbb{R}^{W\times W}$ is GOE distributed or satisfies assumption \ref{assumption:top-LE} 
with constant $K$, and $L_1(E,W)$ is the 
largest Lyapunov exponent associated to the random matrix $T_E(A)$, given in \eqref{def:M}, 
then
\[
  \left|L_1(E,W)-\frac{1}{2}\log \lambda_E\right|\leq \frac{C}{W},
\]
where $\lambda_E$ is the top eigenvalue of
\begin{equation}
\label{def:M-E}
  M_E=
  \begin{pmatrix}
    1+E^2 & 1 & 2E \\
    1 & 0 & 0 \\
    -E & 0 & -1
  \end{pmatrix}.
\end{equation}
\end{theorem}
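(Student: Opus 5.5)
The plan is to use a Furstenberg-type formula for $L_1$ and to show that the stationary measure on projective space is close to uniform, with an error of order $1/W$. For a unit vector $v=(x,y)\in\R^{2W}$ we have
\[
\|T_E(A)v\|^2=\|(A-E)x-y\|^2+\|x\|^2 .
\]
When $A$ is GOE, $Ax$ is a Gaussian vector whose covariance is $\frac{1}{W}(\|x\|^2\Id+xx^T)$. So the expectation of $\log\|T_E(A)v\|$ is governed, up to $O(1/W)$ fluctuation corrections, by a few low-dimensional statistics of $v$:
\[
a=\|x\|^2,\qquad b=\|y\|^2,\qquad c=\langle x,y\rangle .
\]
A direct computation gives the following update rules to leading order:
\begin{align*}
a' &= (1+E^2)a + b + 2Ec + O(\text{fluct}),\\
b' &= a,\\
c' &= -Ea - c .
\end{align*}
Concentration of $\|Ax\|^2$ around $\|x\|^2$ holds with relative fluctuations of order $W^{-1/2}$. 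This linear action on $(a,b,c)$ is exactly $M_E$. Hence the quadratic-form statistics of the random product evolve, in expectation, according to $M_E$, and $\frac12\log\lambda_E$ is the growth rate of $\log$ of the norm.

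The steps are as follows. First, write $L_1=\int \E\log\|T_E(A)v\|\,d\nu(v)$, where $\nu$ is the stationary measure on projective space. Equivalently, write $L_1=\lim \frac1n\E\log\|T_n\cdots T_1 v_0\|$ for a fixed $v_0$.

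Second, consider the triple $(a_n,b_n,c_n)$ for $v_n=T_n\cdots T_1v_0$. We have exactly
\[
\E[(a_{n+1},b_{n+1},c_{n+1})\mid v_n]=M_E(a_n,b_n,c_n)+\frac{1}{W}R(a_n,b_n,c_n),
\]
with $R$ linear, because the GOE covariance contributes an extra $\|x\|^2/W+\langle x,\cdot\rangle^2/W$ term. Consequently $\frac1n\log\E\|v_n\|^2\to\log\lambda_E+O(1/W)$.

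Third, and this is the main obstacle, pass from $\log\E\|v_n\|^2$ to $\E\log\|v_n\|^2$. The one-step quantity
\[
\log\frac{\|v_{n+1}\|^2}{\|v_n\|^2}
\]
has conditional variance $O(1/W)$. The reason is that it is a Lipschitz function of a Gaussian vector of dimension about $W^2$ whose relevant directions carry variance $1/W$; for a general $A$ under Assumption \ref{assumption:top-LE}, Hanson–Wright-type concentration gives the same. A second-order expansion of $\log$ then gives a per-step Jensen gap of $O(1/W)$, uniformly in the state. This requires the ratio $(a:b:c)$ to stay in a region where $\|T_E v\|$ is bounded below, which holds because $\|T_Ev\|^2\ge \|x\|^2+\|(A-E)x-y\|^2$ and $T_E$ is invertible with $\|T_E^{-1}\|\le C(E)$ with high probability.

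Finally, the projectivized triple $(a:b:c)/\|v\|^2$ is a Markov chain that approximately follows the deterministic projective dynamics of $M_E$, with noise of size $W^{-1/2}$. I would show that it converges to the Perron direction of $M_E$: $M_E$ has a simple top eigenvalue, which I expect follows from positivity of the relevant cone. Averaging the one-step increments against the stationary law then gives $\frac12\log\lambda_E+O(1/W)$ rather than $O(W^{-1/2})$. The reason is that first-order fluctuation terms average to zero and only second-order terms survive. Controlling this, meaning a spectral-gap argument for the projective chain and a uniform bound showing that the chain is not stuck near degenerate directions, is the step I expect to be hardest.
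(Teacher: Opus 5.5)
Your outline (Furstenberg formula, reduction to the Gram triple, $M_E$ as the averaged dynamics, concentration at its Perron direction, second-order Jensen) matches the paper's skeleton. However, you leave the decisive step open, and the tool you propose for it is not the right one. What must be proved is that for $v=(x,y)\sim\nu_{E,W}$ the triple $X=(\|x\|^2,\|y\|^2,\langle x,y\rangle)$ satisfies $\E\|X-v_E\|^2\le C/W$ and $\|\E X-v_E\|\le C/W$, where $v_E$ is the Perron vector of $M_E$ normalized by $(v_E)_1+(v_E)_2=1$. Since $\E_A\|T_E(A)v\|^2=(M_EX)_1+(M_EX)_2+O(1/W)$ and this linear functional equals $\lambda_E$ at $v_E$, the two Jensen steps then finish the proof. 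A spectral gap, uniform in $W$, for the chain on $\mathbb P(\R^{2W})$ would be hard to obtain and is not needed. The missing idea is that stationarity alone yields a closed identity in law, $X\stackrel{d}{=}F(X+M_E^{-1}Y)$. Here $F(x)=M_Ex/((M_Ex)_1+(M_Ex)_2)$ and $Y=(Y_1,0,Y_3)$, with $Y_1=\|Ax\|^2-\|x\|^2-2\langle Ax,y\rangle-2E\langle Ax,x\rangle$, $Y_3=\langle Ax,x\rangle$, and $|\E[Y\mid X]|,\Var(Y\mid X)\le C/W$. The map $F$ sends the compact set of normalized Gram triples into itself. It contracts the adapted distance $d(x)=\sup_{k\ge0}r_0^k\|M_E^k(x/(\phi_E\cdot x)-v_E)\|$ by some $\rho<1$, where $\phi_E$ is the left Perron vector and $r_0\in(\lambda_E^{-1},1)$. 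Hence $\E d(X)^2\le(1+\epsilon)\rho^2\E d(X)^2+C/(\epsilon W)$, which gives the variance bound. This needs no mixing argument, and $X$ need not be Markov (under Assumption \ref{assumption:top-LE} it need not be). Taylor-expanding $F$ at $\E X$ then gives $\|\E X-F(\E X)\|=O(1/W)$, and the contraction upgrades $\|\E X-v_E\|$ from $O(W^{-1/2})$ to $O(1/W)$. This is the precise form of your ``first-order terms average to zero.''

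Several other statements need correcting.
\begin{itemize}
\item The stationary measure is not close to uniform. Under the uniform measure $X\approx(1/2,1/2,0)$, while already at $E=0$ one has $v_E\approx(0.618,0.382,0)$; your final paragraph implicitly concedes this.
\item Step 2 does not feed into the argument. The per-step Jensen bound turns $\E\log(\|v_{k+1}\|^2/\|v_k\|^2)$ into $\E\log Q_E(\hat v_k)+O(1/W)$, where $Q_E(v)=(2+E^2)\|x\|^2+\|y\|^2+2E\langle x,y\rangle$ and $\hat v_k=v_k/\|v_k\|$. It does not turn it into increments of $\log\E\|v_n\|^2$. The discrepancy between the annealed rate $\frac1n\log\E\|v_n\|^2$ and $2L_1$ is governed by the fluctuations of $\hat v_k$, i.e.\ by the missing estimate above. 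On its own, Step 2 only gives the upper bound.
\item The per-step bound itself needs only the variance bounds of Lemma \ref{lem:conc}. But your lower bound on $\|T_E(A)v\|$ through $\|T_E^{-1}\|\le C(E)$ with high probability, like the appeal to Hanson--Wright, is not justified under Assumption \ref{assumption:top-LE}, which controls only fourth moments. The paper instead uses $Z=\|T_E(A)v\|^2\ge\|x\|^2$ together with $\Var Z\le C\|x\|^2/W$. This gives $\E[(Z-\E Z)^2/Z]\le C/W$ uniformly in $v$, so degenerate directions cause no trouble.
\item A Perron vector in an invariant cone is not enough. The contraction of $d$ needs the other eigenvalues of $M_E$ to be strictly smaller in modulus than $\lambda_E$, which the paper checks from the characteristic polynomial (Lemma \ref{lem:M-E-spectral}).
\end{itemize}
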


\medskip
\noindent\textit{Remark.}
Assumption \ref{assumption:top-LE}
is significantly weaker than 
GOE, and  
includes singular entry distributions 
such as centered Bernoulli random variables. 
\medskip

Our second theorem gives a quantitative version of
Goldsheid--Margulis's theorem on simplicity of the Lyapunov exponents
for the matrices in \eqref{def:M}. 

\begin{theorem}\label{thm:gaps}
Fix $E\in \mathbb{R}$. 
There exists $c := c(E)>0$ 
such that if $W\geq 1$, 
$A\in \mathbb{R}^{W\times W}$ is a normalized GOE matrix,
and $L_1(E,W)\geq \cdots \geq L_{2W}(E,W)$ are the 
Lyapunov exponents 
associated to the random matrix $T_E(A)$, given in \eqref{def:M}, then 
\[
  \min_{k\in [1,2W-1]}\left( L_k(E,W)-L_{k+1}(E,W)\right)\ge \frac{c}{W}.
\]
\end{theorem}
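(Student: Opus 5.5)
We plan to reduce the gap estimate to a quantitative top-exponent statement on exterior powers, via a Furstenberg-type formula. The transfer matrices $T_E(A)$ lie in $\Sp_{2W}(\R)$. Hence $L_k=-L_{2W+1-k}$, and it suffices to treat $k\le W$; the middle gap $L_W-L_{W+1}=2L_W$ is handled the same way. For each $k$, the sum $L_1+\cdots+L_k$ is the top Lyapunov exponent of the action of $T_E(A)$ on $\wedge^k\R^{2W}$. Restricted to the isotropic Lagrangian-type flag variety, it is given by
\[
L_1+\cdots+L_k=\int\!\!\int \log\frac{\|\wedge^k T\, v\|}{\|v\|}\,d\mu(T)\,d\nu_k(v),
\]
where $\nu_k$ is the stationary measure on the isotropic Grassmannian $\mathrm{IGr}(k,2W)$. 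Then
\[
L_k-L_{k+1}=2(L_1+\cdots+L_k)-(L_1+\cdots+L_{k-1})-(L_1+\cdots+L_{k+1}),
\]
a second difference of these exponents. Our plan is to express this gap through a single flag-variety integral. Let $\nu_{k,k+1}$ be the stationary measure on partial flags $V_k\subset V_{k+1}$. Then
\[
L_k-L_{k+1}=\E\log\frac{\|T u_k\|^2\,/\,\|T\wedge\text{(complement)}\|}{\cdots},
\]
and more concretely the gap is the expected logarithmic expansion rate of $T$ acting on the two-dimensional quotient $V_{k+1}/V_{k-1}$, measured relative to its determinant. This rate is a nonnegative quantity, namely the expansion of a $2\times2$ cocycle, and the task is to bound it below by $c/W$.

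The key step is a one-step (or $n$-step) anticoncentration computation. Condition on the stationary flag. The quotient cocycle on $V_{k+1}/V_{k-1}$ is a $2\times 2$ matrix $B$ with $\det B$ normalized. Its log-expansion $\E\log(\|B v\|/|\det B|^{1/2})$, for $v$ drawn from the stationary direction, is bounded below by a constant multiple of the variance of the off-diagonal "shear" of $B$. Since $A$ is GOE with entries of variance $1/W$, the relevant entries of $B$ are linear in $A$ restricted to $V_{k+1}$. They carry a Gaussian part of variance $\asymp 1/W$ that is independent of the conditioning, because $A$ is isotropic (orthogonally invariant). Hence
\[
\E\log\frac{\|Bv\|}{|\det B|^{1/2}}\gtrsim \frac1W,
\]
uniformly over the stationary data. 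The one issue is that $T_E(A)$ only acts on the "top half" of the vector, so a single step may fail to be non-degenerate in all directions of the flag.

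To overcome this, we replace $T$ by a product $T_n\cdots T_1$ of $n=O(1)$ steps. Following the abstract, we show that this product has a smooth density on $\Sp_{2W}$ near typical points, with quantitative bounds. Concretely, we compute the Malliavin matrix $DD^*$ of the map $(A_1,\dots,A_n)\mapsto T_n\cdots T_1$ composed with the flag projection. We then lower bound its least singular value by a structured random-matrix estimate. Smoothness of the law on the flag gives two things. First, the stationary measure $\nu$ cannot concentrate near the "bad" set, where the projected shear on $V_{k+1}/V_{k-1}$ degenerates. Second, the $O(1/W)$ variance gain is not cancelled.

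The main obstacle is this least-singular-value estimate for the structured (block, symplectic-constrained) Malliavin matrix, uniformly in $W$ and $k$. We would attack it by an $\epsilon$-net argument over unit directions in the Lie algebra $\mathfrak{sp}_{2W}$, restricted to the relevant quotient, combined with Gaussian small-ball bounds for each fixed direction. Finally, we sum the per-step gain over the $n$ steps, which gives $L_k-L_{k+1}\ge c/W$ with $c$ depending only on $E$.
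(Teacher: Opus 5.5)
\textbf{There is a genuine gap.} You have the right global ingredients: the second-difference formula on the quotient $V_{k+1}/V_{k-1}$, passing to an $O(1)$-fold product, and a Malliavin covariance controlled by a structured least singular value bound. But the step that actually produces $c/W$ is missing, and the mechanism you propose for it does not work.

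\emph{First}, at a fixed flag $(U,V)$ the quantity $\mathbb{E}_g\int\log\bigl(\|Bv\|^2/|\det B|\bigr)\,\nu_{V/U}(dv)$ need not be nonnegative. It becomes a nonnegative quantity only after using stationarity to rewrite the gap as $\int\mathbb{E}_g KL(g_*\nu_{V/U}\mid\nu_{gV/gU})\,\nu_{(k-1,k+1)}(d(U,V))$ (Proposition \ref{prp:KL-gap-formula}). So ``uniformly over the stationary data'' has no justification.

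\emph{Second}, ``log-expansion $\gtrsim$ variance of the off-diagonal shear'' is false as a general principle. A random rotation in $SO(2)$ has off-diagonal entries of positive variance but zero Lyapunov exponent. What is needed is randomness along a \emph{hyperbolic} direction that fixes the flag.

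\emph{Third}, your fallback, that $\nu$ cannot concentrate near a bad set, would require quantitative regularity of implicit stationary measures on a flag variety of dimension of order $W^2$, uniformly in $W$. Nothing in your plan supplies this.

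The paper closes this gap with Theorem \ref{thm:Fisher-gaps}, which never controls the stationary measures. Fix $(U,V)$ and an \emph{arbitrary} family $\nu_{V/U}$. By Pinsker and Jensen it suffices to lower bound $\mathbb{E}_g|e^{h_Y(g)}-1|$ uniformly in $Y$, where
\[
h_Y(g)=\log\frac{d((g^{-1})_*\nu_{gV/gU})}{d\nu_{V/U}}(Y).
\]
Take $M=\frac{1}{\sqrt2}(e_ke_k^{\ast}-f_kf_k^{\ast})$, which fixes the flag $(U,Y,V)$ and acts hyperbolically on $V/U$. Then $h_Y(g\exp(tM))=h_Y(g)+\alpha t$ with $|\alpha|\ge c$. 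Integration by parts then gives $\mathbb{P}_g(|h_Y(g)|\le\delta)\le C\delta\|X_Mf\|_{L^1}$, and hence a gap of at least $c(1+\sup_M\|X_Mf\|_{L^1})^{-2}$.

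The Malliavin part must then deliver precisely $\sup_{\|M\|_{HS}=1}\|X_M\frac{d\mu^{*1000}}{dm}\|_{L^1}^2\le CW$. This needs more than a least singular value bound:
\begin{itemize}
\item It uses the Gaussian lifting lemma with $\tilde X_M=\mathcal B^*\widehat\Gamma^{-1}M$, together with derivative bounds $\Lambda_2=O(\sqrt W)$.
\item The net argument is not run over $\mathfrak{sp}_{2W}$ directly. The recursion for $\operatorname{Ad}_{Q_i^{-T}}M$ reduces it to the operator $P\mapsto(G_jP+P^TG_j)_j$, handled by strong convergence plus a rank-stratified net. This is because low-rank $P$ defeat a uniform small-ball bound.
\end{itemize}
Finally, no per-step gains are summed. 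One applies the bound to $\mu^{*1000}$, whose exponents are $1000L_k$, and divides by $1000$.
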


\begin{remark}
The proof can be applied to a broader class of transfer matrices. 
For instance
one may relax the GOE assumption and allow
non-identity, but still uniformly elliptic, hopping blocks. 
The choice of $\eqref{def:M}$ is made to
present the main ideas in the simplest setting.
\end{remark}

The above result is sharp in $W$ up to constants. 
Theorem \ref{thm:gaps} implies the Greens function 
of the $1$d Block Anderson model decays exponentially 
at the scale $W^2$, establishing a version
of the Fyodorov--Mirlin $\sqrt{N}$-conjecture 
\cite{FyodorovMirlin1991,Mirlin2000,Shapiro}. 

\begin{corollary}
\label{cor:RBM-Localization}
Let $W\in \mathbb{N}$,
and let $A_1, A_2,\ldots \in \mathbb{R}_{sym}^{W\times W}$ be independent 
normalized GOE matrices. 
If for each $n$
$H_{W,n}\in (\mathbb{R}^{W\times W})^{n\times n}$ 
denotes the random symmetric matrix 
$$
H_{W,n}:=
\begin{pmatrix}
  A_1 & \Id_W & 0 & \cdots & 0 \\
  \Id_W & A_2 & \Id_W & \ddots & \vdots \\
  0 & \Id_W & A_3 & \ddots & 0 \\
  \vdots & \ddots & \ddots & \ddots & \Id_W \\
  0 & \cdots & 0 & \Id_W & A_n
\end{pmatrix},
$$
then, for each $E\in \mathbb{R}$ we have,
$$\lim_{n\to \infty} \frac{1}{n}\log \|(H_{W,n}-E)^{-1}(1,n)\|_{op}\leq \frac{-c}{W},$$
almost surely. Here $(H_{W,n}-E)^{-1}(1,n)$ denotes the $(1,n)$ block 
in the matrix $(H_{W,n}-E)^{-1}$, and $c>0$ depends only on $E$.
\end{corollary}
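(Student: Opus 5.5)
The plan is to derive Corollary \ref{cor:RBM-Localization} from Theorem \ref{thm:gaps} via the standard link between transfer matrices and the Green's function, as in Kotani/Pastur--Figotin for strips. Write $\Phi_n := T_E(A_n)\cdots T_E(A_1)\in \Sp_{2W}(\R)$. Since the transfer matrices are symplectic, the Lyapunov spectrum is symmetric, $L_k=-L_{2W+1-k}$. The middle gap $L_W - L_{W+1} = 2L_W$ is then at least $c/W$ by Theorem \ref{thm:gaps}, so the smallest positive exponent satisfies $L_W\ge c/(2W)$. The Green's function decay should be governed by exactly this exponent $L_W$.

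First, express the corner block $G(1,n):=(H_{W,n}-E)^{-1}(1,n)$ through $\Phi$. Solve $(H_{W,n}-E)\psi=e_n\otimes v$. On sites $1,\dots,n-1$ the vector $\psi$ satisfies the homogeneous recursion $\psi_{j+1}=(A_j-E)\psi_j-\psi_{j-1}$ with $\psi_0=0$, so $(\psi_{j+1},\psi_j)=\Phi_j(\psi_1,0)$. Writing $\Phi_{n}$ in $W\times W$ blocks $\begin{pmatrix} a_n & b_n\\ c_n& d_n\end{pmatrix}$, we get $\psi_n=c_n\psi_1$, up to the indexing convention. The last equation $(A_n-E)\psi_n+\psi_{n-1}=v$ then gives $v = a_n\psi_1$, that is, $(\psi_{n+1}$ with boundary $0)$ equals $v$. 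Hence $G(1,n)=a_n^{-1}$ up to transposition, where $a_n$ is the upper-left block of $\Phi_n$ and is invertible whenever $E\notin\sigma(H_{W,n})$, which holds almost surely. Therefore $\|G(1,n)\|_{op}=1/\sigma_{\min}(a_n)=1/\sigma_W(P\Phi_n P^*)$, where $P$ is the projection onto the first $W$ coordinates.

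Next, show that $\liminf\frac1n\log\sigma_W(P\Phi_nP^*)\ge L_W$ almost surely. By Oseledets / Goldsheid--Margulis (the exponents are simple, and in particular $L_W>L_{W+1}$), the most expanding $W$-dimensional subspace $U_n$ of $\Phi_n$ converges almost surely to a random subspace $U_\infty$ whose law is the Furstenberg measure on the Lagrangian Grassmannian. Likewise the image directions $\Phi_n U_n$ converge in the backward sense. The key point is transversality: we need $U_\infty$ to be almost surely transverse to $\ker P^{\perp}$-type subspaces, namely $\{(0,y)\}$, and the limiting image directions to be transverse to $\{(x,0)\}^{\perp}$. Given this, $\sigma_W(P\Phi_nP^*)\ge c_\omega\,\sigma_W(\Phi_n)$ for large $n$, which yields $\limsup\frac1n\log\|G(1,n)\|\le -L_W\le -c/(2W)$. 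Adjusting $c$ gives the claim.

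The main obstacle is the transversality step. A fixed Lagrangian subspace has Furstenberg-measure zero for the non-degenerate set of transversality failures because the invariant measure on the Lagrangian Grassmannian is continuous, hence gives zero mass to the proper algebraic subvariety of Lagrangians meeting a fixed one. My first attempt would invoke this via strong irreducibility and proximality of the Zariski dense GOE-generated semigroup in $\Sp_{2W}$. As an alternative avoiding fine asymptotics, I would use the identity $\frac1n\E\log|\det$ of the restriction $|$ together with the Thouless-type formula for strips, $\sum_{k\le W}L_k=\int\log|E-E'|\,dN(E')$, and then pass from $\det a_n$ to the smallest singular value using the largest-$W-1$ bound. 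This yields the exponent $L_W$ directly.
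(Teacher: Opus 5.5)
Your reduction is correct and matches the paper's. Up to the gauge $\psi_j\mapsto(-1)^j\psi_j$ (your recursion has the hopping sign flipped, which only changes $G(1,n)$ by a sign), one has $(H_{W,n}-E)^{-1}(1,n)=\pm a_n^{-1}$, where $a_n$ is the upper-left block of $\Phi_n=T_E(A_n)\cdots T_E(A_1)$. Also $L_W\ge c/(2W)$ by symplectic symmetry and Theorem \ref{thm:gaps}.

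Everything therefore rests on the almost sure bound $\liminf_n\frac1n\log\sigma_W(a_n)\ge L_W$ at the given $E$, and that step has a genuine gap. For the forward product $\Phi_n$, only the domain-side subspace $U_n$ converges almost surely, and there your continuity argument does give a random positive angle. The image $\Phi_nU_n$ behaves differently. Up to exponentially small error it equals the Lagrangian $\Phi_n\{(x,0)\}$, it is refreshed by every new $A_n$, and it is asymptotically distributed according to the stationary measure. It does not converge. ``Convergence in the backward sense'' is a statement about $T_E(A_1)\cdots T_E(A_n)$, which is not your product. Consequently no random constant $c_\omega$ with $\sigma_W(a_n)\ge c_\omega\,\sigma_W(\Phi_n)$ for all large $n$ comes out of the argument, and one should not expect one to exist. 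What you actually need is that, for each $\varepsilon>0$, the angle between $\Phi_nU_n$ and $\{(0,y)\}$ is at least $e^{-\varepsilon n}$ for all large $n$. Getting this from Borel--Cantelli requires a \emph{quantitative} non-concentration estimate for the stationary measure near the hypersurface $\mathcal Z$ of Lagrangians meeting $\{(0,y)\}$, for instance $\nu(\{Y:d(Y,\mathcal Z)<r\})\le Cr^{\alpha}$. This is H\"older regularity of stationary measures, which uses exponential moments. Knowing only that $\nu$ is continuous, so $\nu(\mathcal Z)=0$, gives no rate and does not suffice.

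Your fallback has exactly the paper's architecture: $\sigma_W(a_n)=|\det a_n|/\prod_{i<W}\sigma_i(a_n)$ with $\prod_{i<W}\sigma_i(a_n)\le\|\wedge^{W-1}\Phi_n\|$. However, the Thouless formula does not supply the missing ingredient. It identifies $L_1+\cdots+L_W$ with the averaged quantity $\int\log|E-E'|\,dN(E')$. You need instead $\liminf_n\frac1n\log|\det a_n|\ge L_1+\cdots+L_W$ almost surely at the fixed energy $E$. That is the hard direction, because eigenvalues of $H_{W,n}$ close to $E$ can make the determinant small.

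The paper gets this bound by observing that $\det a_n=\langle e_1\wedge\cdots\wedge e_W,(\wedge^W\Phi_n)\,e_1\wedge\cdots\wedge e_W\rangle$ is a matrix coefficient of $\wedge^W\Phi_n$. It then invokes \cite[Ch.~VI, Prop.~2.2]{BougerolLacroix1985}: for strongly irreducible, proximal laws, $\frac1n\log|\langle v,M_n\cdots M_1w\rangle|\to\lambda_1$ almost surely for fixed $v,w$. The hypotheses come from the density of $\mu^{*1000}$ in Proposition \ref{prp:Fisher-bound}. That cited result is essentially a packaged form of the regularity-plus-Borel--Cantelli argument your transversality step is missing. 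Citing it, or proving the H\"older estimate directly, closes the gap.
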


\begin{remark}
An inspection of the proof gives that the constants 
are uniform for $E$ in a compact interval. 
Also, one can prove a high probability version of 
Corollary \ref{cor:RBM-Localization}
using the ideas in this work, 
although we do not pursue that here. 
\end{remark}

One ingredient to the proof of Theorem \ref{thm:gaps}, 
is the following. It lower bounds the gaps between
Lyapunov exponents by the smoothness of the law of the random matrix. 
To state it, we introduce the following notation. For any $M\in \mathfrak{sp}_{d}$
(the lie algebra of the $d$-dimensional Sympletic group) let 
$X_M$ be the left-invariant vector field generated by $M$, 
defined by 
\begin{equation}
\label{def:vector-field-derivative}
(X_M\phi)(g):= \frac{\partial}{\partial t}|_{t=0}\phi(g\exp(tM)).
\end{equation}
Further, we identify any $M\in \mathfrak{sp}_{d}$ as a 
matrix in $\mathbb{R}^{d\times d}$ and define
the Hilbert-Schmidt norm $\|M\|_{HS}$ in the standard way. 

\begin{theorem}
\label{thm:Fisher-gaps}
Let $m$ be any Haar measure on  $\Sp_{d}(\mathbb{R})$, 
and $\mu$ be a probability measure on $\Sp_{d}(\mathbb{R})$ 
with 
density $f = \frac{d\mu}{dm}$. Assume
$\mathbb{E}_{g\sim\mu}\log \|g\|<\infty.$ If
$L_1\geq \cdots \geq L_{d}$ are the Lyapunov exponents
of $\mu$, then
\[
  \min_{k\in [1,d-1]}(L_k-L_{k+1})
  \geq
  c\left(1+\sup_{\substack{M\in\mathfrak{sp}_{d},\\ \|M\|_{HS}=1}}\|X_Mf\|_{L^1(\Sp_{d}(\mathbb R))}\right)^{-2}
\]
where $c>0$ is an absolute constant, \(L^1(\Sp_{d}(\mathbb R))\)
means the \(L^1\) space with respect to \(m\), and we set
\(\|X_M f\|_{L^1(\Sp_{d}(\mathbb R))}=\infty\) if \(f\) is not weakly
differentiable. 
\end{theorem}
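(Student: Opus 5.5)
We reduce the gap $L_k-L_{k+1}$ to a Furstenberg-type formula on an isotropic flag manifold, and control that formula through a Fisher-information (entropy) argument driven by the $L^1$ norms of $X_Mf$.

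\textbf{Step 1 (Furstenberg formula for partial sums).} Set $\Lambda_k:=L_1+\cdots+L_k$. Then $\Lambda_k$ is the top Lyapunov exponent of the action on $\wedge^k\R^d$. It can be written as
\[
\Lambda_k=\iint \log\frac{\|g\,\xi\|}{\|\xi\|}\,d\mu(g)\,d\nu_k(\xi),
\]
where $\nu_k$ is any $\mu$-stationary measure on the Grassmannian $\mathrm{Gr}_k$ (isotropic when $k\le d/2$). Because $\mu$ has a density, it is strongly irreducible and its support generates a Zariski-dense semigroup, so Goldsheid--Margulis gives simplicity qualitatively. Our aim is the quantitative version. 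By the symplectic symmetry $L_i=-L_{d+1-i}$, it suffices to treat $k\le d/2$.

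\textbf{Step 2 (gap as an entropy/cocycle quantity).} The gap is
\[
L_k-L_{k+1}=2\Lambda_k-\Lambda_{k-1}-\Lambda_{k+1}.
\]
Following the classical argument, we realize it as a Furstenberg formula on the partial flag space $F=\{V_{k-1}\subset V_k\subset V_{k+1}\}$. There the relevant cocycle is the log-Jacobian of the action of $g$ on the projective line $\mathbb P(V_{k+1}/V_{k-1})$: the gap is half the expected log-derivative of the induced Möbius map.

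Let $\sigma$ be the invariant measure on $\mathbb P^1$ and $\nu$ the stationary law of the two-dimensional quotient dynamics. We then aim for a bound of the form
\[
L_k-L_{k+1}\ \ge\ c\,\E\,\mathrm{KL}\bigl(g_*\nu\,\|\,\nu\bigr)\quad\text{or}\quad L_k-L_{k+1}=\tfrac12\,h_{\mathrm{Furst}},
\]
where $h_{\mathrm{Furst}}=\iint\log\frac{dg_*\nu}{d\nu}\,d\nu\,d\mu$ is the Furstenberg entropy. This follows the approach where the exponent gap equals entropy for the boundary $\mathbb P^1$ of $\SL_2$. We then need a lower bound on this entropy.

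\textbf{Step 3 (smoothness forces non-degenerate entropy).} Because $\mu$ has a differentiable density, the convolution $\mu*\nu$ on $F$ is absolutely continuous. Its derivatives along the flag directions are bounded by $\sup_M\|X_Mf\|_{L^1}=:D$, via integration by parts against left-invariant fields. The fields generating the one-dimensional rotation in $V_{k+1}/V_{k-1}$ come from some $M\in\mathfrak{sp}_d$ with $\|M\|_{HS}\lesssim 1$. Hence the one-dimensional marginal density $p$ of $\nu$ on the circle satisfies $\|p'\|_{L^1}\lesssim D$, and so $\|p\|_\infty\lesssim 1+D$.

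A density bounded by $1+D$ cannot be invariant under a nontrivial Möbius action unless the action is nearly rotational. We quantify this with an uncertainty-type inequality: entropy $\gtrsim(1+D)^{-2}$, obtained by combining the $L^\infty$ bound with Pinsker. The square enters because the KL divergence controls the squared total-variation discrepancy, and the total-variation discrepancy is at least of order $1/(1+D)$ when the pushed measure is stretched. We must also rule out that $g$ acts almost as a rotation, which would give zero gap. For this, use that $X_M$ in the hyperbolic (non-compact) directions cannot annihilate the density, which is guaranteed by $f\in L^1$ with total mass one on a non-compact group.

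\textbf{Main obstacle.} The hardest step is the last one: converting a bound on $\|X_Mf\|_{L^1}$ into a quantitative lower bound on the expansion of the quotient Möbius action, uniformly in $d$ with an absolute constant. The difficulty is that the derivative bound must be transported from $\Sp_d$ to the two-dimensional quotient without losing dimension factors. We would first attempt this using the $K$-invariance of the Iwasawa decomposition $g=kan$, projecting $f$ onto the $A$-component for the $(k,k+1)$ root direction, where $\|M\|_{HS}=1$ is dimension-free.
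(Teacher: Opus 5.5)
\textbf{There is a genuine gap: Step 3, the lower bound on the entropy term, is not established, and the route you sketch does not close it.}

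\emph{Step 2 is stated imprecisely.} You are aiming at the right framework, Lessa's entropy identity on $\mathcal F_{(k-1,k,k+1)}$. But there is no single $\mathbb P^1$ carrying a single stationary law, because the line $\mathbb P(V/U)$ moves with $(U,V)$. The correct statement uses the conditional measures $\nu_{V/U}$ of the stationary measure on $\mathcal F_{(k-1,k,k+1)}$:
$L_k-L_{k+1}=\int\E_g\,KL(g_*\nu_{V/U}\mid\nu_{gV/gU})\,\nu_{(k-1,k+1)}(d(U,V))$,
with constant $1$, not $\tfrac12$. The constant is harmless; the fibered structure is not.

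\emph{Why your Step 3 fails.} You yourself leave this step as the ``main obstacle.''
\begin{itemize}
\item Integrating by parts against left-invariant fields differentiates $\phi(g\exp(tZ)\xi)$. That is a derivative along the $g$-dependent direction $g_*Z^\#(\xi)$ at $g\xi$, not along a fixed direction.
\item Derivatives of $\mu*\nu$ along a fixed direction on the flag space need right-invariant fields, i.e.\ $X_{\mathrm{Ad}_{g^{-1}}Z}f(g)$. Here $\|\mathrm{Ad}_{g^{-1}}Z\|_{HS}$ can be of order $\|g\|^2$, and the direction varies with $g$. Bounding this by $D$ therefore costs moments of $\|g\|$ and factors of $\dim\mathfrak{sp}_d$, which the hypotheses do not supply.
\item Even granting $\|p\|_\infty\lesssim 1+D$ on each fiber, an upper bound on the density of $\nu$ cannot force the KL term to be positive. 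A bounded rotation-invariant density is invariant under rotations.
\item What is actually needed is a quantitative lower bound on how much $\mu$ spreads in a hyperbolic direction adapted to each flag $(U,Y,V)$. Your appeal to non-compactness gives this only qualitatively. A single Iwasawa decomposition is not adapted to the arbitrary, stationary-distributed flags at which the integrand is evaluated.
\end{itemize}

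\textbf{The missing idea is that you never need to estimate the stationary measures.}
\begin{itemize}
\item Fix $(U,V)$ and $Y\in\mathbb P(V/U)$, and set $h_Y(g)=\log\frac{d(g^{-1})_*\nu_{gV/gU}}{d\nu_{V/U}}(Y)$. The KL term is then $\E_g KL(\nu_{V/U}\mid(g^{-1})_*\nu_{gV/gU})$.
\item After a symplectic rotation, take $M=\frac{1}{\sqrt2}(e_ke_k^*-f_kf_k^*)$. It has $\|M\|_{HS}=1$, and $\exp(tM)$ preserves $U$ and $V$, fixes $Y$, and acts hyperbolically on $V/U$.
\item Right multiplication $g\mapsto g\exp(tM)$ therefore leaves $gU$ and $gV$ unchanged. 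It shifts $h_Y$ exactly linearly, $h_Y(g\exp(tM))=h_Y(g)+\alpha t$ with $|\alpha|\ge c$, because the only change is the Jacobian of $\exp(-tM)$ at its fixed point $Y$.
\item A one-dimensional integration by parts along $X_M$ gives $\mathbb P_g(|h_Y(g)|\le\delta)\le C\delta|\alpha|^{-1}\|X_Mf\|_{L^1}$.
\item Hence $\E_g|e^{h_Y(g)}-1|\ge cA^{-1}$ with $A=1+D$, uniformly in $Y$, in $(U,V)$, and in the family $(\nu_{V/U})$. Pinsker and Jensen then give the $A^{-2}$ bound.
\end{itemize}
The argument is dimension-free precisely because one unit-norm $M$, adapted to each flag $(U,Y,V)$, suffices, and no regularity of $\nu$ ever enters.
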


The theorem extends to $\GL_{d}(\mathbb{R})$ and 
$\SL_{d}(\mathbb{R})$ with straightforward modifications.

\subsection{Examples and Discussion}
Briefly we comment on using Theorem \ref{thm:Fisher-gaps}, 
and its application 
to Theorem \ref{thm:gaps}. 

When $\mu$ is smooth enough 
Theorem \ref{thm:Fisher-gaps} can be used immediately. 
For instance, consider the random 
matrix $Z\in GL_d(\mathbb{R})$ given by 
\begin{equation}
\label{eq:example-law} 
Z= M_0 + \delta B,
\end{equation}
for some $\delta>0$, deterministic $M_0\in \GL_{d}(\mathbb{R})$, 
and $B\in \mathbb{R}^{d\times d}$ Ginibre, i.e. having i.i.d. real 
Gaussian entries
with variance $1/d$. The RHS in 
Theorem \ref{thm:Fisher-gaps} can be computed 
via the density. Indeed, with
respect to Haar measure on 
$\GL_{d}(\mathbb{R})$, 
$Z$
has density proportional to 
\[
  f(Z)
  =
  |\det Z|^d
  \exp\left(-\frac{d}{2\delta^2}\|Z-M_0\|_{HS}^2\right),
\]
where the $|\det Z|^d$ comes from Haar measure, 
and the second term is the law of $\delta B$ on $\mathbb{R}^{d\times d}$. 
Using \eqref{def:vector-field-derivative}, 
a computation gives that
for any $M\in \mathfrak{gl}_d$,
$$\|X_Mf\|_{L^1(\GL_{d}(\mathbb{R}))}\leq C\delta^{-1}\sqrt{d}(\delta+\|M_0\|_{op})\|M\|_{HS}.$$
Hence, the $GL_d(\mathbb{R})$ version of Theorem \ref{thm:Fisher-gaps} implies the Lyapunov exponents of 
\eqref{eq:example-law} are seperated on the scale $\delta^2 d^{-1}$, 
when $c\leq \|M_0\|_{op}\leq C$, and $\delta\in [0,1]$. 

In some similar Gaussian cases 
more precise results are known, see \cite{Newman1986, Forrester1, Forrester2,ForresterZhang2020}. 
However, a similar computation as above proves the same result 
for \textit{non-gaussian} distributions with a sufficenlty smooth density, 
which seems to be new. 
We note that the $\delta^2$ scaling here 
is related to the main result of Bednarski, Dewitt, and Quas
\cite{QuasDewitt}. Their result
applies to the more general non-stationary product setting, 
where $M_0$ may change between terms in a random product, 
but ours is effective in dimension. 

The other extreme is when $\mu$ is a 
discrete measure, 
say supported on finitely many elements 
of $\Sp_{d}(\mathbb{R})$. In this case   
Theorem \ref{thm:Fisher-gaps} 
cannot be applied at all as any convolution power of $\mu$ will be discrete. 
For such measures, a useful replacement for the right-hand side
of Theorem \ref{thm:Fisher-gaps} is the relative 
entropy quantity
\[
  \mathbb{E}_{g,b}\left[
    \operatorname{KL}\left(\nu_b \mid g_{\ast}^{-1}\nu_{gb}\right).
  \right],
\]
proven in the work of Lessa \cite{Lessa2021}. Here $\operatorname{KL}$ 
is the Kullback-Liebler divergence, $b$ is a random flag, $(\nu_b)$ is a 
family of measures indexed by flags and $g_{\ast}^{-1}$ is the natural pullback action of $g$. 
See Proposition \eqref{prp:KL-gap-formula} and 
surrounding arguments for more details, or consult the work of Lessa \cite{Lessa2021}. 
Lower bounds for this quantity are closely
related to spectral gaps and mixing for actions 
generated by finitely many matrices
on projective or flag spaces.  
This is a difficult problem in the theory
of random walks and expansion in matrix groups; for instance see
\cite{BourgainGamburd2008,BreuillardGamburd2010,
LindenstraussVarju2016,Kittle2025,Kogler2025,
BourgainGamburd2012}.

Many cases of interest, such as our \eqref{def:M}, 
lie between these two, having smoothness in some directions but not every direction. 
Indeed, 
the law 
of \eqref{def:M} is not smooth
in $\Sp_{2W}$; it is supported on a submanifold 
of dimension $W(W+1)/2$. In such cases, one may hope that 
the law of a product of samples, given by $\mu^{\ast N_0}$ for some $N_0$, may have a 
weakly differentiable density. This can be seen 
as a hypoelliptic problem, and is the approach we take 
to proving Theorem \ref{thm:gaps} using Theorem \ref{thm:Fisher-gaps}. 
In our case, estimating the derivatives 
of $\mu^{\ast N_0}$, \textit{quantitatively} in $W$, 
presents a real challenge and is another contribution of the current work. 
This is explained below, along with the sketch of the proofs 
of Theorem \ref{thm:top} and Theorem \ref{thm:gaps}.

\subsection{Proof Sketch} 

Theorem \ref{thm:top} uses Furstenberg's formula together with an estimate
on the stationary measure.  If \(v=(x,y)^T\in \R^{2W}\) is a unit
vector, then with high probability
\[
\begin{aligned}
  \|T_E(A)v\|^2
  &=\|(A-E\Id_W)x-y\|^2+\|x\|^2  \\
  &= \|(A-E\Id_W)x\|^2 + \|y\|^2 - 2\langle (A-E\Id_W)x, y\rangle + \|x\|^2\\
  & \approx (2+E^2)\|x\|^2+\|y\|^2 +2E\langle x,y\rangle,
\end{aligned}
\]
where $\approx$ means up to random mean $0$ errors 
of typical size $W^{-1/2}$.  
In the last line, we used that \(\|Ax\|^2\approx \|x\|^2\) and 
\(\langle Ax,y\rangle\approx 0\). 

Furstenberg's formula for the top Lyapunov exponent (Proposition \ref{prp:Furstenberg-formula-top}) 
then reduces the problem to 
understanding the distribution of the three quantities
\[
  \|x\|^2,\ \|y\|^2,\ \langle x,y\rangle
\]
when $v = (x,\ y)^T$ is a unit vector whose 
direction is drawn from the 
stationary distribution of $T_E(A)$ on $\mathbb{P}(\mathbb{R}^{2W})$. 
The main observation we use is that by stationarity we have 
$$v \stackrel{d}{=} \frac{T_E(A)v}{\|T_E(A)v\|},$$
and thus a computation, see Lemma \ref{lem:recurrence}, then gives that
if 
$$X:=\left(\|x\|^2,\ \|y\|^2,\ \langle x,y\rangle\right)^T,$$
then
\[
\begin{gathered}
  X\stackrel{d}{=}
  \frac{M_E X+Y}{(M_E X+Y)_1+(M_E X+Y)_2},\\
\end{gathered}
\]
where $Y$ is a centered random variable with typical size $W^{-1/2}$, 
and \(M_E\) is given in \eqref{def:M-E}. $M_E$ has a unique expanding 
direction, and so one can argue, see Proposition \ref{prop:stat}, 
that $X$ must be concentrated near that direction.

Theorem \ref{thm:gaps} uses Theorem \ref{thm:Fisher-gaps} (which also 
relies on a Furstenberg-type formula) to reduce lower bounding the 
gaps to estimating the \textit{smoothness} 
of the law of products of independent samples of $T_E(A)$.
To do this
we use ideas from the theory of hypoellipticity and Malliavin calculus.  

To illustrate the basic mechanism, consider 
the random variable 
$f(Z)$ where $f:\mathbb{R}^d\to \mathbb{R}^N$ is the linear function
\[
  f(z)=\sum_{i=1}^d z_i v_i,
\]
for some vectors
$v_1,...,v_d\in \mathbb{R}^{N}$, and \(Z = (z_1,...,z_d)\) is a standard Gaussian vector in
\(\mathbb R^d\). 
If the vectors \(v_i\) do not span \(\mathbb R^N\), then the random variable 
\(f(Z)\) is
supported on a proper subspace and has no density on \(\mathbb R^N\).
If they do span $\mathbb{R}^N$, then \(f(Z)\) is a non-degenerate Gaussian with covariance
\[
  \Gamma=Df\,(Df)^T=\sum_{i=1}^d v_i v_i^T,
\]
and density given by  
\[
  \frac{1}{(2\pi)^{N/2}\det(\Gamma)^{1/2}}
  \exp\left(-\frac12 \langle \Gamma^{-1}x,x\rangle\right).
\]
Hence, the smoothness of the density 
is governed by the smallest eigenvalue of $\Gamma$. 

In our problem, we're considering the random variable 
$F_E(a_1,...,a_{1000})$ where   
 $F_E:(\mathbb{R}^{W\times W}_{sym})^{1000}\to \Sp_{2W}(\mathbb{R})$
is the function
\[
  F_E(a_1,\ldots,a_{1000})=T_E(a_1)\cdots T_E(a_{1000})
\]
and $a_1,...,a_{1000}$ are independent Gaussian random matrices, 
say GOE distributed.  
Despite this different setting, the main 
technical estimate is precisely an estimate on the 
smallest eigenvalue of the so called 
\textit{Malliavin covariance} 
$$\Gamma(a):= DF_E(a)DF_E(a)^{\ast}.$$
This random matrix is non linear in the $a_i$, 
and highly structured due to the non-isotropic 
form of the random matrices $T_E(a_i)$. 
Nevertheless, it is possible to show 
that its smallest eigenvalue is bounded below by a positive constant with high probability 
by an epsilon net argument, following 
ideas of \cite{RudelsonVershynin2009,TaoLeastSingularValue2010}. 

We note that a further difficulty of the non-linear 
structure of $F_E$ is that there is no accessible
form of the density. Hence,
even having controlled $\Gamma(a)$, 
it is not clear how to estimate the derivatives 
of the density which appear in Theorem \ref{thm:Fisher-gaps}. 
For this we use an 
idea from Malliavin calculus (see Section~3 in \cite{Hairer2011}), 
which takes the form of Lemma \ref{lem:abstract-gaussian-lifting}. 
The idea is to estimate
the smoothness of  
$\frac{d\mu^{\ast 1000}}{dm}(g)$
under the flow generated by the left invariant vector field $X_M$ 
by finding a vector field $\tilde X_M:\mathcal H_W \to \mathcal H_W$ 
which pushes forward under $F_E$ to the same flow, i.e. 
$\tilde X_M$ satisfies
$$DF_E(a) \tilde{X}_M(a) = X_M(F_E(a)).$$
Indeed, along any solution $a(t)\in \mathcal H_W$ to this ODE 
one has 
$$F_E(a(t)) = F_E(a(0))\exp(tM).$$
Thus the problem reduces to estimating the 
smoothness of the Gaussian density on $\mathcal H_W$ 
under the vector field $\tilde X_M$. Malliavin calculus suggests that we take 
\begin{equation}
\label{eq:deformation}
\tilde X_M(a):= DF_E(a)^{\ast}\Gamma(a)^{-1}X_M(F_E(a))
\end{equation}
The size and smoothness of this vector 
field are controlled by $\|\Gamma(a)^{-1}\|$, 
and so our estimate on the least singular value of 
$\Gamma(a)$ completes the proof.

\subsection{Related Work}
We mention some related works,
sorted into three broad groups. 

\textit{Lyapunov Exponents of Large transfer matrices.}
For several classical highly isotropic ensembles Newman
\cite{Newman1986} computed exact
Lyapunov exponent formulae, and 
in particular noted such models tend to have Lyapunov exponent gaps of order \(1/d\) in dimension \(d\).
Recently, Shapiro noted \cite{Shapiro} 
that this implies a version of the Fyodorov-Mirlin $\sqrt{N}$ 
conjecture for a special RBM model when $E=0$. 
Isopi--Newman also proved the empirical distribution 
of the Lyapunov exponents
of isotropic random matrix products 
follows the 
triangular law in the
large-dimensional limit \cite{IsopiNewman1992}. See also \cite{HaninJiang2025}.  

Specific to Anderson and Random band models, we 
mention the recent works 
\cite{Schenker,PeledSchenkerShamisSodin2019, SchenkerPeledCipolloni, SmartChen,Goldstein2022RBM, DroginRBMLocalization2025},
which proved localization for $1d$ random band matrices using 
a different approach based on a factorization of the Greens function.
Perhaps the closest result in this class to the present work 
is the analysis of Schulz-Baldes \cite{SchulzBaldes2004}
for the Anderson model on the strip.  
That work proves the smallest positive Lyapunov exponents 
of the associated transfer matrix scales like $\lambda^2/W$
in the fixed $W$, and small $\lambda$ regime. Here $\lambda$ is the disorder strength, 
and $W$ is the width of the strip. The proof 
also gives a quantitative analysis of the action of transfer matrices 
on the associated stationary measures. 
Quantitative estimates on localization lengths and transfer matrices in
related models also appear in
\cite{ShubinWolff1998,SchlagShubinWolff2002,JitomirskayaSchulzBaldesStolz2003,Bourgain2012,BinderGoldsteinVoda2015}.

\textit{Furstenberg formulae, entropy, and Lyapunov estimates.}
The qualitative positivity and simplicity theory for random matrix products
starts with the landmark Furstenberg's theorem and the Goldsheid--Margulis criterion
\cite{Furstenberg1963,Margulis87}. 
These results were used to give some of the earliest proofs of Anderson localization 
\cite{GoldsheidMolchanovPastur1977,Lacroix1984,KleinLacroixSpeis1990}
in one-dimensional models. 
One key feature of both the works of Furstenberg and Goldsheid--Margulis 
is the use of stationary measures and their relation to the Lyapunov exponents.  
This relation is captured in Furstenberg-type formulae, which express
Lyapunov exponents in terms of stationary measures on projective spaces or
flags. For example, see \cite{FurstenbergKifer1983,BougerolLacroix1985}.
Entropic versions of these formulae go back to Ledrappier's work
\cite{Ledrappier-1982} and also arise in Furstenberg's original paper \cite{Furstenberg1963}. 
We also mention the work of Lessa 
on exact dimensionality \cite{Lessa2021}, 
as it is 
closely related to the relative-entropy approach behind
Theorem \ref{thm:Fisher-gaps}. 

Recent work has extended these ideas to the non-stationary 
setting and to give quantitative estimates.  Gorodetski--Kleptsyn prove a
non-stationary Furstenberg theorem \cite{GorodetskiKleptsyn2026}, while
Bednarski--DeWitt--Quas obtain effective quantitative singular-value gap estimates for
non-stationary 
products with small independent noise \cite{QuasDewitt}.  

\textit{Hypoellipticity, Malliavin Calculus, and Lyapunov Exponents.}

Recently, a growing body of work on stochastically forced dynamical systems, 
especially fluid equations, has combined ideas from hypoellipticity 
\cite{Hormander1967} and Malliavin calculus \cite{Hairer2011} 
with Furstenberg-type and entropy formulas for Lyapunov exponents. 
Notably, Bedrossian--Blumenthal--Punshon-Smith 
\cite{BedrossianBlumenthalPunshonSmith2022} used 
these ideas to give a quantitative lower bound on the top Lyapunov exponent
 for a class of stochastically forced 
differential equations. Related developments for randomly forced fluid 
equations and other stochastic evolution equations appear in 
\cite{HairerPunshonSmithRosatiYi,HairerStacy2026,CoopermanRowan2025}.
We also mention the recent work of Pillai--Smith \cite{PillaiSmith2026Kac}, 
which uses ideas in Malliavin calculus on $SO(n)$ to show Kac's walk 
mixes in $n^2\log(n)$ steps.

\subsection{Acknowledgements}
The author thanks
Sebastian Hurtado  
for telling him about the work of Pablo Lessa, 
Felipé Hernandez for suggesting the use of strong convergence 
in the appendix, and Jonathan DeWitt and Anthony Quas for many helpful conversations 
relating to their work and the ideas in section 2. 
The author also thanks Charlie Smart for encouragement 
throughout this project and Adam Black 
for comments on an earlier draft.  

The mathematical content of this paper was developed by the author. 
Large language model tools aided in the preparation 
of the manuscript, including for literature search and proofreading.

\subsection{Conventions and Use of Constants}
Throughout, $C,c>0$ denote constants whose values may change from
line to line. Unless otherwise stated, these constants may depend on
the fixed parameters $E$ and $K$, but are independent of $W$.
Any additional dependence is indicated explicitly.

For quantities in a normed space, we write $X=O(r)$ if
$\|X\|\leq Cr$, with the same convention on $C$. For instance,
$X=Y+O(r)$ means $\|X-Y\|\leq Cr$.

We use $m$ to denote a Haar measure on $\Sp_{d}(\mathbb{R})$. 
Its scaling is fixed throughout the paper, and does not matter for us 
since $\|X_M\frac{d\mu}{dm}\|_{L^1(dm)}$ is constant under rescaling $m$. 

\section{Formulas for Lyapunov Exponents}\label{sec:furstenberg-formula}
Both Theorem \ref{thm:top} and \ref{thm:gaps} use a version of 
Furstenberg's formula, which relates the Lyapunov exponents of 
a random matrix to its stationary measures. 
For the top Lyapunov exponent, this formula dates back to 
Furstenberg's seminal paper \cite{Furstenberg1963}.

\begin{proposition}[Furstenberg's Formula]
\label{prp:Furstenberg-formula-top}
Let $\mu$ be a probability measure on $GL_{d}(\mathbb{R})$
whose support does not preserve 
any finite union of proper subspaces of $\R^d$
and for which $\mathbb{E}_{g\sim \mu}\log^{+} ||g||<\infty$. 
Further, let $\lambda_1$ denote the top Lyapunov exponent 
of $\mu$.  Then there exists a $\mu$-stationary measure $\nu$ on $\mathbb{P}(\R^d)$, 
and for any such measure we have
\[
  \lambda_1
  =
  \mathbb{E}_{g}\int_{\mathbb{P}(\R^d)}
  \log \frac{\|gv\|}{\|v\|}\,\nu(dv). 
\]
Here $\mathbb{E}_{g}$ 
denotes the integral over $GL_{d}(\mathbb{R})$ with respect to the measure $\mu$. 
\end{proposition}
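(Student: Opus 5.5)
Existence comes from compactness and the cocycle structure. Then I use Furstenberg's irreducibility argument to show that the averaged cocycle equals $\lambda_1$ for \emph{every} stationary measure.

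\textbf{Existence.} Fix any probability measure $\nu_0$ on $\mathbb{P}(\R^d)$ and form the Cesàro averages $\frac1n\sum_{k<n}\mu^{*k}*\nu_0$. Since $\mathbb{P}(\R^d)$ is compact, these have a weak-$*$ limit point. Convolution by $\mu$ is weak-$*$ continuous, because $g\mapsto g\cdot v$ is continuous, so any limit point is $\mu$-stationary.

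\textbf{Formula.} Put $\sigma(g,v)=\log\frac{\|gv\|}{\|v\|}$, which is an additive cocycle: $\sigma(g_n\cdots g_1,v)=\sum_{k}\sigma(g_k,g_{k-1}\cdots g_1v)$. It satisfies $|\sigma(g,v)|\le \log^+\|g\|+\log^+\|g^{-1}\|$. This integrability needs $\E\log^+\|g^{-1}\|<\infty$ as well, which I would either add as an implicit standing hypothesis or handle by truncation. The main applications here are to $\Sp_d$ and to $T_E(A)$ with Gaussian $A$, where $\|g^{-1}\|=\|g\|$, so the hypothesis holds there.

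Let $g_i$ be i.i.d.\ with law $\mu$ and let $v\sim\nu$ be independent of them. By stationarity, $g_{k-1}\cdots g_1 v$ has law $\nu$ and is independent of $g_k$. Therefore
\[
\E\,\frac1n\log\frac{\|g_n\cdots g_1v\|}{\|v\|}=\E_g\int\sigma(g,v)\,\nu(dv)=:\alpha(\nu)
\]
for every $n$. The task reduces to showing $\alpha(\nu)=\lambda_1$.

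The upper bound $\alpha(\nu)\le\lambda_1$ follows from $\|g_n\cdots g_1v\|\le\|g_n\cdots g_1\|\,\|v\|$ together with $\frac1n\E\log\|g_n\cdots g_1\|\to\lambda_1$ (Furstenberg--Kesten). Here I apply the formula to the left random walk $g_1g_2\cdots g_n$, using that $g_n\cdots g_1$ has the same law.

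\textbf{Lower bound (main obstacle).} I would use Furstenberg's argument that, under irreducibility, no stationary measure charges a proper subspace. Let $V$ be a proper subspace of minimal dimension with $\nu(\mathbb{P}(V))>0$, and suppose $\nu(\mathbb{P}(V))$ is maximal among subspaces of that dimension. Distinct subspaces of that dimension intersect in lower-dimensional subspaces, which are $\nu$-null, so only finitely many of them attain the maximum. Stationarity then forces $\mu$-a.e.\ $g$ to permute this finite collection, which contradicts the hypothesis. Hence $\nu$ gives zero mass to every hyperplane.

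Next, write $g_n\cdots g_1=k_na_nk_n'$ in Cartan form. One has
\[
\|g_n\cdots g_1v\|\ge \|g_n\cdots g_1\|\,|\langle v,u_n\rangle|,
\]
where $u_n$ is the top right singular vector of $g_n\cdots g_1$, and $u_n$ depends only on $g_1,\dots,g_n$. So it suffices to show that $\frac1n\E\log|\langle v,u_n\rangle|\to0$. I would rather avoid this route, because it requires uniform integrability of $\log|\langle v,u\rangle|$, and that only follows from a non-quantitative continuity argument.

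The cleaner route is the ergodic one. By the pointwise version, for $\nu$-a.e.\ $v$ and a.s., $\frac1n\log\|g_n\cdots g_1v\|\to\lambda_1$ holds whenever $v$ lies outside a random proper subspace. Indeed, the Oseledets filtration gives a random subspace $V^{(2)}$ on which the growth rate is strictly less than $\lambda_1$, and a vector outside it grows at rate exactly $\lambda_1$. The non-atomicity on hyperplanes shown above then gives $\nu(\mathbb{P}(V^{(2)}))=0$ a.s., since $v$ is independent of the $g_i$.

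Finally, the sequence $\frac1n\sigma(g_n\cdots g_1,v)$ is dominated by $\frac1n\sum_k(\log^+\|g_k\|+\log^+\|g_k^{-1}\|)$, which is uniformly integrable. Dominated convergence then gives $\alpha(\nu)=\lambda_1$.
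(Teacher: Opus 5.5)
The paper does not prove this proposition; it cites Bougerol--Lacroix. Your proposal is therefore an independent proof, and apart from one point it is correct. The ingredients all work: Krylov--Bogolyubov for existence; the identity $\E\frac1n\sigma(S_n,v)=\alpha(\nu)$ with $S_n:=g_n\cdots g_1$; properness of every stationary $\nu$ under strong irreducibility; almost sure growth at rate $\lambda_1$ off the Oseledets subspace; and Fubini using independence of $v$ from the $g_i$. The last step is Vitali's theorem (uniform integrability plus a.s.\ convergence), which is what you actually invoke rather than dominated convergence.

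The point that needs repair is that last step. Your domination uses $\E\log^+\|g^{-1}\|<\infty$, which is not in the statement, and ``handle by truncation'' does not supply it. Without a lower bound on $\sigma$, a.s.\ convergence gives no lower bound on $\liminf_n\E\frac1n\sigma(S_n,v)$, and that lower bound is exactly what $\alpha(\nu)\ge\lambda_1$ requires.

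A lighter route removes both the extra hypothesis and the appeal to Oseledets. Your stationarity computation says that the skew product $(\omega,v)\mapsto(\theta\omega,g_1v)$ preserves $\mu^{\otimes\mathbb N}\otimes\nu$, where $\theta$ is the shift on $\omega=(g_1,g_2,\dots)$. Moreover, $\sigma(S_n,v)$ is the Birkhoff sum of $f(\omega,v)=\sigma(g_1,v)$. Birkhoff's theorem holds for $f$ with only $f^+$ integrable. Hence $\frac1n\sigma(S_n,v)\to\E[f\mid\mathcal I]$ a.s., with $\mathcal I$ the invariant $\sigma$-algebra, and $\alpha(\nu)=\int\E[f\mid\mathcal I]\,d(\mu^{\otimes\mathbb N}\otimes\nu)$, both in $[-\infty,\infty)$. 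If $\lambda_1=-\infty$, your upper bound already finishes the proof.

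Otherwise, to identify the limit you only need $\limsup_n\frac1n\sigma(S_n,v)\ge\lambda_1$ a.s. For that, set $V(\omega):=\{v:\limsup_n\frac1n\log\|S_nv\|<\lambda_1\}$.
\begin{enumerate}
\item $V(\omega)$ is a linear subspace, by the triangle inequality.
\item It is proper, because $\|S_n\|\le\sum_i\|S_ne_i\|$ and $\frac1n\log\|S_n\|\to\lambda_1$ a.s.\ by Kingman.
\item Your properness-plus-Fubini step then gives $\E[f\mid\mathcal I]=\lambda_1$ a.s., hence $\alpha(\nu)=\lambda_1$.
\end{enumerate}
Oseledets gives you a genuine limit rather than a $\limsup$, but Birkhoff already guarantees the limit exists, so the extra machinery is unnecessary. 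For the paper's own application the caveat is moot, as you observe: $T_E(A)$ is symplectic, so $\|g^{-1}\|=\|g\|$.
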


\begin{remark}
In the above and what follows we commit a small abuse of notation 
by identifying vectors and subspaces with their equivalence classes in 
projective spaces. For instance in the above formula we have used $v$  
to denote both a vector in $\mathbb{R}^d$ and its equivalence class in $\mathbb{P}(\mathbb{R}^{d}).$ 
\end{remark}

Recall that we say 
a measure $\nu$ on $\mathbb{P}(\mathbb{R}^d)$ 
is \textit{$\mu$-stationary} if
$\nu = \mathbb{E}_{g}g_{\ast}\nu$, i.e. for any open set 
$S\subset \mathbb{P}(\R^d)$ 
\[
  \nu(S)=\mathbb{E}_{g}g_{\ast}\nu(S) = \int_{GL_d(\R)} \nu(g^{-1}S)\,\mu(dg).
\]
This is equivalent to $\nu$ being a stationary measure of the Markov process 
$v\to \frac{gv}{\|gv\|}$ on projective space. 
For a proof of Proposition \ref{prp:Furstenberg-formula-top}
we refer the reader to Proposition 7.2 in \cite{BougerolLacroix1985}.  

For Theorem \ref{thm:gaps} we need a variant of the above 
for the \textit{gaps between Lyapunov exponents}. This 
comes in the form of Theorem \ref{thm:Fisher-gaps}.
The proof combines the work of Lessa \cite{Lessa2021} 
with a one-dimensional non-concentration estimate.
In particular, we use Proposition \ref{prp:KL-gap-formula} below, 
which is a version of Theorem 1 in \cite{Lessa2021} 
for the Symplectic group. The proof is essentially 
the same as in \cite{Lessa2021}, but we include 
it for completeness.

To state it we introduce the following flag spaces
and the corresponding stationary measures. 
Let $\omega$ be the standard symplectic 
form on $\mathbb{R}^{2d}$, and define  
$\mathcal{F}_{0}:=\{0\},$
and $\mathcal F_k$ by 
\[
\mathcal{F}_{k}:= \{V\leq \mathbb{R}^{2d}: \dim(V) = k,\ \omega|_{V}=0\},
\] 
for \(1\leq k\leq d\).
Further, we let $\mathcal F_{d+1}$ denote the so-called 
coisotropic Grassmannian, given by
\[
\mathcal{F}_{d+1}:= \{V\leq \mathbb{R}^{2d}: \dim(V)=d+1,\ V^{\perp}\subset V\}.
\]
Here $\perp$ denotes symplectic orthogonal. Equivalently,
\[
\mathcal{F}_{d+1}=\{U^{\perp}:U\in\mathcal{F}_{d-1}\}. 
\] 

To estimate the gap between the lyapunov exponents $\lambda_k$ and $\lambda_{k+1}$, 
we need to consider the action of $\mu$ on \textit{pairs} and \textit{triples}
of subspaces. Hence, for any $1\leq k\leq d$, we define
the flag spaces
\[
\mathcal F_{(k-1,k+1)}
:=
\{(U,V)\in\mathcal F_{k-1}\times \mathcal F_{k+1}\ :
U\subset V,\ \text{and if }k=d,\ V=U^\perp\},
\]
and
\[
\mathcal F_{(k-1,k,k+1)}
:=
\{(U,W,V)\in \mathcal F_{k-1}\times \mathcal F_{k}\times \mathcal F_{k+1}\ :
U\subset W\subset V,\ \text{ and if } k=d,\ V = U^{\perp}\}.
\]
All spaces above are endowed with the natural topology and probability measure
induced by normalized Haar measure on the Symplectic rotation group
$\Sp_{2d}(\mathbb{R})\cap O(2d)$ through its action on subspaces.
Under the assumption that $\mu$ has a density on $\Sp_{2d}(\mathbb{R})$, 
it is standard that there is a unique $\mu$-stationary measure on each of the flag spaces above
which is absolutely continuous with respect to the natural measure on each such space.

The following gives an expression 
for the gaps between Lyapunov exponents. 
Roughly speaking, it says the gap \(\lambda_k-\lambda_{k+1}\) measures the typical 
distance between 
the family of conditional measures $(\nu_{V/U})_{(U,V)\in \mathcal F_{(k-1,k+1)}}$
and their pushforward $(g_{\ast}\nu_{V/U})_{(U,V)\in \mathcal F_{(k-1,k+1)}}$
under a random $g$. 

\begin{proposition}
\label{prp:KL-gap-formula}
Let $\mu$ be a probability measure on $\Sp_{2d}(\mathbb{R})$
that is 
absolutely continuous with respect to Haar measure
and for which $\mathbb{E}_g \log \|g\|< \infty$. Further, 
for any $k\in[1,d]$, let \(\nu_{(k-1,k,k+1)}\) be the unique \(\mu\)-stationary measure on
\(\mathcal F_{(k-1,k,k+1)}\), \(\nu_{(k-1,k+1)}\) be the unique
$\mu$-stationary measure on $\mathcal F_{(k-1,k+1)}$, and 
for \(\nu_{(k-1,k+1)}\)-almost every \((U,V)\), let
\(\nu_{V/U}\) be the conditional law of the intermediate subspace \(W\)
under \(\nu_{(k-1,k,k+1)}\), identifying \(W\) with the line
\(W/U\in\mathbb P(V/U)\).
Then for any $k\in [1,d]$ we have
\[
\lambda_k-\lambda_{k+1}
=
\int_{\mathcal F_{(k-1,k+1)}}
\mathbb E_g KL\!\left(
g_*\nu_{V/U}\mid \nu_{gV/gU}
\right)
\,\nu_{(k-1,k+1)}(d(U,V)).
\]
Here \(g_*\nu_{V/U}\) denotes the pushforward under
the natural mapping of $g$ from
\(\mathbb P(V/U)\to\mathbb P(gV/gU)\), and
\[
  KL(\eta\mid\xi)
  :=
  \int \log\!\left(\frac{d\eta}{d\xi}\right)\,d\eta, 
\]
is the $KL$ divergence between two measures. 
\end{proposition}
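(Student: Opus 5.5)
Our plan is to follow Lessa's argument: first write $\lambda_k-\lambda_{k+1}$ as an averaged log-Jacobian through Furstenberg-type formulas, then recognise that average as an expected KL divergence using stationarity.

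\textbf{Step 1: Furstenberg formulas on flags.} Applying Proposition \ref{prp:Furstenberg-formula-top} to exterior powers gives
\[
\lambda_1+\cdots+\lambda_j=\mathbb E_g\int \log\|g|_{W}\|_{\det}\,\nu_j(dW),
\]
where $\|g|_W\|_{\det}$ is the Jacobian of $g$ on $W$ and $\nu_j$ is the stationary measure on $\mathcal F_j$. This uses the standard fact that $\mu$ with a density is strongly irreducible and proximal on each $\wedge^j$, restricted to isotropic subspaces. For $k=d$ we would use $\mathcal F_{d+1}$ and $\lambda_{d+1}=-\lambda_d$.

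Taking differences on the triple flag $(U,W,V)\sim\nu_{(k-1,k,k+1)}$ expresses
\[
\lambda_k-\lambda_{k+1}=\mathbb E_g\int \Big(2\log\|g|_{W/U}\|-\log\|g|_{V/U}\|_{\det}\Big)\,d\nu_{(k-1,k,k+1)} .
\]
Here the two-dimensional quotient $V/U$ carries the induced action $\bar g: V/U\to gV/gU$. The key point is that $2\log\|\bar g w\|-\log|\det \bar g|$ (for unit $w$) is exactly $-\log$ of the derivative of the projective map $\mathbb P(V/U)\to\mathbb P(gV/gU)$ at $w$, measured in the round metrics. That is, it equals $-\log J_{\bar g}(w)$.

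\textbf{Step 2: identify the Jacobian term with a KL divergence.} We disintegrate $\nu_{(k-1,k,k+1)}=\int \nu_{V/U}\,d\nu_{(k-1,k+1)}$. The density of $\mu$ makes $\nu_{V/U}$ absolutely continuous with respect to the round measure $\rho_{V/U}$ on $\mathbb P(V/U)$; write $h_{V/U}=d\nu_{V/U}/d\rho$. Changing variables gives
\[
\frac{d\,g_*\nu_{V/U}}{d\nu_{gV/gU}}(gw)=\frac{h_{V/U}(w)}{J_{\bar g}(w)\,h_{gV/gU}(gw)} .
\]
Hence
\[
KL(g_*\nu_{V/U}\mid\nu_{gV/gU})=\int\Big[-\log J_{\bar g}(w)+\log h_{V/U}(w)-\log h_{gV/gU}(gw)\Big]\,\nu_{V/U}(dw).
\]
Integrating over $g\sim\mu$ and $(U,V)\sim\nu_{(k-1,k+1)}$, the last two terms cancel by stationarity of $\nu_{(k-1,k,k+1)}$, since $(gU,gW,gV)$ has the same law as $(U,W,V)$. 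Combined with Step 1, this yields the formula.

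\textbf{Main obstacle.} The hard part is justifying that cancellation, because $\log h$ need not be integrable a priori. We would handle it by truncating, using $\min(\max(\log h,-N),N)$, and letting $N\to\infty$. Since $KL\ge0$ and the Jacobian term is integrable (by $\mathbb E\log\|g\|<\infty$), Fatou's lemma and monotone convergence should allow passing to the limit, following Lessa's original argument. The remaining ingredients are uniqueness and absolute continuity of the stationary measures, which we take as the standard facts quoted before the statement, and the measurability of the disintegration.
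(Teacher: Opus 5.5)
Your proposal is correct and is essentially the paper's proof run in the opposite order: the paper starts from the Furstenberg sum formula on the triple flag, identifies the integrand with $\log\frac{d(g_*m_{V/U})}{dm_{gV/gU}}(gW)$, adds the mean-zero stationarity term $\log h_{V/U}(W)-\log h_{gV/gU}(gW)$, and disintegrates to recognize the KL divergence, which is exactly your Steps 1 and 2. Your truncation remark addresses an integrability point the paper passes over silently, and it does work: the truncated differences increase in absolute value to $D=\log h_{V/U}(W)-\log h_{gV/gU}(gW)$, whose negative part is integrable (the negative part of a log-likelihood ratio integrates to at most $1/e$, and the Jacobian term is integrable), so monotone convergence applied to the positive and negative parts gives $\mathbb{E}_g\int D\,d\nu_{(k-1,k,k+1)}=0$.
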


We have viewed the measures $(\nu_{V/U})_{(U,V)\in \mathcal F_{(k-1,k+1)}}$
as a family of measures on the projective spaces $(\mathbb{P}(V/U))_{(U,V)\in \mathcal F_{(k-1,k+1)}}$
which disintegrate $\nu_{(k-1,k,k+1)}$ with respect to $\nu_{(k-1,k+1)}$, 
i.e. for any  sufficiently regular $\Phi:\mathcal F_{(k-1,k,k+1)}\to \mathbb{R},$
we have 
\begin{equation}
\label{eq:distintegration}
\begin{aligned}
\int_{\mathcal F_{(k-1,k,k+1)}} & \Phi(U,W,V) \nu_{(k-1,k,k+1)}(d(U,W,V))\\
& = \int_{\mathcal F_{(k-1,k+1)}} \left(\int_{\mathbb{P}(V/U)} \Phi(U,W,V) \nu_{V/U}(dW)\right) \nu_{(k-1,k+1)}(d(U,V)).
\end{aligned}
\end{equation}
Again, in the inner integral on the RHS, we are committing the small abuse of notation 
of identifying the subspace $W$ with the corresponding point in $\mathbb{P}(V/U)$.

\begin{proof}[Proof of Proposition \ref{prp:KL-gap-formula}]
We wil use the following corollary of 
Proposition \ref{prp:Furstenberg-formula-top} 
applied to wedge products. 
For a proof we refer the reader to Proposition 3.4 in \cite{BougerolLacroix1985}. 
\begin{theorem}
\label{thm:furstenberg-formula-sums}
Let $\mu$ be a probability measure on $\Sp_{2d}(\mathbb{R})$
which has a density with respect to Haar measure, 
and for which $\mathbb{E}_{g}\log \|g\|< \infty$. 
If $\lambda_1\geq \lambda_2\geq\dots \geq \lambda_{2d}$ 
are the Lyapunov exponents of $\mu$ and for each $k\in [1,d+1]$
$\nu_k$ denotes the unique $\mu$-stationary measure, 
then 
$$\lambda_1+\cdots+\lambda_k = \mathbb{E}_{g}\int_{\mathcal{F}_{k}}\log |\det_{V} (g)| \nu_k(dV),$$
where $\det_{V}(g)$ denotes the determinant of the linear mapping $g:V\to gV$, 
with respect to the Euclidean volume on $\mathbb{R}^{2d}$.  
\end{theorem}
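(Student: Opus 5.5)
The plan is to realize $\lambda_1+\cdots+\lambda_k$ as the top Lyapunov exponent of the exterior power cocycle $\wedge^k g$, and to compute it along $\nu_k$-typical isotropic $k$-planes. For a $k$-dimensional subspace $V$ with unit decomposable representative $v=v_1\wedge\cdots\wedge v_k$ (built from an orthonormal basis of $V$), we have $|\det_V(g)|=\|\wedge^k g\, v\|$, and $\wedge^k g\,v$ represents $gV$. The map $\sigma(g,V):=\log|\det_V(g)|$ is therefore an additive cocycle over the action of $\Sp_{2d}(\R)$ on $\mathcal F_k$:
\[
\sigma(g_2g_1,V)=\sigma(g_2,g_1V)+\sigma(g_1,V).
\]
For integrability, note $|\sigma(g,V)|\le k\log\|g\|+k\log\|g^{-1}\|=2k\log\|g\|$, using $\|g^{-1}\|=\|g\|$ for symplectic $g$; this is finite in expectation by hypothesis. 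I do not invoke Proposition \ref{prp:Furstenberg-formula-top} directly, because $\wedge^k\R^{2d}$ is reducible under $\Sp_{2d}$ when $k\ge 2$. Instead I will run the telescoping argument by hand.

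\textbf{Step 1 (telescoping).} Let $g_1,g_2,\ldots$ be i.i.d.\ with law $\mu$, write $S_n=g_n\cdots g_1$, and take $V\sim\nu_k$ independent of the $g_i$. By stationarity, $S_{i-1}V\sim\nu_k$ and is independent of $g_i$. The cocycle identity then gives
\[
\E\,\frac1n\log|\det_V(S_n)| \;=\; \frac1n\sum_{i=1}^n \E\,\sigma(g_i,S_{i-1}V)\;=\;\E_g\int_{\mathcal F_k}\log|\det_V(g)|\,\nu_k(dV)
\]
for every $n$. It therefore suffices to show that the left side converges to $\lambda_1+\cdots+\lambda_k$.

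\textbf{Step 2 (pointwise limit).} By Oseledets' theorem applied to $S_n$, almost surely there is a filtration of $\R^{2d}$ with the following property: $\frac1n\log|\det_V(S_n)|\to\lambda_1+\cdots+\lambda_k$ whenever $V\cap E=\{0\}$. Here $E=E(\omega)$ is the random $(2d-k)$-dimensional subspace of vectors with growth rate at most $\lambda_{k+1}$. If exponents tie across position $k$, one takes the appropriate subspace of the filtration; this changes nothing below. I then condition on $\omega$ and use Fubini. Since $\nu_k$ is a fixed measure independent of $\omega$, it suffices to show that for each fixed $(2d-k)$-dimensional $E$, the set $Z_E:=\{V\in\mathcal F_k: V\cap E\neq\{0\}\}$ is $\nu_k$-null. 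Now $Z_E$ is a real algebraic subvariety of the irreducible homogeneous variety $\mathcal F_k$. It is proper, because one can exhibit an isotropic $k$-plane transverse to $E$: choose vectors inductively, each time avoiding a proper subspace inside the symplectic complement of the previous ones. A proper algebraic subvariety has measure zero for the smooth $K$-invariant measure on $\mathcal F_k$. Since $\nu_k$ is absolutely continuous with respect to that measure (as stated before the proposition, using that $\mu$ has a density), $Z_E$ is $\nu_k$-null.

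\textbf{Step 3 (passing the limit through the expectation).} We have the bound $\bigl|\frac1n\log|\det_V(S_n)|\bigr|\le \frac{2k}{n}\sum_{i\le n}\log\|g_i\|$. The right side converges in $L^1$ by the law of large numbers, so it is uniformly integrable, and hence so is the left side. Vitali's theorem then upgrades the almost sure convergence from Step 2 to convergence of expectations, and combining with Step 1 yields the formula.

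I expect the main obstacle to be Step 2: the genericity statement that $\nu_k$-almost every isotropic $k$-plane is transverse to the slow Oseledets subspace. The key points are that $\nu_k$ is independent of the future increments, and that absolute continuity makes the algebraic exceptional set negligible. The remaining steps are routine cocycle bookkeeping.
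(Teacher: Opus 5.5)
Your argument is correct in substance, and it takes a different route from the paper's. The paper gives no proof of its own. It presents the statement as Furstenberg's formula (Proposition \ref{prp:Furstenberg-formula-top}) applied to $\wedge^k g$ acting on the Pl\"ucker image of $\mathcal F_k$, and defers to Bougerol--Lacroix. There, the pointwise limit $\frac1n\log\|\wedge^k S_n x\|\to\lambda_1+\cdots+\lambda_k$ is obtained for \emph{every} decomposable $x$ in the relevant invariant set, using strong irreducibility and contraction hypotheses.

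You keep the same skeleton: telescoping via stationarity, then uniform integrability. You replace that input by Oseledets transversality for $\nu_k$-almost every $V$. This uses only that $\nu_k$ is absolutely continuous and that $Z_E$ is a proper real-analytic subset of the connected manifold $\mathcal F_k$. Two things follow from this choice:
\begin{itemize}
\item It sidesteps the issue you correctly flag: $\wedge^k\R^{2d}$ is reducible under $\Sp_{2d}$, so Proposition \ref{prp:Furstenberg-formula-top} does not apply verbatim, and one would otherwise have to pass to the primitive summand.
\item It is self-contained in the absolutely continuous setting the paper actually needs.
\end{itemize}
The price is generality. The Bougerol--Lacroix route also covers singular (e.g.\ finitely supported) strongly irreducible, contracting $\mu$. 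There $\nu_k$ may charge lower-dimensional sets, and one must instead show that stationary measures do not charge the sets $Z_E$. That is the analogue of Furstenberg's lemma that stationary measures do not charge proper subspaces; in your argument absolute continuity plays exactly that role.

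Two details need repair.

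\textbf{The transverse plane.} Your inductive construction of an isotropic $k$-plane transverse to $E$ can stall. Take $d=k=2$, $E=\operatorname{span}(e_1,f_1)$ and $x_1=e_2$. Then $E+\R x_1=x_1^{\perp}$, so every $x_2\in x_1^{\perp}$ gives $\operatorname{span}(x_1,x_2)\cap E\neq\{0\}$. You can fix this in either of two ways:
\begin{itemize}
\item choose the vectors generically; or
\item use the incidence set $\{(x,V):x\in\mathbb P(E),\ x\in V\in\mathcal F_k\}$. Its fibre over $x$ is the isotropic $(k-1)$-Grassmannian of $x^{\perp}/\R x$, so it has dimension $(2d-k-1)+\dim\mathcal F_{k-1}(\R^{2d-2})=\dim\mathcal F_k-1$. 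Hence $\dim Z_E<\dim\mathcal F_k$.
\end{itemize}
Your construction also does not cover the coisotropic case $k=d+1$. Reduce it to the isotropic case via $U^{\perp}\cap E=\{0\}\iff U\cap E^{\perp}=\{0\}$ for $U\in\mathcal F_{d-1}$.

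\textbf{Ties.} If $\lambda_k=\lambda_{k+1}$, a single transversality condition does not give the top volume growth. You need both $V+L_{\le\lambda_k}=\R^{2d}$ and $V\cap L_{<\lambda_k}=\{0\}$; these are still generic conditions. This case does not actually arise here. A density makes the semigroup generated by the support Zariski dense, so Goldsheid--Margulis gives simple spectrum.
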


\begin{remark}
The integrand can be written in terms of wedge products as 
$$|\det_V(g)| = \frac{\|(\wedge ^k g)(v_1\wedge\dots \wedge v_k)\|}{\|v_1\wedge\dots \wedge v_k\|}= \frac{\|gv_1\wedge\dots \wedge gv_k\|}{\|v_1\wedge\dots \wedge v_k\|},$$
where $v_1,...,v_k$ is any basis for the subspace $V$. 
\end{remark}

Note that if $k=d+1$, this agrees with the symplectic identity
$\lambda_1+\cdots+\lambda_{d+1}=\lambda_1+\cdots+\lambda_{d-1}$, since
$\det_{U^\perp}(g)=\det_U(g)$ for $g\in\Sp_{2d}(\R)$.

First, we write the gap $\lambda_k-\lambda_{k+1}$ 
in terms of integrals over stationary measure 
(closely following the argument in Lessa \cite{Lessa2021}): 
\begin{align*}
&\lambda_{k}-\lambda_{k+1}\\
&=
2(\lambda_1+\cdots+\lambda_{k})
-(\lambda_1+\cdots+\lambda_{k-1})
-(\lambda_1+\cdots+\lambda_{k+1})\\
&=
\mathbb{E}_{g}\int_{\mathcal{F}_{(k-1,k,k+1)}}
\log \left|\frac{\det_{V_{k}}(g)^2}
{\det_{V_{k-1}}(g)\det_{V_{k+1}}(g)}\right|\nu_{(k-1,k,k+1)}(d(V_{k-1}, V_k, V_{k+1})).
\end{align*} 
To pass to the second line we
used that the marginals of $\nu_{(k-1,k,k+1)}$ are $\nu_{k-1}, \nu_{k}, \nu_{k+1}$.

Second, note that the integrand can be written in terms 
of the Jacobian of 
$g$ on the projective space $\mathbb{P}(V_{k+1}/V_{k-1})$. 
Indeed, if $g\in \Sp_{2d}(\mathbb{R})$, $V_{k-1}\subset V_k\subset V_{k+1}$ as above, 
and $m_{V/U}$ denotes the rotation invariant measure on 
the projective space $\mathbb{P}(V/U)$, then 
$$\frac{d(g_{\ast}m_{V_{k+1}/V_{k-1}})}{dm_{gV_{k+1}/gV_{k-1}}}(gV_{k}) = \left|\frac{\det_{V_k}(g)^2}{\det_{V_{k-1}}(g)\det_{V_{k+1}}(g)}\right|.$$
We have abused notation here by writing $gV_{k}$ for the 
element $gV_k/gV_{k-1}\in \mathbb{P}(gV_{k+1}/gV_{k-1}).$

Third, and finally, we note that 
because $\nu_{(k-1,k,k+1)}$ is $\mu$-stationary,
we have $(U,W,V) \stackrel{d}{=}(gU,gW,gV)$, 
when $g$ is sampled from $\mu$ and $(U,W,V)$ is sampled 
from $\nu_{(k-1,k,k+1)}$ independently. Thus
for any integrable function $\Phi:\mathcal{F}_{(k-1,k,k+1)}\to \mathbb{R}$, 
we have 
\begin{align*}
\mathbb{E}_g\int_{\mathcal F_{(k-1,k,k+1)}}&  \Phi(gU,gW,gV) \nu_{(k-1,k,k+1)}(d(U,W,V))\\
& = \int_{\mathcal F_{(k-1,k,k+1)}} \Phi(U,W,V) \nu_{(k-1,k,k+1)}(d(U,W,V)).
\end{align*}
It is standard that $\nu_{V/U}$ is absolutely 
continuous with respect to $m_{V/U}$ when $\mu$ is absolutely continuous with respect 
to Haar measure on $\Sp_{2d}(\mathbb{R})$, 
so we may apply this to the $\log$ density, 
$\rho_{V/U}(W) = \log \frac{d\nu_{V/U}}{dm_{V/U}}(W),$ 
to get
\begin{equation}
\label{eq:stationarity-identity}
0= 
\mathbb{E}_{g}\int_{\mathcal{F}_{(k-1,k,k+1)}}\left(
\rho_{gV_{k+1}/gV_{k-1}}(gV_k)
- \rho_{V_{k+1}/V_{k-1}}(V_k)
\right)
\nu_{(k-1,k,k+1)}(d(V_{k-1}, V_k, V_{k+1})).
\end{equation}
Combining the above displays gives
\begin{equation}
\label{eq:KL-gap-formula}
\begin{aligned}
&\lambda_k-\lambda_{k+1}\\
&=
\mathbb{E}_g\int_{\mathcal F_{(k-1,k,k+1)}}
\log\left(
\frac{d(g_*m_{V_{k+1}/V_{k-1}})}
{dm_{gV_{k+1}/gV_{k-1}}}(gV_k)
\right)
\nu_{(k-1,k,k+1)}(d(V_{k-1},V_k,V_{k+1}))\\
&=
\mathbb{E}_g\int_{\mathcal F_{(k-1,k,k+1)}}
\left[
\log\left(
\frac{d(g_*m_{V_{k+1}/V_{k-1}})}
{dm_{gV_{k+1}/gV_{k-1}}}(gV_k)
\right)
+\rho_{V_{k+1}/V_{k-1}}(V_k)
-\rho_{gV_{k+1}/gV_{k-1}}(gV_k)
\right]\\
&\qquad\qquad\qquad\qquad\qquad\qquad\qquad \qquad\qquad 
\qquad\qquad \qquad\qquad
\qquad\qquad  
\nu_{(k-1,k,k+1)}(d(V_{k-1},V_k,V_{k+1}))\\
&=
\mathbb{E}_{g}\int_{\mathcal{F}_{(k-1,k,k+1)}}
\log\left( \frac{d(g_*\nu_{V_{k+1}/V_{k-1}})}
{d\nu_{gV_{k+1}/gV_{k-1}}}(gV_{k})\right)
\nu_{(k-1,k,k+1)}(d(V_{k-1}, V_k, V_{k+1}))\\
&=
\mathbb{E}_{g}\int_{\mathcal{F}_{(k-1,k+1)}}
\int_{\mathbb{P}(V_{k+1}/V_{k-1})}
\log\left( \frac{d(g_*\nu_{V_{k+1}/V_{k-1}})}
{d\nu_{gV_{k+1}/gV_{k-1}}}(gV_{k})\right)
\nu_{V_{k+1}/V_{k-1}}(dV_k)
\nu_{(k-1,k+1)}(d(V_{k-1}, V_{k+1}))\\
&=
\int_{\mathcal{F}_{(k-1,k+1)}}
\mathbb{E}_{g}KL\left(g_*\nu_{V_{k+1}/V_{k-1}}\mid\nu_{gV_{k+1}/gV_{k-1}}\right)
\nu_{(k-1,k+1)}(d(V_{k-1}, V_{k+1})).
\end{aligned}
\end{equation}
For the second equality we used \eqref{eq:stationarity-identity}. 
For the third equality we combined the logarithms and used that
$$\rho_{V_{k+1}/V_{k-1}}(V_{k}) = \log\left(\frac{d(g_{\ast}\nu_{V_{k+1}/V_{k-1}})}{d(g_{\ast}m_{V_{k+1}/V_{k-1}})}(gV_k)\right).$$
In the fourth equality we disintegrated 
the integral with respect to $\nu_{(k-1,k,k+1)}$ 
using \eqref{eq:distintegration}, 
and in the last equality we made the change of variables $V_k\mapsto gV_k$ in the integral over $\mathbb{P}(V_{k+1}/V_{k-1})$, 
and recognized the integral over $\mathbb{P}(gV_{k+1}/gV_{k-1})$ as a KL divergence. 
Recall if $\mu\ll\nu$ are two measures on a measure space $\Omega$, then 
\[
  KL(\mu | \nu) = \int_{\Omega}\log\!\left(\frac{d\mu}{d\nu}\right) d\mu.
\]
\end{proof}

Now we can prove Theorem \ref{thm:Fisher-gaps}

\subsection{Proof of Theorem \ref{thm:Fisher-gaps}}
Since $d$ must be even, we replace it with $2d$ in what follows. 
We take $\mu$, its density $f:\Sp_{2d}(\mathbb{R})\to \mathbb{R}$, 
and $\lambda_1,...,\lambda_{2d}$ as in the statement of Theorem \ref{thm:Fisher-gaps}.
Further, we take $\nu_{V/U}$ and $\nu_{(k-1,k+1)}$ as in 
Proposition \ref{prp:KL-gap-formula}.

The main difficulty with obtaining a useful bound from
Proposition \ref{prp:KL-gap-formula} 
is the presence of the measures $\nu_{V/U}$ and $\nu_{(k-1,k+1)}$. 
These are, by nature, implicit objects, and 
are difficult to estimate. Our approach 
is to fix $V_{k-1}$ and $V_{k+1}$
and lower bound the integrand in Proposition \ref{prp:KL-gap-formula} 
uniformly for any fixed 
$V_{k-1}$ and $V_{k+1}$, relying only on the expectation 
over $g$. Indeed,
by Jensen's inequality
\[
  KL(\mu | \nu)\geq 0,
\]
with equality if and only if $\mu = \nu$. Hence, 
we have 
$$\mathbb{E}_g KL(g_{\ast}\nu_{V/U} | \nu_{gV/gU})\geq 0,$$
with equality if and only if $g_{\ast}\nu_{V/U} = \nu_{gV/gU},$
for $\mu$-almost every $g$. 

Intuitively, if the law of $\mu$ 
is spread over a large region in $\Sp_{2d}(\mathbb{R})$, 
then no family of measures $(\nu_{V/U})_{(U,V)\in \mathcal F_{(k-1,k+1)}}$ can be invariant 
under all elements in the support of $\mu$. 
This is made quantitative below, 
measuring how ``spread out''  $\mu$ is by the $L^1$ norm of 
its directional derivatives. 

\begin{proposition}
\label{prp:entropy-lower-bound}
Let $\mu$ be a probability distribution on
\(\Sp_{2d}(\mathbb{R})\), with density \(f\) with respect to Haar measure.
Let \((\nu_{V/U})_{(U,V)\in \mathcal F_{(k-1,k+1)}}\) be a family of
probability measures on the projective spaces \(\mathbb{P}(V/U)\), and assume that each
\(\nu_{V/U}\) has a density with respect to the standard
rotation-invariant measure on \(\mathbb{P}(V/U)\).  Then for every
\((U,V)\in \mathcal F_{(k-1,k+1)}\),
\[
\mathbb{E}_{g}KL(\nu_{V/U}\mid (g^{-1})_{\ast}\nu_{gV/gU})
\geq
c\left(
1+
\sup_{\substack{M\in \mathfrak{sp}_{2d}\\ 
\ \|M\|_{HS}=1}}
\|X_Mf\|_{L^1(\Sp_{2d}(\mathbb{R}))}
\right)^{-2}.
\]
\end{proposition}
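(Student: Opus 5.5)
The plan is to reduce the KL lower bound to a total variation lower bound, and then exploit the equivariance of the family $h_g:=(g^{-1})_\ast\nu_{gV/gU}$ under right multiplication by a well-chosen one-parameter subgroup. Fix $(U,V)\in\mathcal F_{(k-1,k+1)}$ and write $\nu:=\nu_{V/U}$, a measure on the circle $\mathbb P(V/U)$ (note $\dim V/U=2$). By Pinsker's inequality and Jensen,
\[
\mathbb E_g KL(\nu\mid h_g)\ \ge\ 2\,\mathbb E_g \|\nu-h_g\|_{TV}^2\ \ge\ 2\,\delta^2,\qquad \delta:=\mathbb E_g\|\nu-h_g\|_{TV}.
\]
So it suffices to prove $\delta\ge c(1+S)^{-1}$, where $S:=\sup_{\|M\|_{HS}=1}\|X_Mf\|_{L^1}$.

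\textbf{Equivariance.} Choose $M\in\mathfrak{sp}_{2d}$ with $\|M\|_{HS}\asymp 1$ such that $\exp(tM)$ preserves $U$ and $V$. This can be done by letting $M$ act only on a two-dimensional complement of $U$ inside $V$, chosen compatibly with $\omega$ and extended by the symplectic dual action. This works both when $V/U$ is isotropic ($k<d$) and when $V=U^\perp$ ($k=d$). Then $g\exp(tM)V=gV$ and $g\exp(tM)U=gU$, so
\[
h_{g\exp(tM)}=\exp(-tM)_\ast h_g .
\]
On the other hand, right translation changes the law of $g$ only slightly. Since $0\le\|\cdot\|_{TV}\le 1$ and $\|f(\cdot\exp(-tM))-f\|_{L^1}\le |t|\,\|X_Mf\|_{L^1}$, we get
\[
\mathbb E_g\|\nu-h_{g\exp(tM)}\|_{TV}\le \delta+|t|S .
\]
Moreover $\|\exp(-tM)_\ast\nu-h_{g\exp(tM)}\|_{TV}=\|\nu-h_g\|_{TV}$. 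The triangle inequality therefore gives the key estimate
\[
\|\nu-\exp(-tM)_\ast\nu\|_{TV}\le 2\delta+|t|S\qquad\text{for all } t\in\mathbb R .
\]

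\textbf{Non-invariance lemma.} I would use two generators $M_1,M_2$ acting on $V/U\cong\mathbb R^2$ as a rotation and a diagonal hyperbolic element of $\mathfrak{sl}_2$. The lemma to prove is that for every probability measure $\nu$ on $\mathbb P^1$ and every $\tau\in(0,1]$,
\[
\max_{i=1,2}\ \sup_{|t|\le\tau}\|\nu-\exp(tM_i)_\ast\nu\|_{TV}\ \ge\ c_0\,\tau .
\]
The argument splits into two cases. If $\nu$ is far from Lebesgue measure, rotation averaging over a full period, together with a Fourier or $L^1$-modulus argument, shows that small rotations move it. If $\nu$ is close to Lebesgue measure, the hyperbolic flow moves Lebesgue measure by $\asymp\tau$, since its Jacobian has derivative of order one. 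Combining the lemma with the key estimate yields
\[
c_0\tau\le 2\delta+\tau S .
\]

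\textbf{Main obstacle.} This inequality alone only gives $\delta\ge \tfrac{\tau}{2}(c_0-S)$, which is useless once $S\ge c_0$. The linear-in-$t$ loss in the translation estimate must therefore be beaten. What I would try first is to avoid bounding $\mathbb E_g\|\nu-h_{g\exp(tM)}\|_{TV}$ by its crude $L^1$-difference, and instead work infinitesimally. Differentiate $t\mapsto\mathbb E_g\langle\varphi,h_{g\exp(tM)}\rangle$ at $t=0$ for a test function $\varphi$ on $\mathbb P^1$ with $\|\varphi\|_\infty\le 1$ and $\|\varphi'\|_\infty$ controlled. By equivariance, integration by parts moves the derivative onto $f$, producing a term bounded by $S$ times the size of $\langle\varphi,\nu\rangle-\langle\varphi,h_g\rangle$ rather than by $S$ alone. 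This yields an inequality of the form $c_0\le C(\delta+S\delta)$, i.e. $\delta\gtrsim(1+S)^{-1}$, after choosing $\varphi$ to witness the non-invariance of $\nu$ in the lemma, taken at scale $\tau\asymp1$. Making this duality step rigorous, with $\varphi$ chosen uniformly over all possible $\nu$, is where I expect the real work to lie. Squaring via Pinsker then gives the claimed $(1+S)^{-2}$.
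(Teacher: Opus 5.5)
Your reduction via Pinsker to a lower bound on $\delta=\mathbb E_g\|\nu-h_g\|_{TV}$ is correct, as are the equivariance $h_{g\exp(tM)}=\exp(-tM)_\ast h_g$ and the estimate $\|\nu-\exp(-tM)_\ast\nu\|_{TV}\le 2\delta+|t|S$. You are also right that these cannot conclude. If $\nu$ is the rotation-invariant measure on $\mathbb P(V/U)$, every $\exp(tM)$ with $\|M\|_{HS}=1$ moves $\nu$ by only $O(|t|)$, and this is swamped by the loss $|t|S$ once $S$ is large. However, the infinitesimal repair you propose has a genuine gap, and the gap is not the uniform choice of $\varphi$. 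Differentiating $t\mapsto\mathbb E_g\langle\varphi,h_{g\exp(tM)}\rangle$ at $t=0$ and integrating by parts gives
\[
-\int_{\Sp_{2d}(\mathbb R)}\langle\varphi,h_g-\nu\rangle\,X_Mf(g)\,dm(g),
\]
and H\"older only bounds this by $2\|\varphi\|_\infty S$. To get $S\delta$ you would need $|\langle\varphi,h_g-\nu\rangle|\lesssim\delta$ pointwise in $g$. But $\delta$ only controls the average of this quantity against $f\,dm$, while $X_Mf$ is controlled only in $L^1(dm)$. Pairing the two would require something like the Fisher information $\int|X_Mf|^2f^{-1}\,dm$, which the statement does not involve. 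So $c_0\le C(\delta+S\delta)$ does not follow; it is just the $t\to 0$ form of the same lossy estimate.

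The missing idea, which is how the paper argues, is to localize at a point of $\mathbb P(V/U)$ and let the generator depend on that point. By Fubini, $\delta=\frac12\int\mathbb E_g|e^{h_Y(g)}-1|\,\nu(dY)$, where $h_Y(g)=\log\frac{dh_g}{d\nu}(Y)$, after regularizing so these densities are positive and continuous. So it suffices to bound $\mathbb E_g|e^{h_Y(g)}-1|$ from below uniformly in $Y$.

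Given $Y$, take $M$ with $\|M\|_{HS}=1$ preserving $U$, $Y$ and $V$, and acting hyperbolically on $V/U$ with $Y$ as an eigenline; in adapted coordinates, $M=\frac{1}{\sqrt2}(e_ke_k^\ast-f_kf_k^\ast)$. Since $\exp(-tM)$ fixes $Y$, the pushforward only multiplies the density at $Y$ by the Jacobian at this fixed point. Hence $h_Y(g\exp(tM))=h_Y(g)+\alpha t$ exactly, for every $g$ and every family $(\nu_{V/U})$, with $|\alpha|\ge c$.

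Now integrate by parts against $\phi(h_Y(g))$, where $\phi'\ge\mathbf 1_{[-\eta,\eta]}$ and $\|\phi\|_\infty\le C\eta$. This gives $\mathbb P_g(|h_Y(g)|\le\eta)\le C\eta S/|\alpha|$. The small factor multiplying $S$ comes from the sup norm of the test function, not from $\delta$; this is what beats the linear loss. Taking $\eta=c(1+S)^{-1}$ yields $\mathbb E_g|e^{h_Y(g)}-1|\ge c(1+S)^{-1}$ for every $Y$, and Pinsker then gives the square.
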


\begin{proof}[Proof of Proposition \ref{prp:entropy-lower-bound}]
Let
\[
  A:=
  1+
  \sup_{\substack{M\in \mathfrak{sp}_{2d}\\ 
\ \|M\|_{HS}=1}}
  \|X_Mf\|_{L^1(\Sp_{2d}(\mathbb{R}))}.
\]

If \(A=\infty\), the desired lower bound is trivial.  
Otherwise, fix
\((U,V)\in\mathcal F_{(k-1,k+1)}\). 
By a regularization argument, we can assume that 
for each $g\in supp(\mu)$, the measure $(g^{-1})_{\ast}\nu_{gV/gU}$, 
on $\mathbb{P}(V/U)$, is 
absolutely continuous with respect to $\nu_{V/U}$ 
and has a positive density with respect to 
$m_{V/U}$. 

Recall Pinsker's inequality 
which says that for probability measures $\mu\ll\nu$, 
on a probability space $\Omega$ one has
\[
  KL(\mu|\nu)
  \geq
  \frac12\|\mu-\nu\|_{TV}^2.
\]
If additionally, $\nu\ll\mu$, the TV-norm can be written 
as 
\[
  \|\mu-\nu\|_{TV}
  =
  \int_\Omega\left|\frac{d\nu}{d\mu}-1\right|\,d\mu.
\]
Hence, applying this and Jensen's inequality gives
\[
\begin{aligned}
\mathbb E_g KL(\nu_{V/U}\mid (g^{-1})_*\nu_{gV/gU})
&\geq
\frac12\mathbb E_g\|\nu_{V/U}-(g^{-1})_*\nu_{gV/gU}\|_{TV}^2 \\
&\geq
\frac12
\left(
\int_{\mathbb P(V/U)}
\mathbb E_g\left|
\frac{d((g^{-1})_*\nu_{gV/gU})}{d\nu_{V/U}}(Y)-1
\right|\nu_{V/U}(dY)
\right)^2 .
\end{aligned}
\]
It is therefore enough to lower bound, uniformly in
\(Y\in\mathbb P(V/U)\), the quantity
\[
  \mathbb E_g\left|
  \frac{d((g^{-1})_*\nu_{gV/gU})}{d\nu_{V/U}}(Y)-1
  \right|
\]
by $A^{-1}$. For this we use the following two claims. 
The first is a non-concentration estimate that we will apply 
to $\log\left(d((g^{-1})_{\ast}\nu_{gV/gU})/d\nu_{V/U}(Y)\right)$.

\begin{claim}
Let \(M\in\mathfrak{sp}_{2d}\) and 
\(h:\Sp_{2d}(\mathbb R)\to\mathbb R\) be a measurable 
function such that
\begin{equation}
\label{eq:linear-movement}
 h(g\exp(tM))=h(g)+\alpha t,
\end{equation}
for some \(|\alpha|>0\).
Then for every \(\delta>0\),
\[
  \mathbb P_g(|h(g)|\leq\delta)
  \leq
  C|\alpha|^{-1}\delta\,
  \|X_Mf\|_{L^1(\Sp_{2d}(\mathbb R))}.
\]
\end{claim}

\begin{proof}[Proof of the claim]
Recall that $g$ has density $f(g)dm$, 
where \(m\) is Haar measure on \(\Sp_{2d}(\mathbb R)\). 
For any bounded smooth function \(\phi:\mathbb R\to\mathbb R\) with
bounded derivative, \eqref{eq:linear-movement} and integration 
by parts gives 
\begin{align*}
\mathbb{E}_{g}\phi'(h(g))
& = \int_{\Sp_{2d}(\mathbb{R})} \phi'(h(g))f(g)\,dm(g)\\
&=\frac1\alpha\int_{\Sp_{2d}(\mathbb{R})}
  \left.\frac{\partial}{\partial t}\right|_{t=0}
  \phi(h(g\exp(tM)))f(g)\,dm(g)\\
&=\frac1\alpha
  \left.\frac{\partial}{\partial t}\right|_{t=0}
  \int_{\Sp_{2d}(\mathbb{R})}\phi(h(g))f(g\exp(-tM))\,dm(g)\\
&=-\frac1\alpha\int_{\Sp_{2d}(\mathbb{R})}\phi(h(g))X_Mf(g)\,dm(g).
\end{align*}
Applying this with \(\phi\) such that
\(\phi'\geq\mathbf 1_{[-\delta,\delta]}\) and
\(\|\phi\|_\infty\leq C\delta\) gives
\[
\begin{aligned}
\mathbb P_g(|h(g)|\leq\delta)
&\leq\mathbb E_g\phi'(h(g))\\
&\leq |\alpha|^{-1}\|\phi\|_\infty\|X_Mf\|_{L^1\Sp_{2d}(\mathbb{R})}\\
&\leq C\delta|\alpha|^{-1}\|X_Mf\|_{L^1(\Sp_{2d}(\mathbb{R}))}.
\end{aligned}
\]
\end{proof}

To apply this, define any $Y \in \mathbb{P}(V/U)$, 
$$h_Y(g):=\log \frac{d((g^{-1})_*\nu_{gV/gU})}{d\nu_{V/U}}(Y).$$
We claim for each $Y$, there exists an $M\in \mathfrak{sp}_{2d}$ 
with $\|M\|_{HS} = 1$
such that $h_{Y}(g\exp(tM))$ changes linearly at a constant rate. 
This is done by taking $M$ to fix the flag $(U,Y,V)$ 
and generate a hyperbolic action on $V/U \cong \mathbb{R}^{2}$
expanding in the $Y$ direction.
We construct such an element in the following claim. 

\begin{claim}
For any $(U,Y,V)\in \mathcal F_{(k-1,k,k+1)}$,
there exists $M\in \mathfrak{sp}_{2d}$ and $|\alpha|\geq c$ such that 
$\|M\|_{HS} = 1$ and for all $g\in \Sp_{2d}(\mathbb{R})$
$$h_{Y}(g\exp(tM)) = h_Y(g) + \alpha t.$$ 
\end{claim}
\begin{proof}[Proof of the claim]
Up to conjugating by a symplectic rotation, we can assume
$(U,Y,V)$ is the standard flag, given by 
$U =\operatorname{span}(e_1,...,e_{k-1})$, 
$Y = U \oplus \operatorname{span}(e_{k})$, 
and $V = Y\oplus \operatorname{span}(e_{k+1})$ if $k<d$, while
$V = Y\oplus \operatorname{span}(f_d)$ if $k=d$. In both cases, 
take $M$ to be the rank-two
matrix
\[
M=\frac1{\sqrt2}\left(e_{k}e_{k}^{\ast}-f_{k}f_{k}^{\ast}\right).
\]
Here $vv^{\ast}$ is rank one projection onto 
$v$, so as a matrix in $\mathbb{R}^{2d\times 2d}$, 
$M$ has $\frac{1}{\sqrt{2}}$ in the $k,k$ entry and $\frac{-1}{\sqrt{2}}$ 
in the $d+k,d+k$ entry. Hence, $\exp(tM)$ 
preserves the subspaces $U$ and $V$. 
Moreover, its action on $V/U$ can be computed explicitly.
When $k<d$,  we identify $V/U$ with $\mathbb{R}^2$ 
in the basis $(e_{k},e_{k+1})$ and note 
the action of $\exp(tM)$ is given by 
$$
\exp(tM)|_{V/U}
=
\begin{pmatrix}e^{t/\sqrt2}&0\\0&1\end{pmatrix}.
$$
When $k=d$, we identify $V/U$ with $\mathbb{R}^2$ 
in the basis $(e_{d},f_d)$, and note 
we can write $\exp(tM)$ 
explicitly as
$$
\exp(tM)|_{V/U}
=
\begin{pmatrix}e^{t/\sqrt2}&0\\0&e^{-t/\sqrt2}\end{pmatrix},
$$
Using this, we compute
\[
\begin{aligned}
h_Y(g\exp(tM))
&=
\log\frac{d((\exp(-tM))_*((g^{-1})_{\ast}\nu_{gV/gU}))}{d\nu_{V/U}}(Y)\\
&=
h_Y(g)
+\log\frac{d((\exp(-tM))_*m_{V/U})}{dm_{V/U}}(Y)\\
& = h_Y(g) + \alpha t,
\end{aligned}
\]
for some $|\alpha|\geq c$. 
\end{proof}

Applying both claims with \(\delta=cA^{-1}\), with $c$ 
small enough, to
\(h_Y\) gives
\[
  \mathbb P_g(|h_Y(g)|\leq cA^{-1})
  \leq \frac{1}{2}
\]
Since
\(|e^x-1|\geq c\min(|x|,1)\), 
we obtain
\[
  \mathbb E_g\left|
  \frac{d((g^{-1})_*\nu_{gV/gU})}{d\nu_{V/U}}(Y)-1
  \right|
  =
  \mathbb E_g|e^{h_Y(g)}-1|
  \geq cA^{-1}.
\]
Substituting this into the Pinsker bound above proves the proposition.
\end{proof}

Now we use Proposition \ref{prp:KL-gap-formula}, Proposition
\ref{prp:entropy-lower-bound}, and 
invariance of KL divergence under pullback
to get
\begin{align*}
\lambda_{k}-\lambda_{k+1}
&=
\int_{\mathcal{F}_{(k-1,k+1)}}
\mathbb{E}_{g}KL(\nu_{V/U}\mid (g^{-1})_*\nu_{gV/gU})
\nu_{(k-1,k+1)}(d(U,V))\\
&\geq  
c\left(
  1+\sup_{\substack{M\in \mathfrak{sp}_{2d},\ \|M\|_{HS}=1}} \|X_Mf\|_{L^1(\Sp_{2d}(\mathbb R))}
  \right)^{-2}.
\end{align*}
The estimate now follows for each \(1\leq k\leq d\). The estimate 
for the remaining gaps follows
from the symmetry of the Lyapunov spectrum for symplectic matrices.

\section{Proof of Theorem \texorpdfstring{\ref{thm:top}}{Top Exponent}}
Throughout the section, we fix $E\in \mathbb{R}$, 
and 
assume that $A$ satisfies the following for some $K>0$. 
We let all constants depend on $E$ and $K$.

\begin{assumption}
\label{assumption:top-LE}
The following hold for some constant $K>0$:
\begin{enumerate}
    \item $A = A^T$ almost surely
    \item $(A_{xy})_{1\leq x\leq y \leq W}$ are independent 
    \item For all $x,y\in [1,W]$, $\mathbb{E} A_{xy} = 0$ and $0<\mathbb{E}A_{xy}^4 \leq K/W^2.$
    \item The variance profile is normalized in the sense that
    \[
      \max_{x\in[1,W]}\left|1-\sum_{y\in [1,W]} \Var(A_{xy})\right|\leq \frac{K}{W}.
    \]
\end{enumerate}
\end{assumption}

This is satisfied by the GOE law with $K = 12$, but also by more singular distributions,
such as centered Bernoulli random variables. 

\subsection{Basic Estimates}
In this section we collect a few basic estimates. 
First, we show the random matrix $A$ almost preserves the norms of 
vectors and randomizes their directions. 

\begin{lemma}\label{lem:conc}
If $A$ satisfies Assumption~\ref{assumption:top-LE}, 
then for any $u\in \R^W$,
$$\E\|Au\|^2 = \|u\|^2 + O(W^{-1}\|u\|^2),$$
and 
$$ \operatorname{Var}(\|Au\|^2)\leq \frac{C}{W}\|u\|^4.$$
Further, for any vectors $u,w\in \mathbb{R}^W$,
$$
  \E|\langle Au,w\rangle|^2
  \leq \frac{C}{W}\|u\|^2\|w\|^2.
$$
\end{lemma}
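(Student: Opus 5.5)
The plan is a direct moment computation, expanding everything in the independent entries $(A_{xy})_{x\le y}$. Write $(Au)_x=\sum_y A_{xy}u_y$ and set $\sigma_{xy}^2:=\Var(A_{xy})$. For the first claim,
\[
\E\|Au\|^2=\sum_x\sum_{y,y'}\E[A_{xy}A_{xy'}]u_yu_{y'}=\sum_y u_y^2\sum_x\sigma_{xy}^2 .
\]
Only the diagonal terms $y=y'$ survive, because distinct entries in a row are independent and centered. By symmetry $\sigma_{xy}=\sigma_{yx}$, so $\sum_x\sigma_{xy}^2=1+O(K/W)$ by item (4) of Assumption~\ref{assumption:top-LE}. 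This gives $\E\|Au\|^2=\|u\|^2+O(W^{-1}\|u\|^2)$. The third claim is handled the same way. Write
\[
\langle Au,w\rangle=\sum_{x\le y}A_{xy}c_{xy},\qquad c_{xy}=u_yw_x+u_xw_y \ (x<y),\quad c_{xx}=u_xw_x .
\]
This is a sum of independent centered variables, so
\[
\E\langle Au,w\rangle^2=\sum_{x\le y}\sigma_{xy}^2c_{xy}^2 .
\]
Cauchy--Schwarz gives $\sigma_{xy}^2\le(\E A_{xy}^4)^{1/2}\le \sqrt K/W$, and $\sum c_{xy}^2\le 2\sum_{x,y}(u_y^2w_x^2+u_x^2w_y^2)=4\|u\|^2\|w\|^2$. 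Together these give the bound $C W^{-1}\|u\|^2\|w\|^2$.

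The variance bound is the main step. Write $\|Au\|^2=\langle A^2u,u\rangle=\sum_{x}\big(\sum_y A_{xy}u_y\big)^2$ as a quadratic form $Q=\sum_{e,e'}A_eA_{e'}\beta_{e,e'}$, where $e,e'$ range over edges $\{x\le y\}$. The coefficient $\beta_{e,e'}$ is nonzero only if $e$ and $e'$ share a vertex $x$. In that case, with $e=\{x,y\}$ and $e'=\{x,y'\}$, one has $\beta_{e,e'}=u_yu_{y'}$ up to a bounded combinatorial factor. I split $Q-\E Q$ into two pieces:
\[
D=\sum_e\beta_{e,e}(A_e^2-\sigma_e^2),\qquad O=\sum_{e\ne e'}\beta_{e,e'}A_eA_{e'} .
\]
These are uncorrelated. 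For $O$, the terms are orthogonal across unordered pairs $\{e,e'\}$, so
\[
\E O^2\le C\sum_{e\neq e'}\beta_{e,e'}^2\sigma_e^2\sigma_{e'}^2\le \frac{C}{W^2}\sum_{x,y,y'}u_y^2u_{y'}^2\le \frac{C}{W}\|u\|^4 ,
\]
where $\sum_x$ runs over the $W$ shared vertices and contributes the factor $W$. For $D$, independence gives
\[
\E D^2\le\sum_e\beta_{e,e}^2\E A_e^4\le \frac{K}{W^2}\sum_{x,y}\big(u_y^2+u_x^2\big)^2\le \frac{CK}{W}\|u\|^4 .
\]
Here I use $\beta_{\{x,y\},\{x,y\}}\le u_x^2+u_y^2$, and note that $\sum_{x,y}(u_x^2+u_y^2)^2\le 4W\|u\|^4$. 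This yields $\Var(\|Au\|^2)\le CW^{-1}\|u\|^4$.

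The only care needed is bookkeeping. The symmetry $A_{xy}=A_{yx}$ means that an entry appears in two rows, so a pair $e\ne e'$ can share a vertex in at most one way, and the diagonal entries contribute with their own coefficients. All of these effects only change absolute constants. The fourth-moment hypothesis is used in exactly two places: to control the variances via $\sigma^2\le\sqrt{\E A^4}$, and to control the diagonal piece $D$.
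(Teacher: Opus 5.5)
Your proof is correct. For the mean and for $\E|\langle Au,w\rangle|^2$ it is the paper's computation: expand in the independent entries, keep only the diagonal terms, and use the row-sum normalization together with $\Var(A_{xy})\le\sqrt{K}/W$.

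For the variance you use a different decomposition. The paper splits by rows, $\Var(\|Au\|^2)=\sum_i\Var((Au)_i^2)+2\sum_{i<j}\operatorname{Cov}((Au)_i^2,(Au)_j^2)$. It then observes that two distinct rows share only the single entry $A_{ij}=A_{ji}$, so each cross-row covariance collapses exactly to $u_i^2u_j^2\Var(A_{ij}^2)=O(W^{-2}u_i^2u_j^2)$. Each row variance is $O(W^{-2}\|u\|^4)$ by a fourth-moment bound on $(Au)_i$, which the paper leaves implicit.

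You instead treat $\|Au\|^2-\E\|Au\|^2$ as a quadratic form in the independent variables $(A_e)_e$. You split it into the diagonal piece $D$ and the off-diagonal bilinear piece $O$, which are orthogonal, and compute each second moment directly by independence. Your bookkeeping checks out: $\beta_{e,e}\le u_x^2+u_y^2$, distinct edges share at most one vertex, and the sum over ordered pairs $e\neq e'$ injects into triples $(x,y,y')$.

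The paper's route is a little shorter because it exploits the row structure specific to $\|Au\|^2$. Yours is a more systematic, Hanson--Wright-style bookkeeping that is fully explicit. It would also apply unchanged to other quadratic forms in the entries of $A$.
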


\begin{proof}
By scaling we can assume $u$ is a unit vector. 
Writing $u = (u_1,...,u_W)^T$ 
and using properties $(2), (3),$ and $(4)$,
of Assumption \ref{assumption:top-LE} we have
\begin{align*}
\mathbb{E}\|Au\|^2
& = \mathbb{E}\sum_{i}\left(\sum_{j}A_{ij}u_{j}\right)^2\\
& = \mathbb{E}\sum_{i} \sum_{j,k} A_{ij}A_{ik}u_{k}u_{j}\\
& = \sum_{j}\left(\sum_{i}\mathbb{E}A_{ij}^2\right) u_j^2\\
& = 1 + O(K/W).  
\end{align*}
Here and below all sums are over $[1,W]$. 
For the variance we estimate 
\begin{align*}
\operatorname{Var}\left(\|Au\|^2\right)
& = \sum_{i}\operatorname{Var}((Au)_i^2) + 2\sum_{i<j}\operatorname{Cov}((Au)_i^2, (Au)_j^2)\\
& = \sum_{i}\operatorname{Var}((Au)_i^2)  + 2\sum_{i<j} u_i^2u_j^2\operatorname{Var}(A_{ij}^2)\\
& \leq C\sum_{i}W^{-2} + 2C\sum_{i<j} u_i^2u_j^2 W^{-2}\\
& \leq \frac{C}{W}.
\end{align*}
We used properties (1), (2), and (3) of Assumption \ref{assumption:top-LE} 
to pass to the third line. 

To finish we consider $\langle Au,w\rangle$. 
Again we can assume $w = (w_1,...,w_W)$ is a unit vector, 
and let \(b_{ii}=u_iw_i\) and \(b_{ij}=u_jw_i+u_iw_j\) for \(i<j\). 
By independence and centering,
$$\mathbb{E}\left| \langle Au,w\rangle\right|^2 = \mathbb{E}\left|\sum_{i\leq j} b_{ij}A_{ij}\right|^2= \sum_{i\leq j} b_{ij}^2\Var(A_{ij})\leq \frac{C}{W}$$
since \(\sum_{i\leq j}b_{ij}^2\leq C\).
\end{proof}

We also use the following. 
\begin{lemma}
\label{lem:log-expectation}
Let $Z>0$ be a random variable such that $\E Z<\infty$. Then
\begin{equation}
\label{eq:log-identity}
  \E\log Z
  =
  \log\E Z
  + O\!\left(
    \frac{1}{\E Z}\E\frac{(Z-\E Z)^2}{Z}
  \right).
\end{equation}
\end{lemma}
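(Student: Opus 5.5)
We reduce the claim to one elementary pointwise inequality for the logarithm, expanded around the mean. Write $m:=\E Z>0$ and $x:=Z/m$, so that $\E x=1$ and
\[
\E\log Z=\log m+\E\log x .
\]
It therefore suffices to show that
\[
|\E\log x|\le C\,\E\frac{(x-1)^2}{x}.
\]
This is exactly the error term in \eqref{eq:log-identity}, because
\[
\frac{1}{\E Z}\,\E\frac{(Z-\E Z)^2}{Z}=\frac{1}{m}\,\E\frac{m^2(x-1)^2}{mx}=\E\frac{(x-1)^2}{x}.
\]

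The key step is the pointwise two-sided bound
\[
-\frac{(x-1)^2}{x}\le \log x-(x-1)\le 0 \qquad\text{for all } x>0 .
\]
For the upper bound, $\log x\le x-1$ holds by concavity of the logarithm.

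For the lower bound, apply the same inequality to $1/x$. This gives $\log(1/x)\le 1/x-1$, hence
\[
\log x\ge 1-\frac1x=\frac{x-1}{x}.
\]
It follows that
\[
\log x-(x-1)\ge (x-1)\Bigl(\frac1x-1\Bigr)=-\frac{(x-1)^2}{x}.
\]

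Now take expectations and use $\E(x-1)=0$. If $\E\frac{(x-1)^2}{x}=\infty$, the claim is vacuous. Otherwise $\log x$ is squeezed between two integrable functions, so $\E\log x$ is well-defined. We then obtain
\[
-\E\frac{(x-1)^2}{x}\le \E\log x\le 0 .
\]
This is the claim with $C=1$.

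The only subtlety is integrability. The hypothesis $\E Z<\infty$ gives $\E\log^+Z<\infty$. The possibly infinite negative part of $\log Z$ is controlled precisely by the finiteness of the error term, which is why the squeeze must be applied before taking expectations of the individual pieces. Beyond this, the argument is a one-line convexity estimate, and no earlier result of the paper is needed.
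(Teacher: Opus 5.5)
Your proposal is correct and matches the paper's proof: the sandwich $1-x^{-1}\le \log x\le x-1$ gives $0\le x-1-\log x\le (x-1)^2/x$, which you apply with $x=Z/\E Z$ and $\E(x-1)=0$. Your remark on integrability (the error term being finite controls the negative part of $\log Z$) is a sensible detail that the paper leaves implicit.
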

\begin{proof}
For every $x>0$, the inequalities
$1-x^{-1}\leq\log x\leq x-1$ give
\[
  \left|\log x-(x-1)\right|
  =x-1-\log x
  \leq \frac{(x-1)^2}{x}.
\]
Apply this with $x=Z/\E Z$ and use
$\E(Z/\E Z-1)=0$.
\end{proof}

To finish, 
we record some elementary spectral properties of the matrix $M_E$, 
given by 
$$
  M_E=
  \begin{pmatrix}
    1+E^2 & 1 & 2E \\
    1 & 0 & 0 \\
    -E & 0 & -1
  \end{pmatrix}.
$$

\begin{lemma}[Spectral gap of $M_E$]
\label{lem:M-E-spectral}
For every $E\in\mathbb R$, the matrix $M_E$ has a unique maximal
eigenvalue \(\lambda_E\), satisfying
\[
\lambda_E\geq
\max\left(\frac{1+\sqrt5}{2},\frac{1+E^2}{2}\right),
\]
and its other two eigenvalues have absolute value at most $1$. Further, 
the right eigenvector corresponding to $\lambda_E$, normalized so that its first two coordinates sum to
$1$, is given by 
\[
v_E=
\begin{pmatrix}
\dfrac{\lambda_E}{\lambda_E+1}\\[2mm]
\dfrac{1}{\lambda_E+1}\\[2mm]
-\dfrac{E\lambda_E}{(\lambda_E+1)^2}
\end{pmatrix}.
\]
\end{lemma}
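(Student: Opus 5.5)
The plan is to work directly with the characteristic polynomial of $M_E$. Expanding $\det(\lambda\Id-M_E)$ along the first row gives
\[
p(\lambda)=\lambda(\lambda+1)(\lambda-1-E^2)-(\lambda+1)+2E^2\lambda
=(\lambda+1)(\lambda^2-\lambda-1)-E^2\lambda(\lambda-1).
\]
From this expression we read off three facts. First, $\det M_E=-p(0)=1$, so the product of the three eigenvalues is $1$. Second, $p(1)=-2<0$ and $p(-1)=-2E^2\le 0$. Third, $p(\lambda)\to+\infty$ as $\lambda\to+\infty$, so there is a root $\lambda_E>1$.

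\emph{Localizing the other eigenvalues.} We show the remaining two eigenvalues have modulus at most $1$, which also makes $\lambda_E$ the unique maximal eigenvalue and a simple root.
\begin{itemize}
\item If the other two eigenvalues are non-real, they have common modulus $\lambda_E^{-1/2}<1$, since the product of all three eigenvalues is $1$.
\item If they are real, note that $p(1)<0$ and $p(+\infty)>0$, so $p$ has an odd number of roots in $(1,\infty)$. Having three such roots would force the product of the eigenvalues to exceed $1$, so exactly one root lies above $1$.
\item Similarly, $p(-1)\le 0$ and $p(-\infty)<0$, so $p$ has an even number of roots in $(-\infty,-1)$. Two such roots together with $\lambda_E>1$ would again give a product exceeding $1$, so no root lies below $-1$.
\end{itemize}
Hence the two remaining eigenvalues lie in $[-1,1]$.

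\emph{Lower bounds on $\lambda_E$.} Evaluating at the golden ratio $\varphi=\frac{1+\sqrt5}{2}$, which satisfies $\varphi^2-\varphi-1=0$, gives
\[
p(\varphi)=-E^2\varphi(\varphi-1)=-E^2\le 0 .
\]
Since $\varphi>1$ and exactly one root of $p$ exceeds $1$, we get $\lambda_E\ge\varphi$. For the second bound, set $a=\frac{1+E^2}{2}\ge 1$, so that $E^2=2a-1$. A short computation gives
\[
p(a)=(a+1)(a^2-a-1)-(2a-1)a(a-1)=-(a-1)^3-2<0 .
\]
Since $a\ge 1$, $p(a)<0$, and $p$ is positive beyond its unique root exceeding $1$, we conclude $\lambda_E>a$.

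\emph{The eigenvector.} Solve $(M_E-\lambda_E)v=0$ using the second and third rows. The second row gives $v_1=\lambda_E v_2$. The third row gives $-Ev_1-(1+\lambda_E)v_3=0$, so $v_3=-Ev_1/(\lambda_E+1)$. Imposing the normalization $v_1+v_2=1$ yields
\[
v_2=\frac{1}{\lambda_E+1},\qquad v_1=\frac{\lambda_E}{\lambda_E+1},\qquad v_3=-\frac{E\lambda_E}{(\lambda_E+1)^2},
\]
which is the stated $v_E$. The first row is then automatically satisfied, because $\lambda_E$ is a root of $p$.

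The only mildly delicate step is excluding real eigenvalues of modulus greater than $1$ other than $\lambda_E$. The sign-counting argument combined with $\det M_E=1$ handles this, so no real obstacle is expected.
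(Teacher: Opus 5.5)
Your proof is correct, and its skeleton matches the paper's. You use the same characteristic polynomial (your $p$ is $-P_E$), and $\det M_E=1$ to dispose of a complex-conjugate pair. You evaluate at $\varphi$ and at $a=(1+E^2)/2$ via the identity $p(a)=-(a-1)^3-2$, which is the paper's $P_E(a)=((E^2-1)^3+16)/8$. You read off the eigenvector from rows two and three.

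Where you genuinely differ is in the uniqueness of the root in $(1,\infty)$. The paper rewrites $P_E(\lambda)=0$ as $E^2=(\lambda^3-2\lambda-1)/(\lambda^2-\lambda)$ and checks strict monotonicity of the right-hand side through the quartic numerator of its derivative. You instead combine two facts: the sign change $p(1)<0<p(+\infty)$ forces an odd number of roots above $1$, and $\det M_E=1$ excludes three. Your route is shorter and needs no derivative computation. Since roots are counted with multiplicity, it also shows directly that $\lambda_E$ is simple.

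Two small repairs are needed. First, when $E=0$ you have $p(-1)=0$, so the sign data $p(-1)\le 0$, $p(-\infty)<0$ do not fix the parity of the number of roots in $(-\infty,-1)$. The conclusion still holds, but not for the stated reason. The cleanest fix is the paper's direct argument: for $\lambda<-1$, your factorization gives $(\lambda+1)(\lambda^2-\lambda-1)<0$ and $-E^2\lambda(\lambda-1)\le 0$, so $p<0$ there and no parity or product argument is needed. Second, $a\ge 1$ holds only when $E^2\ge 1$. For $E^2<1$ the second lower bound follows from the first, since then $a<1<\varphi$.
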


\begin{proof}
Let $P_E(\lambda)$ denote 
the characteristic polynomial of $M_E$, given by
\[
P_E(\lambda)
=
-\lambda^3+E^2\lambda^2-(E^2-2)\lambda+1
=
\bigl(-\lambda^3+2\lambda+1\bigr)
+E^2(\lambda^2-\lambda),
\]
and let \(\lambda_0=(1+\sqrt5)/2\).

First, we show that for any $E$, $P_E(\lambda)$ 
has a unique root in $(1,\infty)$.
For existence, note that for any 
$E$ we have $P_{E}(\lambda_0) = E^2(\lambda_0^2-\lambda_0)\geq E^2/2,$
and $P_{E}(\lambda)\to-\infty$ 
as $\lambda \to \infty$. Hence there is at least 
one root in $[1,\infty)$. 
For uniqueness of the root, 
note that 
for any $E$ and \(\lambda>1\), the equation $P_E(\lambda)=0$ is equivalent to
\[
E^2=
\frac{\lambda^3-2\lambda-1}{\lambda^2-\lambda}.
\]
For any fixed $E$ there is at most one $\lambda >1$,
satisfying that equation because the right-hand side is
a strictly increasing function of $\lambda$ on $(1,\infty)$. Indeed,
its derivative has numerator
\[
\lambda^4-2\lambda^3+2\lambda^2+2\lambda-1,
\]
which is positive at \(\lambda = 1\) and has positive derivative for
\(\lambda\geq1\).  
Hence for each $E$, there 
exists a uniquue root \(\lambda_E\in (1,\infty)\). 
The lower bound on $\lambda_E$ follows 
by noting that if \(s=E^2\geq1\), then
\[
P_E\left(\frac{1+s}{2}\right)
=\frac{(s-1)^3+16}{8}>0. 
\]
So \(\lambda_E\geq\max(\lambda_0,(1+E^2)/2)\).
See Figure 1 below.  

To see the other eigenvalues of $M_E$ have modulus 
at most $1$, 
note that \(\det M_E=1\) so
the product of the remaining eigenvalues is
\(\lambda_E^{-1}<1\). 
Hence, if they are complex, then each has modulus
\(\lambda_E^{-1/2}<1\). If they are real, neither can lie in
\((1,\infty)\) by uniqueness of \(\lambda_E\), 
and one can check that $P_E(\lambda)>0$ for all $\lambda<-1$. 
Indeed, both $P_0(\lambda)$ and
\(E^2(\lambda^2-\lambda)\) are positive.

Finally, to check that $M_Ev_E = \lambda_Ev_E$, 
note that solving $M_Ev=\lambda_Ev$ gives
\[
v=c\left(1,\lambda_E^{-1},-\frac{E}{\lambda_E+1}\right)^T,
\]
and choosing
\(c=\lambda_E/(\lambda_E+1)\) gives the stated normalization.
\end{proof}

\begin{figure}[H]
\centering
\begin{tikzpicture}
\begin{axis}[
    xlabel={$\lambda$},
    ylabel={$P_E(\lambda)$},
    domain=-1.8:3,
    samples=200,
    axis lines=middle,
    ymin=-8, ymax=6,
    xmin=-1.8, xmax=3,
    width=12cm, height=8cm,
    legend pos=north west,
    legend style={font=\small},
    grid=both,
    grid style={line width=0.1pt, draw=gray!20},
    every axis x label/.style={at={(ticklabel* cs:1.0)}, anchor=west},
    every axis y label/.style={at={(ticklabel* cs:1.0)}, anchor=south},
]
\addplot[thick, blue] {-x^3 + 2*x + 1};
\addlegendentry{$E=0$}
\addplot[thick, red, dashed] {-x^3 + 0.25*x^2 - (0.25-2)*x + 1};
\addlegendentry{$E=0.5$}
\addplot[thick, orange, dashdotted] {-x^3 + x^2 - (1-2)*x + 1};
\addlegendentry{$E=1$}
\addplot[thick, purple, dotted] {-x^3 + 2.25*x^2 - (2.25-2)*x + 1};
\addlegendentry{$E=1.5$}
\addplot[thin, black, dashed] {0};
\addplot[only marks, mark=*, mark size=2pt, black] coordinates {(1.618, 0)};
\node[above] at (axis cs:1.618,0.3)
  {\small $\lambda_0=\frac{1+\sqrt5}{2}$};
\end{axis}
\end{tikzpicture}
\caption{The characteristic polynomial
$P_E(\lambda)=-\lambda^3+E^2\lambda^2-(E^2-2)\lambda+1$ for several values of
$E$. The unique root larger than $1$ moves to the right as $|E|$ increases.}
\label{fig:charpoly}
\end{figure}

\subsection{Estimates on the Stationary Measure}
The main input to Theorem \ref{thm:top} 
is the following estimate on the stationary measure.

\begin{proposition}\label{prop:stat}
For any $W\in \mathbb{N}$, let $\nu_{E,W}$ be the stationary measure on $\mathbb{P}(\R^{2W})$ associated to
the random matrix $T_E(A)$, given in \eqref{def:M}, 
and let $\lambda_E$ be the top eigenvalue of $M_E$, given in Lemma 
\ref{lem:M-E-spectral}. 
If $(v_1,v_2)^T\in \R^{2W}$ is a unit
vector with law $\nu_{E,W}$,
then 
\begin{align*}
  \E[\|v_1\|^2] &= \frac{\lambda_E}{1+\lambda_E} + O(W^{-1}),\\
  \E[\|v_2\|^2] &= \frac{1}{1+\lambda_E} + O(W^{-1}),\\
  \E[\langle v_1,v_2\rangle] &= \frac{-E\lambda_E}{(\lambda_E +1)^2} + O(W^{-1}).
\end{align*}
Furthermore, we have
$$
  \Var(\|v_1\|^2),
  \Var(\|v_2\|^2),
  \Var(\langle v_1,v_2\rangle)
  \le \frac{C}{W}.$$
\end{proposition}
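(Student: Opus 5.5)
We follow the recurrence sketched in the introduction. Let $v=(x,y)^T$ be a unit vector with law $\nu_{E,W}$, independent of $A$, and set $X=(\|x\|^2,\|y\|^2,\langle x,y\rangle)^T$. The new vector is $T_E(A)v=((A-E)x-y,\,x)$. We expand its three quadratic quantities:
\begin{align*}
\|(A-E)x-y\|^2 &= \|Ax\|^2+E^2\|x\|^2+\|y\|^2+2E\langle x,y\rangle-2E\langle Ax,x\rangle-2\langle Ax,y\rangle,\\
\|x\|^2 &= X_1,\\
\langle (A-E)x-y,\,x\rangle &= \langle Ax,x\rangle - EX_1 - X_3 .
\end{align*}
By Lemma \ref{lem:conc}, conditionally on $v$ we have $\|Ax\|^2=X_1+O(W^{-1})+\xi_1$ and $\langle Ax,x\rangle,\langle Ax,y\rangle=\xi_2,\xi_3$. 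Each $\xi_i$ has conditional mean zero, or $O(1/W)$ for $\|Ax\|^2$, and conditional second moment at most $C/W$. Hence $\tilde X:=M_EX+Y$ with $\E[Y\mid v]=O(W^{-1})$ and $\E[\|Y\|^2\mid v]\le C/W$; one checks that the linear part is exactly $M_E$ (this is Lemma \ref{lem:recurrence}). Stationarity then gives $X\stackrel{d}{=}\tilde X/N$ with $N:=\tilde X_1+\tilde X_2=\|T_E(A)v\|^2$. Note $N\ge \|x\|^2$ trivially, and also $N\ge c>0$, since $T_E(A)^{-1}$ has norm at most $C(1+|E|+\|A\|)$. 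To use this we need $\|A\|_{op}$ bounded with high probability, or better, to avoid it entirely by working with ratios.

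\textbf{Key step: contraction toward $v_E$.} Let $\ell$ be a left eigenvector of $M_E$ for $\lambda_E$, and let $P$ be the spectral projection onto the complementary two-dimensional invariant subspace. The natural coordinate is $Z:=PX/\langle \ell,X\rangle$, a projective distance from $v_E$. Here $\langle\ell,X\rangle$ is bounded below because $X$ lies in the cone $\{X_1,X_2\ge0,\ |X_3|^2\le X_1X_2,\ X_1+X_2=1\}$. We check that $\ell$ is positive on this cone by testing the case $\ell=(\lambda_E,1,?)$ using the adjoint eigen-equation; this is a small algebra check. The map $Z\mapsto$ new $Z$ is then
\[
Z'=\frac{PM_EX+PY}{\lambda_E\langle\ell,X\rangle+\langle\ell,Y\rangle}=\frac{(M_E|_{P})Z+O(\|Y\|)}{\lambda_E+O(\|Y\|)}.
\]
By Lemma \ref{lem:M-E-spectral}, $M_E$ restricted to $\operatorname{ran}P$ has spectral radius at most $1<\lambda_E$. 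Passing to an adapted norm (or to a fixed power $M_E^{k}$ by iterating stationarity $k$ times), the map is a strict contraction $\|Z'\|\le \rho\|Z\|+C\|Y\|$ up to quadratic errors. Using $\E\|Y\|^2\le C/W$ and stationarity ($Z'\stackrel{d}{=}Z$), we get $\E\|Z\|^2\le \rho'\E\|Z\|^2+C/W$, hence $\E\|Z\|^2\le C/W$. This yields the variance bounds, since $X=v_E+O(\|Z\|)$.

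\textbf{Mean.} For the $O(W^{-1})$ bias, we expand $Z'$ to first order. The linear term $\E[PY]/\lambda_E$ is $O(W^{-1})$ because $\E[Y\mid v]=O(W^{-1})$. The cross terms $Z\langle\ell,Y\rangle$ and $\langle \ell,Y\rangle^2$ have expectation $O(W^{-1})$ by Cauchy--Schwarz and the second-moment bounds. Taking expectations in $\E Z'=\E Z$ gives $(\lambda_E-M_E|_P)\E Z=O(W^{-1})$, and the left side is invertible. Thus $\E Z=O(W^{-1})$, and translating back gives $\E X=v_E+O(W^{-1})$.

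\textbf{Main obstacle.} The hardest part is controlling the denominators and the nonlinear remainders without higher moment assumptions, since only fourth moments of the entries are assumed. We would either truncate on the event $\|Y\|\le \epsilon$, bounding the complement's contribution using that $Z$ is bounded on the cone and $\mathbb P(\|Y\|>\epsilon)\le C/(\epsilon^2W)$, or use the bounded-cone structure to make every quantity bounded so Chebyshev suffices. A secondary check is that the contraction in the adapted norm holds uniformly for $E$ fixed.
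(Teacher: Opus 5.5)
Your proposal is correct and is essentially the paper's argument: your coordinate $Z=PX/\langle\ell,X\rangle$ is exactly the paper's $x/(\phi_E\cdot x)-v_E$, your adapted norm plays the role of $d(x)=\sup_{k\ge0}r_0^k\|M_E^k(\cdot)\|$, and your second-moment contraction (with large $\|Y\|$ handled by boundedness on the cone $S$) matches Claim \ref{lem:action-of-F} and the bound $\E d(X)^2\le C/W$. The one cosmetic difference is the mean step: you linearize the $Z$-recursion, which is exactly linear in $Z$ up to $Y$-terms, and invert $\lambda_E-M_E|_{\operatorname{ran}P}$, whereas the paper shows $\|\E X-F(\E X)\|=O(W^{-1})$ and applies the contraction of $d$ at the point $\E X$; both rely on the already established bound $\E\|Z\|^2\le C/W$ to control the quadratic terms.
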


\begin{remark}
  The values of $\mathbb{E}\|v_1\|^2$, etc, 
  are simply the coordinates of the top right eigenvector 
  of $M_E$, $v_E$ given in Lemma \ref{lem:M-E-spectral}.
\end{remark}

As noted in the proof sketch, 
the main observation is that the law
of $X:=(\|v_1\|^2, \|v_2\|^2, \langle v_1, v_2\rangle)$ 
is almost invariant invariant under the
induced projective action of $M_E$, 
given by 
\begin{equation}
\label{def:F-mapping}
F(x) := \frac{M_Ex}{(M_Ex)_1+(M_Ex)_2},
\end{equation}
where $(M_Ex)_i$ denotes the $i$-th entry 
of the vector $M_Ex\in \mathbb{R}^3$. 
Since $v_E$ 
is the unique most expanding direction 
of $M_E$, $F$ acts like a contraction with a fixed point at $v_E$. 
This lets us trap the law of $X$ near that fixed point 
via a Lyapunov, sometimes called a Margulis, functional. 

That the law of $X$ is almost invariant under $F$
is made rigorous in the following claim. 

\begin{lemma}
\label{lem:recurrence}
Let $X\in \mathbb{R}^3$ be the random vector $(\|v_1\|^2, \|v_2\|^2, \langle v_1, v_2\rangle)^T$,
where $(v_1, v_2)^T$ is a unit vector with law $\nu_{E,W}$. 
There exists a random variable $Y\in \R^3$ such that
\[
  X\stackrel{d}{=}
  F(X + M_E^{-1}Y),
\]
and for almost every realization of $X$
\[
  |\mathbb{E}[Y|X]|,
  \Var(Y|X) \leq \frac{C(1+E^2)}{W}.
\]
\end{lemma}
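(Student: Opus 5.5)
Using stationarity, take $v=(v_1,v_2)^T$ with law $\nu_{E,W}$ and an independent $A$. Since $\nu_{E,W}$ is stationary, $v\stackrel{d}{=}T_E(A)v/\|T_E(A)v\|$. Write $u:=T_E(A)v=((A-E)v_1-v_2,\ v_1)^T$. Then $X\stackrel{d}{=}\tilde X/(\tilde X_1+\tilde X_2)$, where
\[
\tilde X:=\bigl(\|(A-E)v_1-v_2\|^2,\ \|v_1\|^2,\ \langle (A-E)v_1-v_2, v_1\rangle\bigr)^T .
\]
Note that $\tilde X_1+\tilde X_2=\|u\|^2$, which is exactly the normalization appearing in $F$ in \eqref{def:F-mapping}. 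Because $F$ is invariant under positive scaling of its argument, it suffices to write $\tilde X = M_E X + Y$ for a suitable $Y$. Then $F(X+M_E^{-1}Y)=\frac{M_EX+Y}{(M_EX+Y)_1+(M_EX+Y)_2}=\tilde X/\|u\|^2$. Here $M_E$ is invertible since $\det M_E=1$.

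To find $Y$, expand the three coordinates of $\tilde X$:
\begin{align*}
\tilde X_1&=\|Av_1\|^2 -2E\langle Av_1,v_1\rangle -2\langle Av_1,v_2\rangle + E^2\|v_1\|^2+2E\langle v_1,v_2\rangle+\|v_2\|^2,\\
\tilde X_2&=\|v_1\|^2,\\
\tilde X_3&=\langle Av_1,v_1\rangle - E\|v_1\|^2-\langle v_1,v_2\rangle .
\end{align*}
Replacing $\|Av_1\|^2$ by $\|v_1\|^2$ and the bilinear terms by $0$ reproduces exactly the rows of $M_E$ applied to $X=(\|v_1\|^2,\|v_2\|^2,\langle v_1,v_2\rangle)$. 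Indeed, the first row gives $(1+E^2)X_1+X_2+2EX_3$, the second gives $X_1$, and the third gives $-EX_1-X_3$. Accordingly we define
\[
Y:=\bigl(\|Av_1\|^2-\|v_1\|^2-2E\langle Av_1,v_1\rangle-2\langle Av_1,v_2\rangle,\ 0,\ \langle Av_1,v_1\rangle\bigr)^T,
\]
so that $\tilde X=M_EX+Y$ holds identically.

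It remains to estimate the conditional moments of $Y$ given $v$, and hence given $X$ by the tower property and conditional Jensen. Since $A$ is independent of $v$ and $\|v_1\|,\|v_2\|\le 1$, Lemma \ref{lem:conc} applies. It gives $\mathbb E[\|Av_1\|^2-\|v_1\|^2\mid v]=O(W^{-1})$ and $\mathbb E[\langle Av_1,w\rangle\mid v]=0$ by centering. It also gives $\Var(\|Av_1\|^2\mid v)\le C/W$ and $\mathbb E[\langle Av_1,w\rangle^2\mid v]\le C/W$ for $w\in\{v_1,v_2\}$. Combining these with $(a+b+c)^2\le 3(a^2+b^2+c^2)$ yields $|\mathbb E[Y\mid v]|\le C/W$ and $\Var(Y\mid v)\le C(1+E^2)/W$. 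Conditioning down to $X$ only decreases the conditional variance in expectation. For the almost-sure statement, note that $\mathbb E[Y\mid X]$ and $\mathbb E[|Y-\mathbb E[Y\mid X]|^2\mid X]\le \mathbb E[|Y-\mathbb E[Y\mid v]|^2\mid X]+|\mathbb E[Y\mid v]-\mathbb E[Y\mid X]|^2$ are averages of quantities bounded pointwise in $v$.

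The main subtlety is bookkeeping rather than estimation. First, one must check that the identity $X\stackrel{d}{=}F(X+M_E^{-1}Y)$ holds jointly, meaning $(X,Y)$ on the right has the described conditional structure, using the coupling of $v$ with an independent $A$. Second, the denominator $(M_EX+Y)_1+(M_EX+Y)_2=\|u\|^2$ must be positive, which holds a.s. because $T_E(A)$ is invertible. The factor $(1+E^2)$ enters only through the $E\langle Av_1,v_1\rangle$ term.
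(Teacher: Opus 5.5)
Your proposal is correct and follows the paper's proof essentially line by line: the same stationarity identity, the same choice $Y=(Y_1,0,Y_3)^T$ making $\tilde X=M_EX+Y$ hold identically, and the same route of conditioning on $(v_1,v_2)$ via Lemma~\ref{lem:conc} and then passing to $X$ by the tower property and the law of total variance. The one slip is in your last display: the term $|\mathbb E[Y\mid v]-\mathbb E[Y\mid X]|^2$ should appear inside $\mathbb E[\,\cdot\mid X]$, which is harmless because it is $O(W^{-2})$ pointwise.
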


\begin{proof}
By stationarity and matrix multiplication,
\begin{align*}
X \stackrel{d}{=}
  \frac{1}{\|Av_1-Ev_1-v_2\|^2+\|v_1\|^2}
  \begin{pmatrix}
    \|Av_1-Ev_1-v_2\|^2\\
    \|v_1\|^2\\
    \langle Av_1-Ev_1-v_2,v_1\rangle
  \end{pmatrix}.
\end{align*}
Hence, if we define $Y= (Y_1,0,Y_3)^{T}$, with
\begin{align*}
  Y_1
  &:=
  \|Av_1\|^2-\|v_1\|^2-2\langle Av_1,v_2\rangle-2E\langle Av_1,v_1\rangle,\\
  Y_3
  &:=
  \langle Av_1,v_1\rangle,
\end{align*}
then algebra gives the relation
\begin{align*}
  \begin{pmatrix}
    \|Av_1-Ev_1-v_2\|^2\\
    \|v_1\|^2\\
    \langle Av_1-Ev_1-v_2,v_1\rangle
  \end{pmatrix}
  &=
  M_E X + Y.
\end{align*}
Normalizing by the sum of the first two coordinates, 
this implies the first claim.

For the conditional expectation of \(Y\), condition first on the full vector
\((v_1,v_2)\).  The linear terms in \(Y_1\) and \(Y_3\) have mean zero.  If
\(r_j=\sum_i\Var(A_{ij})\) and \(D=\operatorname{diag}(r_j)\), then
\(\E A^2=D\), and Assumption \ref{assumption:top-LE} gives
$$\E[Y_1\mid v_1,v_2]
  =
  \langle (D-\Id)v_1,v_1\rangle
  =
  O(W^{-1}\|v_1\|^2),$$
  and
  $$
  \E[Y_3\mid v_1,v_2]=0.
$$
Taking the conditional expectation over $\|v_1\|^2, \|v_2\|^2, \langle v_1,v_2\rangle,$
then gives 
$$|\E[Y\mid X]|\le C/W.$$  

For the conditional variance, Lemma~\ref{lem:conc},
implies that for any fixed $v_1,v_2$
\[
  \Var(Y\mid v_1,v_2)
  \leq
  \frac{C}{W}\left(\|v_1\|^4+\|v_1\|^2\|v_2\|^2
  +E^2\|v_1\|^4\right)
  \leq
  \frac{C(1+E^2)}{W}.
\]
Here the first term comes from
\(\|Av_1\|^2-\E[\|Av_1\|^2\mid v_1]\), and the other terms from
\(\langle Av_1,v_2\rangle\) and \(\langle Av_1,v_1\rangle\).
The law of total variance then gives
\[
  \Var(Y\mid X)
  \leq
  \E[\Var(Y\mid v_1,v_2)\mid X]
  +
  \Var(\E[Y\mid v_1,v_2]\mid X)
  \leq
  \frac{C(1+E^2)}{W},
\]
since \(\E[Y\mid v_1,v_2]\) is \(O(W^{-1})\).
\end{proof}

Now we prove the proposition. 

\begin{proof}[Proof of Proposition~\ref{prop:stat}]
Let \(\lambda_E\) and \(v_E\) be as in Lemma~\ref{lem:M-E-spectral}.
We argue that $F$ is a mapping on the space 
$$ S:=\{(x_1,x_2,x_3):x_1,x_2\geq0,\ x_1+x_2=1,\
  |x_3|\leq \sqrt{x_1x_2}\},$$
which has $v_E\in S$ as an attracting fixed point, 
and thus Lemma 3.7 implies $X$, which takes values in $S$
must concentrate its law near $v_E$. 

To implement this we introduce a function 
measuring distance 
to $v_E$, which contracts under $F$. 
Define $d:S\to \mathbb{R}_{\geq 0}$ given by 
\begin{equation}
\label{def:distance-metric}
d(x):= \sup_{k\geq 0} r_0^{k}\|M_E^k(\frac{x}{\phi_E\cdot x} - v_E)\|,
\end{equation}
where $r_0\in (\lambda_{E}^{-1},1)$ is any constant, 
$\phi_E$ is the top left eigenvector 
of $M_E$, having eigenvalues $\lambda_E$, given by 
$$\phi_E:= c_0(1, \lambda_E^{-1}, \frac{2E}{\lambda_E + 1})^T,$$
where $c_0>0$ is chosen so that $\phi_E\cdot v_E = 1$.

Briefly we note that that $d$ exists and is Lipschitz. 
Indeed, one can check that 
$\phi_E\cdot x$ is bounded uniformly away from 
$0$ for all $x\in S$, and so 
$\frac{x}{\phi_E\cdot x}-v_E$, 
exists for all $x\in S$, 
and is orthogonal to $\phi_E$. 
The space of vectors orthogonal to $\phi_E$ is preserved by $M_E$, 
and by Lemma \ref{lem:M-E-spectral}, there exist $C>0$ such that
for all $k$,
$r_0^kM_E^k$ has operator norm at most $C$ on that subspace.
This ensure $d(x)$ exists for all $x\in S$, but also implies 
$d$ is Lipschitz, since 
it is a supremum of Lipschitz maps with Lipschitz constants
bounded uniformly by $C$. 

The following claim collects the important properties 
of $d$. 

\begin{claim}
\label{lem:action-of-F}
There exists constants $C>0$ and $\rho\in (0,1)$, all depending on $E$, 
such that 
\begin{enumerate}
\item for all $x\in S$ 
$$\|x-v_E\|\leq Cd(x),$$
\item for all $x\in S,$
$$d(F(x)) \leq \rho d(x),$$
\item for all $x\in S$ and $y\in \mathbb{R}^3$ with $F(x+y)\in S$: 
$$|d(F(x+y)) - d(F(x))|\leq C\|y\|.$$
\item for all $x\in S$ and $y\in \mathbb{R}^3$ with $F(x+y)\in S$:
$$\|F(x+y)-F(x)-DF(x)y\|\leq C\|y\|^2.$$
\end{enumerate}
Recall $v_E$ is the eigenvector of $M_E$ corresponding to the largest eigenvalue. 
\end{claim}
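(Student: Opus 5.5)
The plan is to treat the four properties separately. All of them rest on two structural facts. First, on the compact set $S$ the linear functional $\phi_E\cdot x$ is bounded above and below by positive constants, which we noted when checking that $d$ is well defined. Second, $M_E$ preserves the plane $H:=\{z:\phi_E\cdot z=0\}$, and on $H$ the operators $r_0^kM_E^k$ are uniformly bounded, by Lemma \ref{lem:M-E-spectral}.

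Write $\pi(x):=\frac{x}{\phi_E\cdot x}-v_E\in H$. Then $d(x)=\sup_k r_0^k\|M_E^k\pi(x)\|$.

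\textbf{Property (1).} Taking $k=0$ gives $d(x)\geq\|\pi(x)\|$. To recover $x$ from $\pi(x)$, note that for $x\in S$ we have $x_1+x_2=1$, so
\[
x=\frac{\pi(x)+v_E}{(\pi(x)+v_E)_1+(\pi(x)+v_E)_2}.
\]
The denominator equals $1/(\phi_E\cdot x)$, which is bounded away from $0$. The map $\pi(x)\mapsto x$ is therefore Lipschitz near the relevant range, and it sends $0$ to $v_E$. This gives $\|x-v_E\|\leq C\|\pi(x)\|\leq Cd(x)$.

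\textbf{Property (2).} Scaling invariance of $\pi$ is the key. Since $\phi_E^TM_E=\lambda_E\phi_E^T$ and $M_Ev_E=\lambda_Ev_E$, we get $\pi(M_Ex)=\lambda_E^{-1}M_E\pi(x)$. Because $F(x)$ is a positive multiple of $M_Ex$ (positivity of $(M_Ex)_1+(M_Ex)_2$ on $S$ should follow from $\phi_E\cdot$ positivity, or a direct check), we have $\pi(F(x))=\lambda_E^{-1}M_E\pi(x)$. Hence
\[
d(F(x))=\lambda_E^{-1}\sup_k r_0^k\|M_E^{k+1}\pi(x)\|=\lambda_E^{-1}r_0^{-1}\sup_{k\geq1}r_0^k\|M_E^k\pi(x)\|\leq (\lambda_Er_0)^{-1}d(x).
\]
So $\rho=(\lambda_Er_0)^{-1}<1$ works, precisely by the choice $r_0>\lambda_E^{-1}$.

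\textbf{Properties (3) and (4).} Property (3) follows from the Lipschitz bound on $d$ already established, once we know $F$ is Lipschitz on the relevant region. Property (4) is Taylor's theorem for the smooth map $F$. The only issue, and the main obstacle, is that $x+y$ need not lie in $S$, and $F$ is singular where $(M_Ez)_1+(M_Ez)_2=0$.

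I would split into two cases.
\begin{itemize}
\item If $\|y\|\leq\epsilon_0$ for a small constant $\epsilon_0$, then $x+y$ lies in a compact $\epsilon_0$-neighborhood of $S$. There the denominator stays $\geq c>0$, by continuity and compactness. So $F$ is $C^2$ with bounded derivatives on that neighborhood, which gives both (3) and (4).
\item If $\|y\|>\epsilon_0$, both $F(x+y)$ and $F(x)$ lie in the bounded set $S$. Then $|d(F(x+y))-d(F(x))|\leq C\leq C\epsilon_0^{-1}\|y\|$, and similarly $\|F(x+y)-F(x)\|\leq C\epsilon_0^{-2}\|y\|^2$. Also $\|DF(x)y\|\leq C\|y\|\leq C\epsilon_0^{-1}\|y\|^2$.
\end{itemize}
This is exactly why the hypothesis $F(x+y)\in S$ is included, and it closes the argument.
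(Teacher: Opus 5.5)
Your proposal is correct and follows essentially the same route as the paper. Property (1) comes from the $k=0$ term plus undoing the normalization $x\mapsto x/(\phi_E\cdot x)$; property (2) comes from the identity $\pi(F(x))=\lambda_E^{-1}M_E\pi(x)$, giving $\rho=(\lambda_E r_0)^{-1}$; and properties (3)--(4) come from splitting into small and large $\|y\|$, using that the denominator of $F$ is bounded below on a neighborhood of $S$. The differences are organizational: the paper proves (4) first and deduces (3) from it, and in (2) you only need $(M_Ex)_1+(M_Ex)_2\neq 0$, not positivity, since $\pi$ is invariant under any nonzero rescaling.
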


\begin{proof}[Proof of claim]
For property $(1)$, we estimate 
\begin{align*}
\|x-v_E\| 
& = \|\phi_E\cdot x(\frac{x}{\phi_E\cdot x}-v_E + v_E) - v_E\|\\
& \leq | \phi_E\cdot x| \|\frac{x}{\phi_E\cdot x}-v_E\| + |\phi_E\cdot x-1|\\
& \leq C\|\frac{x}{\phi_E\cdot x}-v_E\|\\
& \leq Cd(x), 
\end{align*}
where the last inequality comes from the $k=0$ term in the definition of $d$.

For property $(2)$, compute 
\begin{align*}
d(F(x))
& = \sup_{k\geq 0}r_0^k\|M_E^k(\frac{F(x)}{\phi_E\cdot F(x)} - v_E)\|\\
& = \sup_{k\geq 0} r_0^k \|M_E^k(\frac{M_E x}{\lambda_E \phi_E\cdot x} - v_E)\|\\
&  = \sup_{k\geq 0} r_0^k \lambda_E^{-1}\|M_E^{k+1}(\frac{x}{\phi_E \cdot x} - v_E)\|\\
& \leq r_0^{-1}\lambda_E^{-1} d(x).
\end{align*}
Since we took $r_0\in (\lambda^{-1}, 1)$, we have $r_0^{-1}\lambda_E^{-1}< 1$. 

We next prove property $(4)$. Since $(M_Ex)_1+(M_Ex)_2$ is bounded
uniformly away from zero on the compact set $S$, the map $F$ has uniformly
bounded second derivatives in a fixed neighborhood of $S$. Thus, for
$\|y\|$ sufficiently small, Taylor's theorem gives
\[
\|F(x+y)-F(x)-DF(x)y\|\leq C\|y\|^2.
\]
If $\|y\|$ is bounded away from zero, the same estimate follows from the
boundedness of $F(x)$, $F(x+y)$, and $DF(x)$, since
$1+\|y\|\leq C\|y\|^2$.

Finally, property $(3)$ follows quickly from property $(4)$. If $\|y\|\leq 1$,
then the Lipschitz property of $d$ and the boundedness of $DF$ on $S$ give
\[
|d(F(x+y))-d(F(x))|
\leq C\|F(x+y)-F(x)\|
\leq C\|y\|.
\]
If $\|y\|\geq 1$, the same bound follows from the boundedness of $d$ on $S$.
\end{proof}

Using $d$ we can estimate the law of the stationary measure. 
Applying Lemma \ref{lem:recurrence}, 
and Claim \ref{lem:action-of-F}, and 
we have for $\rho\in (0,1)$ as in Claim \ref{lem:action-of-F}, and any $\epsilon>0$
\begin{align*}
\mathbb{E}d(X)^2 
& = \mathbb{E}d\left(F(X + M_E^{-1}Y)\right)^2\\
& = \mathbb{E} d\left(F(X) + (F(X + M_E^{-1}Y) - F(X))\right)^2\\
& \leq (1+\epsilon)\mathbb{E} d(F(X))^2 + C\epsilon^{-1}\mathbb{E}\|F(X + M_E^{-1}Y) - F(X)\|^2\\
& \leq (1+\epsilon)\rho^2\mathbb{E}d(X)^2 + \frac{C}{\epsilon W}.
\end{align*}
Taking $\epsilon$ small enough 
that $\rho^2(1+\epsilon)<\rho$, 
and re-arranging gives
$$\mathbb{E}d(x)^2 \leq \frac{C}{(1-\rho)W}.$$
Thus using Claim \ref{lem:action-of-F} again gives  
$$\mathbb{E}\|X-v_E\|^2 \leq C\mathbb{E}d(X)^2 \leq C/W.$$
which implies the variance estimates.

To estimate the mean, note that the above only implies 
$$\|\mathbb{E}X - v_E\| = O(W^{-1/2}).$$
To improve the error term, we estimate
\begin{align*}
\mathbb{E}X
- F\left(\mathbb{E}X\right)
& = \mathbb{E}F\left(X + M_E^{-1}Y\right) - F\left(\mathbb{E}X\right)\\
& = \mathbb{E}\left(DF(\mathbb{E}X)(X-\mathbb{E}X + M_E^{-1}Y) + O(\|X-\mathbb{E}X + M_E^{-1}Y\|^2)\right)\\
& = O(W^{-1}).
\end{align*} 
The first line followed from Lemma \ref{lem:recurrence}, 
the second followed from property $(4)$ of Claim \ref{lem:action-of-F}.
The third line follows since 
$$\|\mathbb{E}(X+ M_E^{-1}Y -\mathbb{E}X)\| = \|M_{E}^{-1}\mathbb{E}Y\| = O(W^{-1}),$$
and 
$$ \mathbb{E}\|X+ M_E^{-1}Y -\mathbb{E}X\|^2 \leq C\mathbb{E}\|X-\mathbb{E}X\|^2 + C\mathbb{E}\|M_E^{-1}Y\|^2=  O(W^{-1}).$$

Finally, since $\mathbb{E}X\in S$, and $d$ is Lipshitz,
we thus have
$$d(\mathbb{E}X) \leq d(F(\mathbb{E}X)) + C\|\mathbb{E}X-F(\mathbb{E}X)\|\leq \rho d(\mathbb{E}X) + \frac{C}{W}.$$
Re-arranging and applying property $(1)$ in Claim \ref{lem:action-of-F}
implies the result. 
\end{proof}

\subsection{Completion of the Proof of Theorem \texorpdfstring{\ref{thm:top}}{Top Exponent}}
Let $T_E(A)$ be the matrix in \eqref{def:M}. 
If $A$ satisfies Assumption \ref{assumption:top-LE}, 
the hypothesis Furstenberg's Theorem is satisfied, 
and so we have 
$$L_1(E,W) = \frac{1}{2}\mathbb{E}_{A}\mathbb{E}_{v} \log \|T_E(A)v\|^2,$$
where we have written $\mathbb{E}_{v}$ 
for the integral over $\mathbb{P}(\mathbb{R}^{2W})$
with respect to the measure $\nu_{E,W}$. 
We now resolve the expectations, 
using Lemma \ref{lem:conc} and Proposition \ref{prop:stat}. 

First we estimate the expectation over $A$. For any 
unit vector $v:= (v_1,v_2)^{T}\in \mathbb{R}^{2W}$ with $v_1\neq 0$, 
Lemma \ref{lem:log-expectation} implies
\begin{align*}
&\left|
  \E_A\log\|T_E(A)v\|^2
  -\log\E_A\|T_E(A)v\|^2
\right|\\
&\leq
  \frac{1}{\E_A\|T_E(A)v\|^2}
  \E_A\frac{
    \left(\|T_E(A)v\|^2-\E_A\|T_E(A)v\|^2\right)^2
  }{\|T_E(A)v\|^2}\\
&\leq
  \frac{C}{\|v_1\|^2}
  \left(
    \Var_A(\|Av_1\|^2)
    +\E_A|\langle Av_1,Ev_1+v_2\rangle|^2
  \right)\\
&\leq
  \frac{C}{W}
  \left(\|v_1\|^2+\|Ev_1+v_2\|^2\right)\\
&\leq \frac{C}{W}.
\end{align*}
The first inequality is Lemma~\ref{lem:log-expectation}. For the second, we
used that $\|T_E(A)v\|^2\geq\|v_1\|^2$, 
$c\leq \E_A\|T_E(A)v\|^2 \leq C$, and the identity
\[
  \|T_E(A)v\|^2-\E_A\|T_E(A)v\|^2
  =\|Av_1\|^2-\E_A\|Av_1\|^2
  -2\langle Av_1,Ev_1+v_2\rangle.
\]
The third inequality follows from Lemma~\ref{lem:conc},
and the last follows from $\|v\|=1$.
If $v_1 = 0$, the RHS can be replace by $0$ since $T_E(A)v$ 
is then deterministic. 

If 
\[
  Q_E(v):=(2+E^2)\|v_1\|^2+\|v_2\|^2
  +2E\langle v_1,v_2\rangle,
\]
then Lemma~\ref{lem:conc} gives
\[
  \E_A\|T_E(A)v\|^2=Q_E(v)+O(W^{-1}),
\]
so 
$$L_1(E,W) = \frac{1}{2}\mathbb{E}_{v}\log Q_E(v) + O(W^{-1}).$$
The expectation over $v$ distributed according 
to $\nu_{E,W}$, 
and since $Q_E(v)$ is bounded 
uniformly above and below, we can apply Lemma 3.3 again, 
and the stationary measure estimate Proposition \ref{prop:stat}
to obtain
\begin{align*}
L_1(E,W)
& = \frac{1}{2}\mathbb{E}_{v}\log Q_E(v) + O(W^{-1})\\
& = \frac{1}{2} \log \mathbb{E}_{v}Q_E(v)
  + O\left(\frac{1}{\mathbb{E}_{v}Q_E(v)}
  \mathbb{E}_{v}\frac{(Q_E(v)-\mathbb{E}_{v}Q_E(v))^2}{Q_E(v)}\right)
  + O(W^{-1})\\
& = \frac{1}{2}\log \mathbb{E}_{v}Q_E(v) + O(\operatorname{Var}(Q_E(v)))+ O(W^{-1})\\
& = \frac{1}{2}\log \lambda_E + O(W^{-1}). 
\end{align*} 
The third equality follows from the same uniform bounds. 
The last equality followed by estimating 
the variance of $Q_E(v)$ using Proposition \ref{prop:stat}, 
and using the observation that 
$\mathbb{E}Q_E(v) = \lambda_E+O(W^{-1}).$
This proves the Theorem.

\section{Malliavin Calculus for Transfer Matrices}
\label{sec:Malliavin-estimates}
Throughout the section, we fix $E\in\mathbb R$ and take $A$ to be a
normalized GOE matrix. Thus $A$ is a centered Gaussian vector in
$\mathbb R_{\mathrm{Sym}}^{W\times W}$ with
$$
  \operatorname{Cov}(A)=\frac{2}{W}\Id
$$
relative to the Hilbert--Schmidt inner product. Equivalently, 
$A = A^T$ and $(A_{xy})_{1\leq x\leq y \leq W}$ are 
indepednent $\mathcal N(0,\frac{1+\delta_{xy}}{W})$, 
variables. All constants in this
section may depend on $E$.

The main result of this section is the following.
\begin{proposition}
\label{prp:Fisher-bound}
Fix $E\in\mathbb R$. There exists $C:=C(E)$ such that,
if $W$ is sufficiently large, 
$A$ is a normalized GOE matrix and $\mu$ is the law of $T_E(A)$, 
given in \eqref{def:M},
then $\mu^{\ast 1000}$ is absolutely continuous with respect 
to Haar measure $dm$ on $\Sp_{2W}(\mathbb{R})$, 
and
$$
  \sup_{\substack{M\in \mathfrak{sp}_{2W}\\ \|M\|_{HS}=1}}
  \|X_M\frac{d\mu^{*1000}}{dm}\|_{L^1(\Sp_{2W}(\mathbb R))}^2\leq CW .
$$
\end{proposition}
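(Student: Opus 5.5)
We follow the Malliavin-calculus strategy from the proof sketch. Write $N_0=1000$, $a=(a_1,\dots,a_{N_0})\in\mathcal H_W:=(\R^{W\times W}_{sym})^{N_0}$ with the product Gaussian measure $\gamma$ (covariance $\frac2W\Id$ in each block), and $F_E(a)=T_E(a_1)\cdots T_E(a_{N_0})$. Pull $DF_E(a)$ back to the Lie algebra: set $\mathcal D(a):=F_E(a)^{-1}DF_E(a):\mathcal H_W\to\mathfrak{sp}_{2W}$. Concretely, a variation $b_j$ in $a_j$ contributes
\[
P_j(a)^{-1}\begin{pmatrix}0&0\\ b_j&0\end{pmatrix}'P_j(a),
\]
where $P_j(a)=T_E(a_{j+1})\cdots T_E(a_{N_0})$ and the middle factor is $T_E(a_j)^{-1}\partial T_E$, a nilpotent symmetric-block element of $\mathfrak{sp}$. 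The Malliavin covariance is $\Gamma(a)=\mathcal D(a)\,\Sigma\,\mathcal D(a)^*$ on $\mathfrak{sp}_{2W}$ (HS inner product), where $\Sigma=\frac2W\Id$.

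The first step is the abstract lifting lemma (the Lemma \ref{lem:abstract-gaussian-lifting} alluded to in the sketch). For $M\in\mathfrak{sp}_{2W}$, define $\tilde X_M(a):=\Sigma\mathcal D(a)^*\Gamma(a)^{-1}M$, so that $\mathcal D(a)\tilde X_M(a)=M$, i.e.\ $DF_E\tilde X_M=X_M\circ F_E$. For any test function $\phi$ on $\Sp_{2W}$,
\[
\int X_M\phi\,d\mu^{*N_0}=\int \tilde X_M(\phi\circ F_E)\,d\gamma=-\int \phi\circ F_E\;\operatorname{div}_\gamma\tilde X_M\,d\gamma .
\]
Hence $\|X_Mf\|_{L^1}\le \E_\gamma|\operatorname{div}_\gamma \tilde X_M|$, and this also gives absolute continuity once $\Gamma$ is a.s.\ invertible (a nondegeneracy argument together with $\dim\mathcal H_W\ge\dim\Sp_{2W}$, which holds since $1000\cdot W(W+1)/2\gg W(2W+1)$). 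The Gaussian divergence is
\[
\operatorname{div}_\gamma\tilde X=\langle \tilde X,\Sigma^{-1}a\rangle-\operatorname{tr}D\tilde X .
\]
The first term is a Skorokhod-type term with second moment of order $\E\langle\Gamma^{-1}M,M\rangle$ plus corrections. The trace term involves $D\mathcal D$ and $D\Gamma^{-1}=-\Gamma^{-1}(D\Gamma)\Gamma^{-1}$.

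The second step is to bound these quantities. Since $\|T_E(a_j)\|\le C$ with overwhelming probability (by operator-norm bounds for GOE), $\mathcal D$ and its derivatives have operator norms $O(1)$ on high-probability events, with moments controlled elsewhere. The goal is then
\[
\E\big[\|\Gamma(a)^{-1}\|^p\big]\le C_p
\]
in operator norm on $\mathfrak{sp}_{2W}$. Given this, $\E|\langle\tilde X_M,\Sigma^{-1}a\rangle|^2\lesssim \E\langle \mathcal D^*\Gamma^{-1}M,\Sigma\mathcal D^*\Gamma^{-1}M\rangle\cdot O(1)\lesssim \E\|\Gamma^{-1}\|\,\|M\|_{HS}^2$. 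The trace term is a sum over $\dim\mathcal H_W\sim W^2$ directions, each contributing $O(W^{-1}\cdot W^{-1/2})$-sized terms after accounting for the $\Sigma$ factor. Careful bookkeeping should give total $O(\sqrt W)$, which is the claimed $\|X_Mf\|_{L^1}^2\le CW$. I expect the trace term to produce the $\sqrt W$ loss, since $D\Gamma$ in $W^2$ directions sums coherently only in the worst case. To get this I would bound it by Cauchy--Schwarz via HS norms, $|\operatorname{tr}(\cdot)|\le \|\cdot\|_{HS}\sqrt{\operatorname{rank}}$ with rank $\lesssim W^2$ and HS size $W^{-1/2}$.

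The main obstacle is the least singular value estimate $\langle\Gamma(a)M,M\rangle\ge c\|M\|_{HS}^2$ uniformly over unit $M\in\mathfrak{sp}_{2W}$ with high probability. In coordinates,
\[
\langle\Gamma M,M\rangle=\frac2W\sum_j\big\|\pi_{sym}\big(\text{lower-left block of }P_jMP_j^{-1}\big)\big\|_{HS}^2 .
\]
Each single $j$ only sees the lower-left $W(W+1)/2$ coordinates of a conjugated $M$. Conjugating through a few further transfer matrices rotates the other blocks of $M$ into the lower-left position: e.g.\ $T^{-1}\begin{pmatrix}X&Y\\Z&-X^T\end{pmatrix}T$ moves $X$ and $Y$ into the lower-left block via terms $aX$, $Ya$, and $a\,Y\,a$. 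I would first prove a deterministic/averaged statement: for fixed unit $M$, $\E\langle\Gamma M,M\rangle\ge c$ and in fact $\mathbb P(\langle\Gamma M,M\rangle\le c)\le e^{-cW^2}$, using Gaussian concentration (Hanson--Wright for the quadratic forms in the $a$'s, with many independent indices $j$ giving the product of small-ball probabilities). Then I would take an $\epsilon$-net of the unit sphere in $\mathfrak{sp}_{2W}$, of size $e^{CW^2}$, and use $\|\Gamma\|\le C$ to beat the union bound. This is the step where $N_0=1000$ is needed: using many independent blocks amplifies the exponent past $CW^2$. The tail of $\|\Gamma^{-1}\|$ needed for the moment bound follows by also handling smaller thresholds $c\to t$ with probability $t^{cW^2}$ type bounds.
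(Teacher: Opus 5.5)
Your architecture is the paper's: the left-trivialized derivative $\mathcal D=\mathcal B$, the lift $\tilde X_M=\mathcal B^*\widehat\Gamma^{-1}M$ with $\widehat\Gamma=\mathcal B\mathcal B^*$, Gaussian integration by parts, and a least-singular-value bound. However, both quantitative steps fail as you describe them.

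\textbf{Least singular value.} The bound $\mathbb P(\langle\widehat\Gamma M,M\rangle\le c)\le e^{-cW^2}$ for each fixed unit $M$ is false for low-rank $M$. Take $P=e_1e_1^T/\sqrt2$ and $U=R=0$. Then $R_{999}=PC_{1000}+C_{1000}P^T$ depends only on the $W$ Gaussian coordinates of $a_{1000}e_1$. The same persists, up to bounded-rank errors, at each later step, so the small-ball probability is only $e^{-O(W)}$. Independent blocks merely multiply this exponent by a constant, so no choice of $N_0$ beats a net of cardinality $e^{CW^2}$. The paper handles this in three steps.
\begin{enumerate}
\item It uses the recursion for $(P_i,U_i,R_i)$ to reduce $\Lambda_0$ to $s_{\min}$ of the linear operator $\mathcal L_a(P)=(G_jP+P^TG_j)_{j=1}^{128}$, with $G_j$ independent GOE. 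Differences of the $C_i$ cancel the $E\Id$ shift, and $U,R$ are recovered deterministically.
\item It stratifies the sphere by singular-value profiles $Q_{\mathbf k}$, so a rank-$r$ stratum needs a net of size only about $e^{CWr\log(1/\epsilon)}$. This is matched by small-ball bounds from Lemma~\ref{lem:LP-singular-value-amplification}: $L_P$ has $\gtrsim W\cdot\#\{j:\sigma_j(P)\ge s\}$ singular values $\gtrsim s$, i.e.\ $AP$ cannot be nearly antisymmetric. A pigeonhole argument then chooses the scale.
\item At constant scale, where net constants do not close, it uses strong convergence to get $\mathbb E\, s_{\min}(\mathcal L)\ge c$, plus Gaussian concentration.
\end{enumerate}
The number of blocks matters only through these constants ($m_0=128$).

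\textbf{Divergence step.} First, your normalization is off. With $\Gamma=\mathcal D\Sigma\mathcal D^*$ and $\Sigma=\frac2W\Id$ one has $\|\Gamma\|\lesssim W^{-1}$, so $\mathbb E\|\Gamma^{-1}\|^p\le C_p$ cannot hold. The $O(1)$ statements are for $\widehat\Gamma$, and then the Skorokhod part already contributes order $W$, namely $\frac W2\mathbb E\|\tilde X_M\|^2\le\frac W2\mathbb E\Lambda_0$.

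More seriously, $\|D\tilde X_M\|_{HS}$ is typically of order $\sqrt W$, not $W^{-1/2}$. Already $k\mapsto NQ\,\mathfrak n(k)\,Q^{-1}$ with $\|N\|_{HS}=1$ has Hilbert--Schmidt norm comparable to $\|NQ\|_{HS}\|Q^{-1}\|_{HS}\sim\sqrt W$. So $|\operatorname{tr}D\tilde X_M|\le\sqrt{\operatorname{rank}}\,\|D\tilde X_M\|_{HS}$ with rank $\sim W^2$ gives only $W^{3/2}$. That yields $\|X_Mf\|_{L^1}^2\lesssim W^3$ and gaps $\gtrsim W^{-3}$. Splitting the divergence is itself lossy, because the ``corrections'' in the second moment of $\langle\tilde X_M,\Sigma^{-1}a\rangle$ include $\mathbb E(\operatorname{tr}D\tilde X_M)^2$.

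The missing idea is Lemma~\ref{lem:abstract-gaussian-lifting}: integrate by parts on the square of the whole divergence. In whitened coordinates, $\mathbb E|\operatorname{div}V-Z\cdot V|^2=\mathbb E|V|^2+\mathbb E\operatorname{tr}((DV)^2)\le\mathbb E|V|^2+\mathbb E\|DV\|_{HS}^2$, so no trace is ever estimated. Then $\|\tilde X_M\|^2\le\Lambda_0$, $\|D\tilde X_M\|_{HS}\le2\Lambda_0\Lambda_2$ and $\Lambda_2\lesssim\sqrt W(1+\max_i\|a_i\|_{op})^C$ make both terms $O(W)$.
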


We emphasize that by dimension counting, 
none of 
$\mu, \mu^{\ast 2}, \mu^{\ast 3}$ can be 
absolutely continuous with respect to Haar measure. 
In this sense, 
Proposition \ref{prp:Fisher-bound} 
can be interpreted as a quantitative 
hypoellipticity estimate.
Indeed, $\mu^{\ast 1000}$ is the law of the random variable 
\begin{equation}
F_E(A_1,...,A_{1000}) := T_E(A_1)T_E(A_2)\cdots T_E(A_{1000}).
\end{equation}
$\Sp_{2W}(\mathbb{R})$ has dimension $2W^2 + W$, 
but wiggling any given $A_i$ above only moves $F_E(A_1,...,A_{1000})$
on a $W(W+1)/2$ dimensional submanifold. 
Hence for $\mu^{\ast 1000}$ to even have a density, much less one with a gradient in $L^1$, 
one needs to exploit the randomness coming from many $A_i$, and, in particular, 
verify that each $A_i$ wiggles $F_E$ in directions which are transverse to 
those generated by other $A_j$. A similar problem is encountered 
in degenerate diffusions and the theory of hypoellipticity 
\cite{Hormander1967,KusuokaStroock1985,Hairer2011,CoopermanRowan2025}.
In our setting, getting quantitative estimates which are sharp in $W$ 
is an added difficulty.

\subsection{Conventions and Basic Facts}
\label{sec:malliavin-conventions}
Here we collect some definitions and basic facts. 
Define the vector space $\mathcal H_W$ by
\begin{equation}
\label{def:H-sub-W}
  \mathcal{H}_W:=(\mathbb{R}_{\mathrm{Sym}}^{W\times W})^{1000},
\end{equation}
and endow it with the 
standard inner product structure that makes it
isometric to $\mathbb{R}^{1000W(W+1)/2}$ with the standard Euclidean inner product. 

All calculations with symmetric matrices $A_i$
could be done in the standard Euclidean space $\mathbb{R}^{W(W+1)/2}$, 
however, we find it more convenient to work with the symmetric matrix representation. 
In this representation, the inner product $\langle \cdot, \cdot \rangle_{\mathcal H_W}$, 
of $(H_1,\cdots, H_{1000}), (K_1,\cdots, K_{1000})\in \mathcal H_W$, is given by 
\begin{equation}
\label{def:inner-product-H}
\langle (H_1,\ldots,H_{1000}), (K_1,\ldots,K_{1000})\rangle_{\mathcal H_W}
= \sum_{i=1}^{1000} \operatorname{Tr}(H_iK_i).
\end{equation}
Divergences of vector fields or gradients of functions on $\mathcal H_W$,
are defined using \eqref{def:inner-product-H}, 
and the standard identification between a vector space and 
its tangent space. For example, if
$X:\mathcal H_W\to \mathcal H_W$ is a smooth vector field, then
for each $A\in \mathcal H_W$ we view $DX(A):\mathcal H_W\to\mathcal H_W$
as a linear mapping from $\mathcal H_W$ to $\mathcal H_W$
and define the divergence of $X$, $\operatorname{div} X:\mathcal H_W\to \mathbb{R}$ by
\begin{equation}
\label{def:divergence}
  \left(\operatorname{div}X\right)(A)=\operatorname{Tr}(DX(A))
  =
\sum_\alpha \langle DX(A)e_\alpha,e_\alpha\rangle_{\mathcal H_W},
\end{equation}
where $(e_\alpha)$ is an orthonormal basis for $\mathcal H_W$. 
Similarly, adjoints, $\ell^2$ norms, and Hilbert--Schmidt norms on
$\mathcal H_W$ are induced by \eqref{def:inner-product-H}. For example 
given any linear mapping $L:\mathcal H_W\to \mathcal H_W$, 
its Hilbert Schmidt norm is given by 
\begin{equation}
\label{def:HS-norm}
\|L\|_{HS(\mathcal H_W)}^2
= \operatorname{Tr}(LL^T) = 
\sum_\alpha \|Le_\alpha\|_{\mathcal H_W}^2.
\end{equation}

We also use some basic facts about $\Sp_{2W}(\mathbb{R})$ and 
its Lie algebra, see for example
\cite{Hall2015}. 
First, we can write the Lie algebra 
of
$\Sp_{2W}(\mathbb R)$, which is just the tangent space at the identity, as
the vector space
\begin{equation}
\label{def:Lie-Algebra-representation}
\mathfrak{sp}_{2W}
  =
  \left\{
  \begin{pmatrix}
    P & Q\\
    R & -P^T
  \end{pmatrix}
  : P,Q,R\in \mathbb{R}^{W\times W},\ Q=Q^T,\ R=R^T
  \right\},
\end{equation}
equipped with the standard inner product
\begin{equation}
\label{def:lie-algebra-inner-product}
\langle M,N\rangle_{\mathfrak{sp}_{2W}}
  :=
\operatorname{Tr}(M^TN). 
\end{equation}
Further for any $g\in \Sp_{2W}(\mathbb{R})$ we recall that  
$\operatorname{Ad}_{g}:\mathfrak{sp}_{2W}\to \mathfrak{sp}_{2W}$ is the adjoint action of $g$ on the Lie algebra,
given by 
\begin{equation}
\label{def:adjoint-action}
\operatorname{Ad}_g M = gMg^{-1}.
\end{equation}
Finally, we let $L_g:\Sp_{2W}(\mathbb{R})\to \Sp_{2W}(\mathbb{R})$ denote left multiplication by $g$, 
and $d(L_g)_h:T_{h}\Sp_{2W}(\mathbb{R})\to T_{gh}\Sp_{2W}(\mathbb{R})$ 
be the derivative of $L_g$ at $h$. 
Adjoints and operator norms involving $\mathfrak{sp}_{2W}$ 
are always defined relative to the inner product above. 

\subsection{Lifting Vector Fields}
We will use the following lemma which essentially follows
by Gaussian integration by parts. Given 
a Gaussian $Z\in \mathbb{R}^d$, 
it lets us estimate the derivatives 
of the density of the random variable $F(Z)\in \Sp_{2W}(\mathbb{R})$
in some direction $X$
by the derivatives of the Gaussian density of $Z$
in some direction $V$. 
The equation
\eqref{def:lifting-relation} below 
means the flow generated by \(V\) in $\mathbb{R}^d$ 
is mapped to the flow of \(X\) in $\Sp_{2W}(\mathbb{R})$ 
after applying \(F\).

\begin{lemma}
\label{lem:abstract-gaussian-lifting}
Let $W, d\in \mathbb{N}$, \(Z\in\mathbb{R}^d\) be a centered 
Gaussian vector with positive definite covariance $\Sigma$,
$F:\R^d\longrightarrow \Sp_{2W}(\R)$
be a smooth mapping, and $\mu$ be the law of 
\(F(Z)\) on $\Sp_{2W}(\mathbb{R})$. 
Then for any left-invariant vector field \(X\) on \(\Sp_{2W}(\R)\)
and \(V\in W^{1,1}_{loc}(\R^d;\R^d)\) satisfying
\begin{equation}
\label{def:lifting-relation}
DF(z)V(z)=X(F(z))
\end{equation}
for a.e. $z\in \mathbb{R}^d$, we have
\begin{equation}
\label{ineq:abstract-gaussian-lifting}
\|X\mu\|_{TV(\Sp_{2W}(\R))}
\leq
\left(
\E\left\|\Sigma^{-1/2}V(Z)\right\|_{2}^2
+
\E\left\|
  \Sigma^{-1/2}DV(Z)\Sigma^{1/2}
\right\|_{\mathrm{HS}}^2
\right)^{1/2}.
\end{equation}
The LHS can be replaced by  $\|X\frac{d\mu}{dm}\|_{L^1(\Sp_{2W}(\R))},$
whenever $\mu$ is absolutely continuous with respect 
to the Haar measure $dm$. 
\end{lemma}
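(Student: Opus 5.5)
We prove Lemma \ref{lem:abstract-gaussian-lifting} by duality: we pair $X\mu$ against a test function on $\Sp_{2W}(\R)$, move the derivative through $F$ using \eqref{def:lifting-relation}, and then integrate by parts against the Gaussian density on $\R^d$. Concretely, the TV norm of the distribution $X\mu$ is
\[
\|X\mu\|_{TV}=\sup\Big\{\langle X\mu,\phi\rangle:\ \phi\in C_c^\infty(\Sp_{2W}(\R)),\ \|\phi\|_\infty\le 1\Big\}.
\]
The pairing is defined by $\langle X\mu,\phi\rangle:=-\int X\phi\,d\mu$, matching the integration-by-parts sign used in the proof of Theorem \ref{thm:Fisher-gaps}. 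When $\mu=f\,dm$, left invariance of Haar measure gives $\int (X\phi) f\,dm=-\int \phi\, Xf\,dm$. Hence this supremum equals $\|Xf\|_{L^1(dm)}$, which also proves the final sentence of the lemma.

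Fix such a $\phi$. By the chain rule, for a.e. $z$,
\[
\nabla(\phi\circ F)(z)\cdot V(z)=d\phi_{F(z)}(DF(z)V(z))=(X\phi)(F(z)).
\]
Therefore $\int X\phi\,d\mu=\E[\nabla(\phi\circ F)(Z)\cdot V(Z)]$. Let $p(z)\propto e^{-\frac12\langle\Sigma^{-1}z,z\rangle}$ be the density of $Z$. Integrating by parts in $\R^d$,
\[
\E[\nabla(\phi\circ F)\cdot V]=-\int (\phi\circ F)\,\operatorname{div}(pV)\,dz=-\E\big[(\phi\circ F)(Z)\,\big(\operatorname{div}V(Z)-\langle \Sigma^{-1}Z,V(Z)\rangle\big)\big],
\]
using $\nabla p=-p\,\Sigma^{-1}z$. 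The boundary terms and the $W^{1,1}_{loc}$ regularity of $V$ are handled by a standard cutoff and mollification: multiply by $\chi(z/R)$, let $R\to\infty$, and use dominated convergence, since $\phi\circ F$ is bounded. We may assume the right-hand side of \eqref{ineq:abstract-gaussian-lifting} is finite, as otherwise there is nothing to prove. Taking $\|\phi\|_\infty\le1$ then gives $|\langle X\mu,\phi\rangle|\le \E|\delta(V)|$, where $\delta(V):=\langle\Sigma^{-1}Z,V\rangle-\operatorname{div}V$ is the Skorokhod divergence.

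It remains to bound $\E|\delta(V)|\le (\E\delta(V)^2)^{1/2}$ by the stated quantity. This is the main step, and it is the standard $L^2$ isometry for the Skorokhod integral. Change variables $Z=\Sigma^{1/2}G$ with $G$ standard, and set $U(G):=\Sigma^{-1/2}V(\Sigma^{1/2}G)$. One checks that $\delta(V)=\langle G,U\rangle-\operatorname{div}_G U$, and that $D_GU=\Sigma^{-1/2}DV\,\Sigma^{1/2}$. Gaussian integration by parts, applied twice for smooth compactly supported $U$ and then extended by approximation, gives
\[
\E\big(\langle G,U\rangle-\operatorname{div}U\big)^2=\E\|U\|^2+\E\operatorname{Tr}\big((DU)^2\big)\le \E\|U\|^2+\E\|DU\|_{HS}^2,
\]
where the last step is Cauchy--Schwarz, $|\operatorname{Tr}(B^2)|\le\|B\|_{HS}^2$. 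Substituting back gives exactly \eqref{ineq:abstract-gaussian-lifting}. The only delicate point is justifying this identity under the weak regularity $V\in W^{1,1}_{loc}$. We would handle it by truncating and mollifying $U$, noting that the right-hand side converges by the finiteness assumption, and passing to the limit using Fatou's lemma on the left.
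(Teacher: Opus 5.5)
Your proposal is correct and follows essentially the same route as the paper. You pair $X\mu$ with test functions, use the lifting relation to replace $(X\phi)\circ F$ by $V\cdot\nabla(\phi\circ F)$, and integrate by parts against the Gaussian density to reach $\E|\operatorname{div}V-\langle\Sigma^{-1}Z,V\rangle|$. After whitening, Jensen and a second Gaussian integration by parts give $\E\|U\|^2+\E\operatorname{Tr}((DU)^2)\le \E\|U\|^2+\E\|DU\|_{HS}^2$.

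Two small remarks. Your cutoff argument for the first integration by parts is slightly cleaner than the paper's ``assume $V$ smooth'', since mollifying $V$ would not preserve the lifting relation. Also, moving $X$ onto $f$ uses right- rather than left-invariance of $m$, which is harmless because $\Sp_{2W}(\R)$ is unimodular.
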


\begin{remark}
One can think of the right side of \eqref{ineq:abstract-gaussian-lifting}
as a Gaussian \(H^1\)-norm of \(V\).  
\end{remark}

We view $X\mu$ as a distribution on $\Sp_{2W}(\mathbb{R})$, and recall that the 
total variation norm on such distributions 
is given by 
$$\|\nu\|_{TV(\Sp_{2W}(\mathbb{R}))} = \sup_{\stackrel{\phi\in C^{\infty}_c(\Sp_{2W}(\mathbb{R}))}{\|\phi\|_{\infty}\leq 1}}\int_{\Sp_{2W}(\mathbb{R})}\phi(g) \nu(dg).$$

\begin{proof}
By a standard regularization argument, we can assume $V$ is smooth. 
Hence for any \(\phi\in C_c^\infty(\Sp_{2W}(\R))\),
we have
\begin{align*}
\langle X\mu,\phi \rangle
  & = -\int_{\Sp_{2W}(\R)}(X\phi)(g)\mu(dg)\\
  & = -\int_{\mathbb{R}^d}(X\phi)(F(z))\rho_\Sigma(z)\,dz\\
  & = -\int_{\mathbb{R}^d}(V\phi\circ F)(z)\rho_\Sigma(z)\,dz\\
  & = \int_{\mathbb{R}^d}(\phi\circ F)(z)\operatorname{div}(\rho_\Sigma V)(z)\,dz\\
  & \leq \|\operatorname{div}(\rho_{\Sigma}V)\|_{L^1(\mathbb{R}^d)}\|\phi\|_{L^{\infty}(\Sp_{2W}(\mathbb{R}))},
\end{align*}
where $\rho_{\Sigma}$ denotes the density of $Z$. 
The first equality followed by the definition of the distribution $X\mu$. 
The second equality followed by the definition 
of $\mu$ and $\rho_{\Sigma}$. The third equality 
by \eqref{def:lifting-relation} 
and its implication that 
\[
  (V\phi\circ F)(z) = (X\phi)(F(z))
\]
for almost every \(z\),  
and the fourth by standard integration by parts on $\mathbb{R}^d$. 
Taking the supremum over $\|\phi\|_{L^{\infty}(\Sp_{2W}(\mathbb{R}))} \leq 1$,
we have 
\[
  \|X\mu\|_{TV(\Sp_{2W}(\mathbb{R}))}
  \leq
  \|\operatorname{div}(\rho_{\Sigma}V)\|_{L^1(\mathbb{R}^d)}.
\]
It remains to estimate the RHS. 

By making the change 
of variables \(Z\to\Sigma^{-1/2}Z\) we can assume, without loss of 
generality that $Z$ is the standard Gaussian vector, 
so that $\Sigma = \operatorname{I}_d$, and we have
$$\rho_{\Sigma}(z) = \frac{1}{(2\pi)^{d/2}}e^{-\frac{1}{2}\|z\|^2}.$$
For such \(Z = (Z_1,\cdots, Z_d)\), Gaussian integration by parts implies
$$\E[Z_i h(Z)]=\E[\partial_i h(Z)].$$

Computing the divergence, we have
$$\|\operatorname{div}(\rho_\Sigma V)\|_{L^1(\mathbb{R}^d)} = \|\rho_{\Sigma}(\operatorname{div}V - \langle z, V\rangle)\|_{L^1(\mathbb{R}^d)} = \mathbb{E}|\operatorname{div}V - \langle z, V\rangle|.$$
Applying the above and Jensen's inequality, then Gaussian integration by parts gives
\[
\begin{aligned}
  \|\operatorname{div}(\rho_{\Sigma}V)\|^2_{L^1(\mathbb{R}^d)}
  & \leq \E|\operatorname{div}V(Z)-Z\cdot V(Z)|^2\\
  &=
  -\E\left[
    V(Z)\cdot \nabla\left(\operatorname{div}V(Z)-Z\cdot V(Z)\right)
  \right]\\
  &=
  \E|V(Z)|^2
  -
  \E\sum_{i,j=1}^d
  V_j(Z)\partial_j\partial_i V_i(Z)\\
  &\quad+
  \E\sum_{i,j=1}^d
  Z_iV_j(Z)\partial_j V_i(Z)\\
  &=
  \E|V(Z)|^2
  +
  \E\sum_{i,j=1}^d
  \partial_i V_j(Z)\partial_j V_i(Z)\\
  &\leq
  \E|V(Z)|^2+\E\|DV(Z)\|_{\mathrm{HS}}^2 .
\end{aligned}
\]
Note that by a regularization argument we can assume $V$ 
is twice differentiable in the above computation.  

Combining with the above estimate gives \eqref{ineq:abstract-gaussian-lifting}. 
If $\mu\ll dm$, then the conclusion follows since 
$$\|X\mu\|_{TV(\Sp_{2W}(\mathbb{R}))} = \|X\frac{d\mu}{dm}\|_{L^1(\Sp_{2W}(\mathbb{R}))}.$$
\end{proof}

\subsection{Estimates on Operators} 
Here we collect the estimates used in the proof 
of Proposition \ref{prp:Fisher-bound}. The reader is encouraged 
to look at the proof of Proposition \ref{prp:Fisher-bound} 
before reading this section. However, to motivate Proposition \ref{prp:operator-estimates}
below,
we remark that we will apply Lemma \ref{lem:abstract-gaussian-lifting} with $F=F_E$, 
defined in \eqref{eq:def-F-mapping}, $X=X_M$,
and the lifting vector field $V = \tilde{X}_M$ given by 
$$\tilde{X}_M(a) = DF_E(a)^{\ast}\Gamma(a)^{-1}X_M(F_E(a)).$$
Here $\Gamma(a)$ is the so called \textit{Malliavin covariance matrix} as defined in \eqref{def:Gamma}.
Therefore, to estimate $\|\tilde{X}_M(a)\|_{\mathcal H_W}$ and $\|D\tilde{X}_M(a)\|_{HS(\mathcal H_W)}$ 
we need some estimates on $\Gamma(a)^{-1}$ and $DF_E(a)$, and their derivatives, which is 
the content of Proposition \ref{prp:operator-estimates}.

We use the mappings $T_E:\mathbb R_{\mathrm{Sym}}^{W\times W}\to
\Sp_{2W}(\mathbb R)$ and $F_E:\mathcal H_W\to\Sp_{2W}(\mathbb R)$ given by
\begin{equation}
\label{def:T-mapping}
T_E(A):=
\begin{pmatrix}
  A-E\Id_W&-\Id_W\\
  \Id_W&0
\end{pmatrix},
\end{equation}
\begin{equation}
\label{eq:def-F-mapping}
F_E(a_1,\ldots,a_{1000}):=T_E(a_1)\cdots T_E(a_{1000}).
\end{equation}
It is more convenient to work with the left-trivialized 
versions of these operators and their derivatives, 
which we now introduce. 

Define
$\mathcal B(a):
  \mathcal H_W\to \mathfrak{sp}_{2W}$, given by 
\begin{equation}
\label{def:B-of-a}
\mathcal B(a):=
  d(L_{F_E(a)^{-1}})_{F_E(a)}DF_E(a)
\end{equation}
and $\widehat{\Gamma}(a):=
  \mathfrak{sp}_{2W}\to \mathfrak{sp}_{2W},$ given by
\begin{equation}
\label{def:hat-Gamma}
  \widehat{\Gamma}(a):=\mathcal B(a)\mathcal B(a)^*
\end{equation}
$\mathcal B(a)$ is the left-trivialization of the derivative $DF_E(a)$, 
and $\widehat{\Gamma}(a)$ is the left-trivialization of the Malliavin 
covariance given by 
\begin{equation}
\label{def:Gamma}
  \Gamma(a):=DF_E(a)DF_E(a)^*. 
\end{equation}
For the proof of Proposition \ref{prp:Fisher-bound} 
we will need estimates on the following quantities. 
For any $a\in \mathcal H_W$ define
\begin{equation}
\label{def:Lambda-left-trivialized}
\begin{aligned}
  \Lambda_0(a)&:=\|\widehat{\Gamma}(a)^{-1}\|_{\mathfrak{sp}_{2W}\to
  \mathfrak{sp}_{2W}},\\
  \Lambda_1(a)&:= \|\mathcal B(a)\|_{\mathcal H_W\to \mathfrak{sp}_{2W}},\\
  \Lambda_2(a)&:=
  \sup_{\|N\|_{\mathfrak{sp}_{2W}}=1}
  \|D(\mathcal B^*)(a)N\|_{HS(\mathcal H_W,\mathcal H_W)}\\
  &\quad+
  \sup_{\|h\|_{\mathcal H_W}=1}
  \|D\mathcal B(a)[\,\cdot\,]h\|_{HS(\mathcal H_W,\mathfrak{sp}_{2W})}.
\end{aligned}
\end{equation}
We briefly explain the domains and codomains in the norms of
\eqref{def:Lambda-left-trivialized}. Since
$\mathcal B:\mathcal H_W\to
\mathcal L(\mathcal H_W,\mathfrak{sp}_{2W}),$ 
its derivative at \(a\) is a linear map
\[
  D\mathcal B(a):
  \mathcal H_W\to
  \mathcal L(\mathcal H_W,\mathfrak{sp}_{2W}).
\]
Here we have used the standard identification of a vector 
space with its tangent space. Thus, for \(k\in\mathcal H_W\), \(D\mathcal B(a)[k]\) is a
linear operator from \(\mathcal H_W\) to \(\mathfrak{sp}_{2W}\). For fixed
\(h\in\mathcal H_W\), we write \(D\mathcal B(a)[\,\cdot\,]h\) for the
linear map
\[
  k\longmapsto D\mathcal B(a)[k]h.
\]
Similarly, the derivative of $\mathcal B^{\ast}$ at $a$
is the linear mapping
$$D(\mathcal B^*)(a):
  \mathcal H_W\to
  \mathcal L(\mathfrak{sp}_{2W},\mathcal H_W).$$
Thus for fixed \(N\in\mathfrak{sp}_{2W}\), we can regard
$D(\mathcal B^*)(a)N:\mathcal H_W\to \mathcal H_W$ 
as the linear operator
\[
  h\longmapsto D(\mathcal B^*)(a)[h]N. 
\]
Although this is a bit abstract, 
the claims below give concrete matrix formulas for all 
these objects, and those formulas are what we use 
to analyze them. The proposition below estimates these
norms, together with the operator norms of
$\mathcal{B}(a)$ and $\widehat{\Gamma}(a)^{-1}$.

\begin{proposition}
\label{prp:operator-estimates}
Let $E\in\mathbb R$, $W\in\mathbb N$, and let $a\in\mathcal H_W$ have the
law of $1000$ independent normalized GOE matrices. For each $p>0$ there
exists $C:=C(E,p)$
such that for all $W$ large enough, depending on $p$,
\begin{equation}
\label{ineq:Lambda-moments}
  \mathbb E_a\left[
    \Lambda_0(a)^p+\Lambda_1(a)^p+W^{-p/2}\Lambda_2(a)^p
  \right]\leq C. 
\end{equation}
\end{proposition}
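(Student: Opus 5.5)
The plan is to first write $\mathcal B(a)$ in closed form, then treat $\Lambda_1,\Lambda_2$ by crude norm bounds, and $\Lambda_0$ by a small-ball estimate combined with an $\epsilon$-net, which is the real difficulty.

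\textbf{Explicit formula for $\mathcal B(a)$.} For $i\le 1000$ put $G_i:=T_E(a_{i+1})\cdots T_E(a_{1000})$, so $G_{1000}=\Id$. Since $T_E(A)^{-1}=\begin{pmatrix}0&\Id\\-\Id&A-E\end{pmatrix}$ and $DT_E(A)[h]=\begin{pmatrix}h&0\\0&0\end{pmatrix}$, we have
\[
T_E(a_i)^{-1}DT_E(a_i)[h_i]=\mathcal N(h_i):=\begin{pmatrix}0&0\\-h_i&0\end{pmatrix}.
\]
The product rule then gives
\[
\mathcal B(a)h=\sum_{i=1}^{1000}\operatorname{Ad}_{G_i^{-1}}\mathcal N(h_i),\qquad
\mathcal B(a)^*N=\Big(-\operatorname{sym}\big[(\operatorname{Ad}_{G_i^{-1}}^*N)_{21}\Big]\Big)_{i\le 1000},
\]
where $(\cdot)_{21}$ is the lower-left block and $\operatorname{sym}$ the symmetric part. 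Everything is therefore a polynomial in the $a_j$ through the $G_i$ and $G_i^{-1}$.

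\textbf{Bounds on $\Lambda_1$ and $\Lambda_2$.} The basic input is that $\|T_E(a_j)\|,\|T_E(a_j)^{-1}\|\le C(1+\|a_j\|_{op})$. The GOE bound $\mathbb P(\|a_j\|_{op}\ge 3+t)\le e^{-cWt^2}$ gives all moments of $\max_i\|G_i\|^{\pm}$ uniformly in $W$.
\begin{itemize}[leftmargin=*]
\item For $\Lambda_1$, we have $\|\operatorname{Ad}_{G}N\|_{HS}\le\|G\|\|G^{-1}\|\|N\|_{HS}$, so $\Lambda_1\le C\prod_j(1+\|a_j\|)^{C}$ and all its moments are bounded.
\item For $\Lambda_2$, differentiating $G_i^{-1}$ in $a_j$ replaces one factor by $\mathcal N$-type insertions. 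So $D\mathcal B(a)[k]h$ is a finite sum of commutators $[\operatorname{Ad}\mathcal N(k_j),\operatorname{Ad}\mathcal N(h_i)]$ conjugated by bounded products.
\item For fixed $h$ with $\|h\|=1$, the map $k\mapsto X\,\Phi(k)\,Y$, with $\Phi$ a bounded linear embedding of $W\times W$ matrices, has HS norm at most $\sqrt{W}\,\|X\|_{HS}\|Y\|_{op}$, up to products of operator norms. Here $\|X\|_{HS}\lesssim\|h\|$.
\item This gives $\Lambda_2\le C\sqrt W\prod_j(1+\|a_j\|)^C$, and hence $\mathbb E\,W^{-p/2}\Lambda_2^p\le C$. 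The same argument applies to $D(\mathcal B^*)(a)N$ by duality.
\end{itemize}

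\textbf{Bound on $\Lambda_0$ (main obstacle).} We need $\mathbb P(\Lambda_0>t)\le C_pt^{-p}$, that is,
\[
\mathbb P\Big(\inf_{\|M\|_{HS}=1}\|\mathcal B(a)^*M\|\le\epsilon\Big)\le C_p\epsilon^{p}.
\]
The plan has three steps.
\begin{enumerate}[leftmargin=*]
\item \emph{Small ball for fixed $M$.} Split the 1000 indices into $m$ disjoint consecutive blocks of length $\ell$, with $\ell$ a fixed number such as $4$ or $5$. On a block, require that the symmetric lower-left block of $\operatorname{Ad}_{G_i^{-1}}^*M$ be $\le\epsilon$ for every $i$ in the block. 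This should force all blocks $P,Q,R$ of a suitably conjugated $M$ to be small. The mechanism is $G_{i-1}=T_E(a_i)G_i$: if $M'=\operatorname{Ad}^*_{G_i^{-1}}M$ has small $(21)$-block, then the $(21)$-block of $\operatorname{Ad}^*_{T(a_i)^{-1}}M'$ is an affine function of $a_i$. Its coefficients are built from the remaining blocks of $M'$. Conditioning on all other $a_j$ and using the GOE density (entries at scale $W^{-1/2}$, about $W^2/2$ real coordinates), this affine image lands in an $\epsilon$-ball with probability $\le (C\epsilon\sqrt W/\sigma)^{cW^2}$, where $\sigma$ is the size of the coefficient blocks. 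Iterating along the block reaches the unused blocks $P,Q$ of $M$, and one of them must carry mass $\gtrsim 1$.
\item \emph{Independence across blocks.} Because the blocks use disjoint $a_j$, the block probabilities multiply. This gives $\mathbb P(\|\mathcal B^*M\|\le\epsilon)\le(C\epsilon)^{c\,m\,W^2}$ on the event $\{\max\|a_j\|\le K_0\}$.
\item \emph{Net argument.} Take an $\epsilon/(C\Lambda_1)$-net of the unit sphere of $\mathfrak{sp}_{2W}$, which has cardinality $(C/\epsilon)^{2W^2+W}$. Since $\|\mathcal B^*\|=\Lambda_1$ is bounded with overwhelming probability, a union bound succeeds once $c\,m>3$. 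This is why many factors are used, and it yields $\mathbb P(\Lambda_0>\epsilon^{-2})\le\epsilon^{cW^2}+e^{-cW}$, which gives every moment once $W$ is large depending on $p$.
\end{enumerate}

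The delicate points are the degenerate directions. The first is when the coefficient blocks $\sigma$ are themselves small. I would handle this by a case split on which block of $M'$ carries mass, using the $\ell$ successive steps to rotate the mass into the $(21)$-block via the $-\Id$ entries of $T_E$. The second is rank-deficient $M$ (for example rank one), where the affine map in $a_i$ only has rank about $W\cdot\operatorname{rank}$. There I would refine the net by the effective rank of $M$, in the style of Rudelson--Vershynin's compressible/incompressible split: low-rank $M$ have much smaller nets, of size $(C/\epsilon)^{CW\,\operatorname{rank}}$, which matches the weaker small-ball exponent.
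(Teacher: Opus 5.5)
Your treatment of $\Lambda_1$ and $\Lambda_2$ matches the paper: the explicit formula $\mathcal B(a)h=\sum_i\operatorname{Ad}_{Q_i(a)^{-1}}\mathfrak n(h_i)$, its adjoint $-\pi_-(\operatorname{Ad}_{Q_i(a)^{-T}}\,\cdot\,)$, polynomial bounds in $\sup_i\|a_i\|_{op}$, and the factor $\sqrt W$ coming from the Hilbert--Schmidt norm of $H\mapsto LHR$. The $\Lambda_0$ part, where all the difficulty lies, has genuine gaps.

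\emph{The small-ball step.} Write $\operatorname{Ad}_{Q_i(a)^{-T}}M=\begin{pmatrix}P_i&U_i\\R_i&-P_i^T\end{pmatrix}$. The random coefficient is then the operator $C\mapsto P_iC+CP_i^T$ on $\mathbb R^{W\times W}_{\mathrm{Sym}}$. Its small-ball exponent is the number of its singular values above the relevant scale, not $cW^2$ times a scalar size $\sigma$. This operator has rank only about $W\operatorname{rank}(P_i)$. Even for orthogonal $P_i$ it can degenerate on a $W^2/4$-dimensional subspace: for $P_i=\cos\theta\,\Id+\sin\theta\,J$ with $J$ a complex structure, it equals $2\cos\theta\, C$ on symmetric $C$ commuting with $J$. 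What you need is a counting statement like Lemma~\ref{lem:LP-singular-value-amplification}: $k$ singular values of $P$ above $s$ give at least $Wk/24$ singular values of $L_P$ above $s/(4\sqrt2)$. Its proof is exactly where the near-antisymmetric danger is ruled out.

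\emph{The degenerate directions.} These are not handled by ``effective rank of $M$'', because the small ball only sees the singular-value profile of the $P$-block. Take $M$ whose $P$-block is rank one and whose $U,R$ blocks are generic of size $O(\epsilon)$. Then $M$ is full rank, but its small-ball probability is only about $(C\epsilon)^{cW}$. Meanwhile, covering its $U,R$ blocks at scale $\epsilon/\Lambda_1$ already costs $e^{cW^2}$ net points.

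The idea you are missing is to remove $U,R$ deterministically before any net argument. From $U_{i-1}=-R_i$, $P_{i-1}=-R_iC_i-P_i^T$ and $R_{i-1}=P_iC_i+C_iP_i^T+C_iR_iC_i-U_i$, small $R_j$'s force $P_i\approx\mathsf S^{1000-i}(P)$. Hence $\|C_iP+P^TC_i\|_{HS}\lesssim\sum_j\|R_j\|_{HS}$ for odd $i\in[487,997]$, up to polynomial factors in $\sup_j\|C_j\|_{op}$. Differencing consecutive such $C_i$ removes $E$ and gives
$\Lambda_0\le C(1+s_{min}(\mathcal L_a)^{-2})(1+\sup_j\|C_j\|_{op})^C$,
with $\mathcal L_a(P)=(G_jP+P^TG_j)_{j\le 128}$ and $G_j$ independent GOE.

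\emph{The multi-scale problem.} Even for this cleaner operator, matrices with singular values spread over many scales cause trouble: the ones lying between the net scale and the small-ball radius pay entropy and gain nothing. This forces nets indexed by dyadic profiles $Q_{\mathbf k}$ and a pigeonhole choice of scale. That argument only works for $t<e^{-C\sqrt W}$. The paper handles the range $[e^{-C\sqrt W},\tau]$ separately, by Gaussian concentration plus strong convergence of GOE. Your single-pass bound $\epsilon^{cW^2}$ for all $\epsilon$ is therefore unsupported.

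\emph{The moment bound.} With a fixed truncation $\max_j\|a_j\|_{op}\le K_0$, your tail bound $\epsilon^{cW^2}+e^{-cW}$ bounds no moment of $\Lambda_0$. The $t$-independent term $e^{-cW}$ makes $\int^\infty t^{p-1}\,\mathbb P(\Lambda_0>t)\,dt$ diverge. The truncation level must grow with $t$ (the paper's $M_0$), with the resulting losses tracked. Alternatively, split off the operator norms deterministically and use Cauchy--Schwarz, as the paper does.
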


\begin{proof}
To estimate $\Lambda_1$ and $\Lambda_2$
we compute the action of these operators. 
Straightforward, but tedious, 
computations give that these 
are given by sums and products 
involving $T_E(A)$ and its inverse, 
and so one can estimate them by applying 
standard operator 
norm estimates. $\Lambda_0$ is 
more difficult. We estimate 
it by computing 
the quadratic form of $\widehat{\Gamma}(a)$, 
then lower bound that in terms 
of the least singular value of a random operator.
The needed estimate on the least singular value 
is proven in the appendix.  

We use the standard normalized GOE operator-norm estimate: for every $p>0$
there exists $C:=C(p)>0$, independent of $W$, such that
\begin{equation}
\label{ineq:GOE-operator-norm-estimate}
\mathbb{E}\|A\|_{\mathbb{R}^{W}\to\mathbb{R}^{W}}^p\leq C.
\end{equation}
Hence, by Cauchy--Schwarz, 
it suffices to prove the following deterministic bounds:
\[
\begin{aligned}
\Lambda_1(a)&\leq
C\left(1+|E|+\sup_{i=1,\ldots,1000}\|a_i\|_{op}\right)^C,\\
\Lambda_2(a)&\leq
C\sqrt W\left(1+|E|+\sup_{i=1,\ldots,1000}\|a_i\|_{op}\right)^C. 
\end{aligned}
\]

The action of $\mathcal B(a)$ is computed in the following claim. 
\begin{claim}
For any \(a=(a_1,\ldots,a_{1000})\), \(h=(h_1,\ldots,h_{1000})\in \mathcal H_W\), we have
\begin{equation}
\label{eq:action-of-B}
\mathcal B(a)[h]
= \sum_{i=1}^{1000} \operatorname{Ad}_{Q_i(a)^{-1}} \mathfrak n(h_i),
\end{equation}
where
\[
  Q_i(a):=T_E(a_{i+1})\cdots T_E(a_{1000}),
\]
with \(Q_{1000}(a)=\Id\), and
\begin{equation}
\label{def:n}
\mathfrak n(H):=
  \begin{pmatrix}
    0&0\\ -H&0
  \end{pmatrix}.
\end{equation}
\end{claim}
\begin{proof}
By calculus we have
\begin{align*}
DF_E(a)[h] = \frac{d}{dt}\Big|_{t=0}F_E(a+th) 
 = \sum_{i=1}^{1000}
T_E(a_1)\cdots T_E(a_{i})\mathfrak n(h_i)T_E(a_{i+1})\cdots T_E(a_{1000}).
\end{align*}
Since \(d(L_{F_E(a)^{-1}})_{F_E(a)}\) acts by left multiplication by
\(F_E(a)^{-1}\), we have
\begin{align*}
\mathcal B(a)[h]
&=F_E(a)^{-1}(DF_E(a)[h])\\
&=(T_E(a_1)\cdots T_E(a_{1000}))^{-1}
\sum_{i=1}^{1000}
T_E(a_1)\cdots T_E(a_i)\mathfrak n(h_i)T_E(a_{i+1})\cdots T_E(a_{1000})\\
&=\sum_{i=1}^{1000}
\operatorname{Ad}_{(T_E(a_{i+1})\cdots T_E(a_{1000}))^{-1}}\mathfrak n(h_i).
\end{align*}
\end{proof}
The estimate on \(\Lambda_1(a)\) now follows from
$$
  T_E(A)^{-1}
  =
  \begin{pmatrix}
    0&\Id_W\\
    -\Id_W&A-E\Id_W
  \end{pmatrix},
$$
since
\begin{equation}
\|T_E(A)^{-1}\|_{\mathbb{R}^{2W}\to \mathbb{R}^{2W}}, \|T_E(A)\|_{\mathbb{R}^{2W}\to \mathbb{R}^{2W}}
\leq C\left(1+|E|+\|A\|_{op}\right),
\end{equation} 
so 
\begin{equation}
\label{ineq:elementary-transfer-matrix-estimate}
\|Q_i(a)\|_{\mathbb{R}^{2W}\to \mathbb{R}^{2W}}, \|Q_i(a)^{-1}\|_{\mathbb{R}^{2W}\to \mathbb{R}^{2W}}
\leq C\left(1+|E|+\sup_{\ell=1,\ldots,1000}\|a_\ell\|_{op}\right)^C.
\end{equation}

The following computes the action of \(D\mathcal B(a)\) and \(D(\mathcal B^*)(a)\).
To define it we introduce the operator
$\pi_-:\mathfrak{sp}_{2W}\to\mathbb R_{\mathrm{Sym}}^{W\times W}$ 
given by 
\begin{equation}
\label{def:pi-}
\pi_-\left(\begin{pmatrix}
    P&Q\\
    R&-P^T
  \end{pmatrix}\right):=R.
\end{equation}
Here we have used the representation of the Lie algebra $\mathfrak{sp}_{2W}$
in \eqref{def:Lie-Algebra-representation}. 
In particular, we know $\pi_-$ takes values in the symmetric matrices. 
\begin{claim}
For any \(a,h,k\in \mathcal H_W\), we have
\[
  D\mathcal B(a)[k]h
  =
  \sum_{i=1}^{1000}
  \left[
    \operatorname{Ad}_{Q_i(a)^{-1}}\mathfrak n(h_i),
    \sum_{j=i+1}^{1000}
    \operatorname{Ad}_{Q_j(a)^{-1}}\mathfrak n(k_j)
  \right],
\]
where $[X,Y]= XY-YX,$ denotes the Lie bracket. Moreover, for \(Y\in\mathfrak{sp}_{2W}\),
\[
\begin{aligned}
  D(\mathcal B^*)(a)[k]Y
  &=
  \Bigl(
    -\pi_-\left(
      \operatorname{Ad}_{Q_i(a)^{-T}}
      \left[
      Y,
      \left(
        \sum_{j=i+1}^{1000}
        \operatorname{Ad}_{Q_j(a)^{-1}}\mathfrak n(k_j)
      \right)^T
      \right]
    \right)
  \Bigr)_{i=1}^{1000}. 
\end{aligned}
\]
Recall that $\mathfrak n$ is defined in \eqref{def:n}. 
\end{claim}
\begin{proof}
\(D\mathcal B(a)[k]h\) captures 
how $\mathcal B(a)[h]\in \mathfrak{sp}_{2W}$ changes
as we move $a$ in the direction $k\in \mathcal H_W$, 
i.e. it is given by
\[
  D\mathcal B(a)[k]h
  =
  \left.\frac{d}{dt}\right|_{t=0}\mathcal B(a+tk)[h].
\]
Hence, we compute 
$D\mathcal{B}(a)[k]h$ by differentiating 
each term in sum 
\eqref{eq:action-of-B} with respect 
to $a$ in the direction $k$.

For the $i$-th term \eqref{eq:action-of-B} 
a short computation gives
\begin{align*}
\frac{d}{dt}\Big|_{t=0}\left(Q_i(a+tk)^{-1}\mathfrak n(h_i)Q_i(a+tk)\right)
& = \left[
    \operatorname{Ad}_{Q_i(a)^{-1}}\mathfrak n(h_i),
    Q_i(a)^{-1}DQ_i(a)[k]
  \right].
\end{align*}
Since, by the proof of the previous claim,
\[
  Q_i(a)^{-1}DQ_i(a)[k]
  =
  \sum_{j=i+1}^{1000}
  \operatorname{Ad}_{Q_j(a)^{-1}}\mathfrak n(k_j),
\]
the expression for $D\mathcal B$ follows. 

\(D(\mathcal B^*)\) can be computed similarly 
by taking the adjoint of \(\mathcal B(a)\), given in
\eqref{eq:action-of-B}, to get
\begin{equation}
\label{eq:B*-equation}
\mathcal B(a)^*M
  =
  \left(
    -\pi_-(\operatorname{Ad}_{Q_i(a)^{-T}}M)
  \right)_{i=1}^{1000}.
\end{equation}
Then direct differentiation on each component gives
\[
\left.\frac{d}{dt}\right|_{t=0}
\operatorname{Ad}_{Q_i(a+tk)^{-T}}M
=
\operatorname{Ad}_{Q_i(a)^{-T}}
\left[
M,\left(Q_i(a)^{-1}DQ_i(a)[k]\right)^T
\right].
\]
Substituting the expression for \(Q_i(a)^{-1}DQ_i(a)[k]\) from above proves
the formula for \(D(\mathcal B^*)(a)[k]M\).
\end{proof}

To estimate the Hilbert--Schmidt norm of the mapping
\(k\mapsto D\mathcal B(a)[k]h\), note that both sums in Claim 4.7 have a
uniformly bounded number of terms. Thus, by the triangle inequality, it
suffices to estimate, for any fixed
\(1\leq i<j\leq1000\),
\[
\left\|
k\longmapsto
\left[
  \operatorname{Ad}_{Q_i(a)^{-1}}\mathfrak n(h_i),
  \operatorname{Ad}_{Q_j(a)^{-1}}\mathfrak n(k_j)
\right]
\right\|_{HS(\mathcal H_W,\mathfrak{sp}_{2W})}.
\]
At this point we note that for any \(d\in\mathbb N\) and fixed matrices
\(L,R\in\mathbb R^{d\times d}\), we have
\[
\begin{aligned}
  \|H\mapsto LHR\|_{HS(\mathbb R_{\mathrm{Sym}}^{d\times d},
  \mathbb R^{d\times d})}^2
  & \leq \|H\mapsto LHR\|_{HS(\mathbb R^{d\times d},
  \mathbb R^{d\times d})}^2\\
  & = \sum_{1\leq p,q\leq d} \|L(e_pe_q^{\ast}) R\|_{HS}^2\\
  & = \sum_{1\leq p,q\leq d} \|Le_p\|^2\|R^{*}e_q\|^2\\
  &\leq 
  \|L\|_{HS(\mathbb{R}^{d\times d})}^2\|R\|_{HS(\mathbb{R}^{d\times d})}^2\\
  & \leq
  d\|L\|_{HS(\mathbb{R}^{d\times d})}^2\|R\|_{op}^2.
\end{aligned}
\]
where $\|Le_p\|$ and $\|R^*e_q\|$ denote standard $\ell^2$ norms,
and $\|R\|_{op}$ is the operator norm of $R$ as a mapping
$\mathbb R^d\to\mathbb R^d$. We apply this estimate below with $d=2W$.
For fixed \(i<j\), expanding the commutator and the two adjoint actions and
applying the preceding estimate (or its transposed version), together with
\eqref{ineq:elementary-transfer-matrix-estimate}, and using 
that $\|h_i\|_{HS}\leq 1$ gives
\begin{align*}
&\left\|
k\longmapsto\left[
  \operatorname{Ad}_{Q_i(a)^{-1}}\mathfrak n(h_i),
\operatorname{Ad}_{Q_j(a)^{-1}}\mathfrak n(k_j)
\right]
\right\|_{HS(\mathcal H_W,\mathfrak{sp}_{2W})}^2\\
&\qquad\leq
  C W\left(1+|E|+\sup_{\ell=1,\ldots,1000}\|a_\ell\|_{op}\right)^C.
\end{align*}
which implies the deterministic estimate for $\|D\mathcal B (a)[\cdot] h\|_{HS(\mathcal H_W, \mathfrak sp_{2W})}.$

The displayed formula for \(D(\mathcal B^*)\) is handled in the same way, using
that \(\pi_-\) has norm at most one and transposition preserves the
Hilbert--Schmidt norm. This proves the deterministic bound on \(\Lambda_2(a)\).

It remains to estimate 
$$\Lambda_0(a) := \|\widehat{\Gamma}(a)^{-1}\|_{\mathfrak{sp}_{2W}\to \mathfrak{sp}_{2W}}.$$ 
This 
is equivalent to estimating the smallest
eigenvalue of the positive semi-definite matrix \(\widehat\Gamma(a)\), 
and we approach this via the quadratic form of $\widehat{\Gamma}(a)$ 
computed below. 

\begin{claim}
For any \(M\in \mathfrak{sp}_{2W}\), we have
\begin{equation}
\label{eq:quadratic-form-comp}
\left\langle \widehat{\Gamma}(a)M,M\right\rangle_{\mathfrak{sp}_{2W}}
  =
  \sum_{i=1}^{1000}
  \left\|
    \pi_-(\operatorname{Ad}_{Q_i(a)^{-T}}M)
  \right\|_{HS}^2 .
\end{equation}
Recall $\pi_{-}(N)$, defined in \eqref{def:pi-},
is the restriction
to the lower left block of $N$. 
\end{claim}
\begin{proof}
For any \(M\in \mathfrak{sp}_{2W}\), we have 
\[
\left\langle\widehat{\Gamma}(a)M,M\right\rangle_{\mathfrak{sp}_{2W}}
=
\left\langle \mathcal B(a)^*M,\mathcal B(a)^*M\right\rangle_{\mathcal H_W}.
\]
Recalling the action of $\mathcal{B}^{\ast}$, 
as given in \eqref{eq:B*-equation}, gives the result. 
\end{proof}

We want to lower bound the RHS of the above 
in terms of $\|M\|_{HS}^2$. 
To do this we exploit the recursive relations between 
the matrices on the RHS. 

Fix any \(M\in \mathfrak{sp}_{2W}\) 
and use the representation of
\(\mathfrak{sp}_{2W}\) from \eqref{def:Lie-Algebra-representation}, 
to write
\[
M=
\begin{pmatrix}
P&U\\
R&-P^T
\end{pmatrix}
\]
and
\[
\operatorname{Ad}_{Q_i(a)^{-T}}M
=
\begin{pmatrix}
P_i&U_i\\
R_i&-P_i^T
\end{pmatrix},
\]
for some \(U,R,U_i,R_i\in \mathbb R_{\mathrm{Sym}}^{W\times W}\) and $P, P_i\in \mathbb{R}^{W\times W}$.
The matrices \(U_i,P_i,R_i\)
satisfy a recursive relation since
\[
  \operatorname{Ad}_{Q_{i-1}(a)^{-T}}M
  =
  \operatorname{Ad}_{T_E(a_i)^{-T}}
  \left(\operatorname{Ad}_{Q_i(a)^{-T}}M\right),
\]
so matrix multiplication gives
for \(i=2,\ldots,1000\),
\[
\begin{aligned}
  P_{i-1}&=-R_iC_i-P_i^T,\\
  U_{i-1}&=-R_i,\\
  R_{i-1}&=P_iC_i+C_iP_i^T+C_iR_iC_i-U_i,
\end{aligned}
\]
where \(C_i=a_i-E\Id_W\), and we started from
\(P_{1000}=P\), \(U_{1000}=U\), and \(R_{1000}=R\). 

The RHS of \eqref{eq:quadratic-form-comp} is
$\sum_{i=1}^{1000}\|R_i\|_{HS}^2,$
so we would like to relate the $R_i$ to $R,U,P$. 
To do this iterate the 
first recursive relation to get 
\[
  P_i
  =
  \mathsf S^{1000-i}(P)
  -
  \sum_{j=i+1}^{1000}
  \mathsf S^{j-i-1}\left(R_jC_j\right),
\]
where we used the operator $\mathsf S$ given by
$$\mathsf S(X)=-X^T.$$
Since $\mathsf S$ preserves HS-norm and the number of iterates is fixed,
\[
  \left\|P_i-\mathsf S^{1000-i}(P)\right\|_{HS}
  \leq
  C\left(1+\sup_j\|C_j\|_{op}\right)
  \left(\sum_{j=1}^{1000}\|R_j\|_{HS}^2\right)^{1/2}.
\]
Hence for \(i=487,489,\ldots,997\), 
the second recursion with index \(i+1\)
and the third recursion with index \(i\), together with
$\mathsf S^{1000-i}(P)=-P^T$, gives

\begin{align*}
  \|C_iP+P^TC_i\|_{HS}
  &\leq \|P_iC_i+C_iP_i^T\|_{HS}
  +2\|C_i\|_{op}\|P_i-\mathsf S^{1000-i}(P)\|_{HS}\\
  &=\|R_{i-1}-C_iR_iC_i-R_{i+1}\|_{HS}
  +2\|C_i\|_{op}\|P_i-\mathsf S^{1000-i}(P)\|_{HS}\\
  & \leq C\left(1+\sup_j\|C_j\|_{op}\right)^C\sum_{j=1}^{1000}\|R_j\|_{HS}.
\end{align*}

For $1\leq j\leq128$, put
$$
G_j:=\frac{C_{4j+485}-C_{4j+483}}{\sqrt2}.
$$
Define the linear operator
\(\mathcal L_a:\mathbb R^{W\times W}\to
(\mathbb R_{\mathrm{Sym}}^{W\times W})^{128}\) by
\begin{equation}
\label{def:L-block-anderson}
\mathcal L_a(P)
:=
\left(G_jP+P^TG_j\right)_{j=1}^{128}.
\end{equation}
The preceding estimate and the triangle inequality give
\begin{align*}
\|P\|_{HS} 
& \leq s_{min}(\mathcal L_a)^{-1}\|\mathcal L_a(P)\|_{HS}\\
& \leq Cs_{min}(\mathcal L_a)^{-1} \left(1+\sup_j\|C_j\|_{op}\right)^C\sum_{j=1}^{1000}\|R_j\|_{HS}.
\end{align*}
The first line follows from the definition of the least singular value,
and the second follows from the preceding estimate and the definition of
$\mathcal L_a$.
For $R$ and $U$ we note that $R_{1000}= R$ so 
$$\|R\|_{HS} \leq \sum_{j=1}^{1000}\|R_j\|_{HS},$$
and the third recursion relation with $i=1000$ gives
$$
  U=P C_{1000}+C_{1000}P^T+C_{1000}RC_{1000}-R_{999}.
$$
so 
\begin{align*}
\|U\|_{HS}
&\leq C(1+s_{min}(\mathcal L_a)^{-1})
\left(1+\sup_j\|C_j\|_{op}\right)^C
\sum_{j=1}^{1000}\|R_j\|_{HS}.
\end{align*}

The above estimate on $U,P,R$ and 
\eqref{eq:quadratic-form-comp} gives
that for any $M$ we have 
\begin{align*}
\|M\|_{HS}^2 
& = 2\|P\|_{HS}^2  + \|U\|_{HS}^2 + \|R\|_{HS}^2\\
& \leq C(1+s_{min}(\mathcal L_a)^{-2})\left(1+\sup_j\|C_j\|_{op}\right)^C\sum_{j=1}^{1000}\|R_j\|_{HS}^2\\
& \leq C(1+s_{min}(\mathcal L_a)^{-2})\left(1+\sup_j\|C_j\|_{op}\right)^C \langle \widehat\Gamma(a)M,M\rangle_{\mathfrak{sp}_{2W}}.
\end{align*}
Thus 
$$
\Lambda_0(a)
\leq
C(1+s_{min}(\mathcal L_a)^{-2})
\left(1+\sup_j\|C_j\|_{op}\right)^C.
$$

To finish the bound on $\Lambda_0(a)$, 
we note that Proposition 
\ref{prp:structured-singular-value-estimate}
implies that for any $p>0$, there exists $C:=C(E,p)>0$ such that
\begin{equation}
\label{ineq:final-singular-value-estimate}
\mathbb{E}s_{min}(\mathcal L_a)^{-p}\leq C,
\end{equation}
when $W$ is sufficiently large.
Indeed, $C_i=a_i-E\Id_W$, and
the matrices $G_1,\ldots,G_{128}$ are independent normalized GOE matrices.
Thus Proposition \ref{prp:structured-singular-value-estimate} gives
\eqref{ineq:final-singular-value-estimate}. This completes the proof.
\end{proof}

\subsection{Proof of Proposition \ref{prp:Fisher-bound}}

\begin{proof}[Proof of Proposition \ref{prp:Fisher-bound}]
Fix \(M\in\mathfrak{sp}_{2W}\) with 
\(\|M\|_{HS}=1\), 
and define the vector field 
\(\tilde X_M:\mathcal H_W \to \mathcal H_W\) 
by
\begin{equation}
\label{def:X-tilde}
\tilde{X}_M(a):= DF_E(a)^{\ast}\left(DF_E(a)DF_E(a)^{\ast}\right)^{-1}X_M(F_E(a)). 
\end{equation} 
This is equivalent to 
$$\tilde X_M(a) = \mathcal B(a)^*\widehat{\Gamma}(a)^{-1}M,$$
where
$\widehat{\Gamma}(a)^{-1}$ and $\mathcal B(a)$ are defined 
in
\eqref{def:hat-Gamma} and \eqref{def:B-of-a}. 

Proposition \ref{prp:operator-estimates} implies
$\tilde{X}_M$ exists almost everywhere, and the estimates there 
also imply  
$\tilde{X}_M\in W^{1,1}_{loc}(\mathcal H_W; \mathcal H_W),$
via a standard regularization argument. 
To check that $\tilde{X}_M$ is a \textit{lift} of $M$ in the sense of \eqref{def:lifting-relation}, 
note that for any $a \in \mathcal H_W$ where $\widehat{\Gamma}(a)$ is invertible, 
we have  
\begin{align*}
DF_E(a)\tilde X_M(a)
& =DF_E(a)\mathcal B(a)^*\widehat{\Gamma}(a)^{-1}M\\
& =
d(L_{F_E(a)})_{\Id}
\mathcal B(a)\mathcal B(a)^*\widehat{\Gamma}(a)^{-1}M\\
& =d(L_{F_E(a)})_{\Id} M\\
& = X_M(F_E(a)).
\end{align*}
We claim it suffices to prove the estimate 
\begin{equation}
\label{eq:vf-estimate-goal}
W\mathbb E_a\|\tilde X_M(a)\|_{\mathcal H_W}^2
  +
  \mathbb E_a\|D\tilde X_M(a)\|_{HS(\mathcal H_W)}^2\leq CW,
\end{equation}
for some $C$, independent of $W$ and $M$. 
Indeed, the covariance of $a$ as a Gaussian vector in $\mathcal H_W$ is
$2W^{-1}\Id$. Thus Lemma \ref{lem:abstract-gaussian-lifting} gives
$$
  \|X_M\mu^{\ast1000}\|_{TV}^2
  \leq
  \frac W2\mathbb E_a\|\tilde X_M(a)\|_{\mathcal H_W}^2
  +\mathbb E_a\|D\tilde X_M(a)\|_{HS(\mathcal H_W)}^2,
$$
and so \eqref{eq:vf-estimate-goal} would give
\[
  \|X_M\mu^{\ast 1000}\|_{TV(\Sp_{2W}(\mathbb R))}^2
  \leq CW,
\]
for any $M\in \mathfrak{sp}_{2W}$ with $\|M\|_{HS}\leq 1$.  
This implies that 
$\mu^{\ast 1000}$ has a density with respect to Haar measure, 
and so we can replace the $TV$-norm 
quantity above with $\|X_M\frac{d\mu^{\ast 1000}}{dm}\|_{L^1(\Sp_{2W}(\mathbb{R}))}.$

It remains to prove \eqref{eq:vf-estimate-goal}. To do this we
compute and use Proposition \ref{prp:operator-estimates}. 
We use the same notation as Proposition \ref{prp:operator-estimates}, 
namely we define

$$
\begin{aligned}
  \Lambda_0(a)&:=\|\widehat{\Gamma}(a)^{-1}\|_{\mathfrak{sp}_{2W}\to
  \mathfrak{sp}_{2W}},\\
  \Lambda_1(a)&:= \|\mathcal B(a)\|_{\mathcal H_W\to \mathfrak{sp}_{2W}},\\
  \Lambda_2(a)&:=
  \sup_{\|N\|_{\mathfrak{sp}_{2W}}=1}
  \|D(\mathcal B^*)(a)N\|_{HS(\mathcal H_W,\mathcal H_W)}\\
  &\quad+
  \sup_{\|V\|_{\mathcal H_W}=1}
  \|D\mathcal B(a)[\,\cdot\,]V\|_{HS(\mathcal H_W,\mathfrak{sp}_{2W})}.
\end{aligned}
$$
In terms of the above quantities, we have the following estimate. 
\begin{claim}
For almost every $a\in \mathcal H_W$
$$\|\tilde X_M(a)\|_{\mathcal H_W}^2 \leq \Lambda_0(a).$$
\end{claim}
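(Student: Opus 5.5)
The bound follows from the identity $\tilde X_M(a)=\mathcal B(a)^*\widehat\Gamma(a)^{-1}M$ together with $\widehat\Gamma(a)=\mathcal B(a)\mathcal B(a)^*$. The norm of $\tilde X_M(a)$ reduces to a quadratic form in $\widehat\Gamma(a)^{-1}$, with no appearance of $\Lambda_1$.

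First I fix $a$ with $\widehat\Gamma(a)$ invertible. By Proposition \ref{prp:operator-estimates}, $\Lambda_0(a)<\infty$ almost surely, so this holds for almost every $a$. Set $N:=\widehat\Gamma(a)^{-1}M\in\mathfrak{sp}_{2W}$. Then
\[
\|\tilde X_M(a)\|_{\mathcal H_W}^2=\langle \mathcal B(a)^*N,\mathcal B(a)^*N\rangle_{\mathcal H_W}
=\langle \mathcal B(a)\mathcal B(a)^*N,N\rangle_{\mathfrak{sp}_{2W}}
=\langle \widehat\Gamma(a)N,N\rangle_{\mathfrak{sp}_{2W}} .
\]
Since $\widehat\Gamma(a)N=M$, this equals $\langle M,\widehat\Gamma(a)^{-1}M\rangle_{\mathfrak{sp}_{2W}}$.

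Next, $\widehat\Gamma(a)$ is self-adjoint and positive definite with respect to the inner product \eqref{def:lie-algebra-inner-product}, so its inverse is as well. Hence
\[
\langle M,\widehat\Gamma(a)^{-1}M\rangle\le \|\widehat\Gamma(a)^{-1}\|_{\mathfrak{sp}_{2W}\to\mathfrak{sp}_{2W}}\|M\|_{HS}^2=\Lambda_0(a),
\]
using the normalization $\|M\|_{HS}=1$.

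The only point needing care is that the adjoint $\mathcal B(a)^*$ in the formula for $\tilde X_M$ must be taken with respect to the same inner products (\eqref{def:inner-product-H} on $\mathcal H_W$, Hilbert--Schmidt on $\mathfrak{sp}_{2W}$) used to define $\widehat\Gamma$ and $\Lambda_0$. This matches the definitions \eqref{def:hat-Gamma} and \eqref{def:Lambda-left-trivialized}, since left trivialization is how $\tilde X_M$ was rewritten. There is no real obstacle; the estimate is immediate once the identity is written down.
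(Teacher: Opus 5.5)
Your proof is correct and is essentially the paper's argument. You expand $\|\mathcal B(a)^*\widehat\Gamma(a)^{-1}M\|_{\mathcal H_W}^2$ as $\langle \widehat\Gamma(a)^{-1}M, M\rangle_{\mathfrak{sp}_{2W}}$ using $\widehat\Gamma(a)=\mathcal B(a)\mathcal B(a)^*$, bound it by $\Lambda_0(a)\|M\|_{HS}^2=\Lambda_0(a)$, and justify the almost-everywhere invertibility of $\widehat\Gamma(a)$ via Proposition \ref{prp:operator-estimates}, exactly as the paper does.
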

\begin{proof}
$\tilde{X}_M(a)$ is defined for almost every $a$, 
and for all such $a$ we have
\begin{align*}
\|\tilde X_M(a)\|_{\mathcal H_W}^2
& = \langle \mathcal B(a)^*\widehat{\Gamma}(a)^{-1}M, \mathcal B(a)^*\widehat{\Gamma}(a)^{-1}M\rangle_{\mathcal H_W}\\
&=\left\langle\widehat{\Gamma}(a)^{-1}M,M\right\rangle_{\mathfrak{sp}_{2W}}\\
&\leq \Lambda_0(a)\|M\|_{HS}^2\\ 
& \leq \Lambda_0(a).
\end{align*}
\end{proof}

We can estimate the Hilbert-Schmidt 
norm of $D\tilde{X}_M$ as follows.  
\begin{claim}
For almost every $a\in \mathcal H_W$
$$
  \|D\tilde{X}_M(a)\|_{HS}
  \leq 2\Lambda_2(a)\Lambda_0(a).
$$
\end{claim}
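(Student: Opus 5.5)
We differentiate $\tilde X_M(a)=\mathcal B(a)^*\widehat\Gamma(a)^{-1}M$ by the product rule and estimate each term with the operator norms $\Lambda_0,\Lambda_1,\Lambda_2$ already introduced. Fix $a$ with $\widehat\Gamma(a)$ invertible; this holds almost everywhere by Proposition \ref{prp:operator-estimates}. Set $N(a):=\widehat\Gamma(a)^{-1}M$. For $k\in\mathcal H_W$,
\[
D\tilde X_M(a)[k]=D(\mathcal B^*)(a)[k]N(a)+\mathcal B(a)^*\,DN(a)[k],
\]
and differentiating $\widehat\Gamma N=M$ gives
\[
DN(a)[k]=-\widehat\Gamma(a)^{-1}\bigl(D\mathcal B(a)[k]\mathcal B(a)^*+\mathcal B(a)D(\mathcal B^*)(a)[k]\bigr)N(a).
\]
The estimate then splits into two Hilbert--Schmidt bounds over $k\in\mathcal H_W$.

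\emph{First term.} By the definition of $\Lambda_2$ and homogeneity in $N$,
\[
\|k\mapsto D(\mathcal B^*)(a)[k]N(a)\|_{HS}\le\Lambda_2(a)\|N(a)\|\le\Lambda_2(a)\Lambda_0(a).
\]

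\emph{Second term.} The key structural point is to keep the operator $\mathcal B^*\widehat\Gamma^{-1}$ together rather than splitting it into $\Lambda_1\Lambda_0$. It has norm
\[
\|\mathcal B^*\widehat\Gamma^{-1}\|=\|\widehat\Gamma^{-1}\mathcal B\mathcal B^*\widehat\Gamma^{-1}\|^{1/2}=\|\widehat\Gamma^{-1}\|^{1/2}=\Lambda_0^{1/2}.
\]
Likewise $\|\mathcal B^*N\|^2=\langle\widehat\Gamma^{-1}M,M\rangle\le\Lambda_0$, which is the computation in the previous claim. Hence $\mathcal B^*\widehat\Gamma^{-1}D\mathcal B[k]\mathcal B^*N$ has HS norm in $k$ at most
\[
\Lambda_0^{1/2}\cdot\Lambda_2\cdot\|\mathcal B^*N\|\le\Lambda_0\Lambda_2,
\]
using the second part of $\Lambda_2$ with $h=\mathcal B^*N$. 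The remaining piece is $\mathcal B^*\widehat\Gamma^{-1}\mathcal B\,D(\mathcal B^*)[k]N$. Here $\mathcal B^*\widehat\Gamma^{-1}\mathcal B$ is the orthogonal projection onto $\operatorname{Ran}\mathcal B^*$, so it has norm at most $1$, and this piece is bounded by $\Lambda_2\Lambda_0$.

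\emph{Combining.} The three pieces sum to $3\Lambda_2\Lambda_0$, so the constant $2$ in the claim needs care. The first term and the projection piece combine as
\[
(\Id-\mathcal B^*\widehat\Gamma^{-1}\mathcal B)\,D(\mathcal B^*)[k]N,
\]
that is, a complementary projection applied to $D(\mathcal B^*)[k]N$, which has norm at most $1$. That gives $\Lambda_2\Lambda_0$ for those two pieces together, and adding the middle piece gives $2\Lambda_2\Lambda_0$. The main obstacle is this bookkeeping of $\Lambda_0$ powers: we must use the projection identity and the $\Lambda_0^{1/2}$ bound for $\mathcal B^*\widehat\Gamma^{-1}$, and never the crude bound $\Lambda_1\Lambda_0^2$. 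We also need to justify differentiability of $\widehat\Gamma^{-1}$ on the open full-measure set where it is invertible, which holds by smoothness of $\mathcal B$ and the inverse function rule.
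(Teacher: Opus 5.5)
Your proof is correct and follows the paper's argument essentially step for step. The paper likewise writes $D\tilde X_M(a)=(I-\mathcal B^*\widehat\Gamma^{-1}\mathcal B)\,D(\mathcal B^*)(a)\widehat\Gamma(a)^{-1}M-\mathcal B^*\widehat\Gamma^{-1}D\mathcal B(a)[\,\cdot\,]\tilde X_M(a)$, bounds the first term by $\Lambda_2\Lambda_0$ via the complementary projection, and bounds the second by $\Lambda_0^{1/2}\Lambda_2\|\tilde X_M(a)\|\le\Lambda_2\Lambda_0$ using $\|\mathcal B^*\widehat\Gamma^{-1}\|_{op}=\Lambda_0^{1/2}$; your remark on differentiability of $\widehat\Gamma^{-1}$ on the open full-measure invertibility set is a small point the paper leaves implicit.
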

\begin{proof}
Differentiating, we compute
\[
  D(\widehat{\Gamma}^{-1})(a)
  =
  -\widehat{\Gamma}(a)^{-1}
  D\widehat{\Gamma}(a)
  \widehat{\Gamma}(a)^{-1}
\]
and
\[
  D\widehat{\Gamma}(a)
  =
  D\mathcal B(a)\mathcal B(a)^*
  +
  \mathcal B(a)D(\mathcal B^*)(a).
\]
Thus differentiating
\(\tilde X_M(a)=\mathcal B(a)^*\widehat{\Gamma}(a)^{-1}M\) gives
\[
\begin{aligned}
  D\tilde X_M(a)
  & = D(\mathcal B(a)^{\ast})\widehat{\Gamma}(a)^{-1}M + \mathcal B(a)^{\ast}D(\widehat{\Gamma}(a)^{-1})M\\
  &=
  \left(
    I-\mathcal B(a)^*\widehat{\Gamma}(a)^{-1}\mathcal B(a)
  \right)
  D(\mathcal B^*)(a)\widehat{\Gamma}(a)^{-1}M\\
  &\quad-
  \mathcal B(a)^*\widehat{\Gamma}(a)^{-1}
  D\mathcal B(a)[\,\cdot\,]\tilde X_M(a).
\end{aligned}
\]
For the first term we note that 
\[
  \mathcal B(a)^*\widehat{\Gamma}(a)^{-1}\mathcal B(a)
\]
is the orthogonal projection onto \(\operatorname{Range}(\mathcal B(a)^*)\).
Hence $I-\mathcal B(a)^*\widehat{\Gamma}(a)^{-1}\mathcal B(a)$, as a mapping $\mathcal H_W\to \mathcal H_W$, 
has operator norm at most one. Using 
Proposition
\ref{prp:operator-estimates} then gives
\begin{align*}
&\left\|\left(
    I-\mathcal B(a)^*\widehat{\Gamma}(a)^{-1}\mathcal B(a)
  \right)
  D(\mathcal B^*)(a)\widehat{\Gamma}(a)^{-1}M\right\|_{HS(\mathcal H_W)}\\
&\leq
\left\|
    D(\mathcal B^*)(a)\widehat{\Gamma}(a)^{-1}M
  \right\|_{HS(\mathcal H_W)}\\
&\leq
\Lambda_2(a)\Lambda_0(a).
\end{align*}
For the second term, first note that
\begin{align*}
  \|\mathcal B(a)^*\widehat{\Gamma}(a)^{-1}\|_{op}^2
  &=
  \|\widehat{\Gamma}(a)^{-1}\mathcal B(a)\mathcal B(a)^*
  \widehat{\Gamma}(a)^{-1}\|_{op}\\
  &=\|\widehat{\Gamma}(a)^{-1}\|_{op}
  =\Lambda_0(a).
\end{align*}
Consequently,
\begin{align*}
  \left\|
    \mathcal B(a)^*\widehat{\Gamma}(a)^{-1}
    D\mathcal B(a)[\,\cdot\,]\tilde X_M(a)
  \right\|_{HS}
  &\leq
  \Lambda_0(a)^{1/2}
  \|D\mathcal B(a)[\,\cdot\,]\tilde X_M(a)\|_{HS}\\
  &\leq
  \Lambda_0(a)^{1/2}\Lambda_2(a)
  \|\tilde X_M(a)\|_{\mathcal H_W}\\
  &\leq \Lambda_2(a)\Lambda_0(a).
\end{align*}
Combining the two estimates proves the claim.
\end{proof}
The estimates in Proposition \ref{prp:operator-estimates} 
combined with the above claims, imply 
$$W\mathbb{E}_a\|\tilde{X}_M(a)\|_{\mathcal H_W}^2 + \mathbb{E}_a\|D\tilde{X}_M(a)\|_{HS(\mathcal H_W)}^2 \leq C W.$$
\end{proof}

\section{Proof of Theorem \texorpdfstring{\ref{thm:gaps}}{Gap Separation} and Corollary \ref{cor:RBM-Localization}}
\begin{proof}[Proof of Theorem \ref{thm:gaps}]
We assume $W$ is sufficiently large, 
as the theorem follows 
for any bounded $W$ by the 
Margulis-Goldshield criterion \cite{Margulis87}, 
decreasing $c$ as necessary. 

Let \(\mu\) be the law of the random matrix in
\eqref{def:M} and $L_1(E,W)\geq ... \geq L_{2W}(E,W)$ be the associated Lyapunov exponents. By Proposition
\ref{prp:Fisher-bound}, $\mu^{\ast 1000}$ has a density with respect 
to Haar measure $dm$, and 
\[
  \sup_{\substack{M\in \mathfrak{sp}_{2W}\\ \|M\|_{HS}=1}}
  \|X_M \frac{d\mu^{\ast 1000}}{dm}\|_{L^1(\Sp_{2W}(\mathbb R))}^2\leq CW.
\]
Applying Theorem \ref{thm:Fisher-gaps} to \(\mu^{*1000}\), 
and using that the Lyapunov exponents of
\(\mu^{*1000}\) are \(1000L_k(E,W)\), gives 
\[
  1000\bigl(L_k(E,W)-L_{k+1}(E,W)\bigr)\geq \frac{c}{W}
\]
for every \(1\leq k\leq 2W-1\), which implies the result.
\end{proof}

The following is classical, and essentially follows \cite{KleinLacroixSpeis1990}. 

\begin{proof}[Proof of Corollary \ref{cor:RBM-Localization}]
Let $A_1, A_2, \ldots$ and $H_{W,n}$ be as in Corollary
\ref{cor:RBM-Localization}, and $T_E(A)$ be 
as in \eqref{def:M}. First we recall that for any $E\in \mathbb{R}$ we have 
$$\lim_{n\to\infty}\frac{1}{n}\log \|(H_{W,n}-E)^{-1}(1,n)\|_{op} = -L_{W}(E,W),$$
almost surely. Note $(H_{W,n}-E)^{-1}$ exists for all $n$ almost surely 
by the Wegner estimate, see Proposition 2 in \cite{DroginRBMLocalization2025}, 
for instance. 

Indeed, if 
$$Q_n := (I,0)T_E(A_n)\cdots T_E(A_1)
\begin{pmatrix}
I\\
0
\end{pmatrix},$$
then page 143 of \cite{KleinLacroixSpeis1990} 
gives
$$
(H_{W,n}-E)^{-1}(1,n)
=
(-1)^{n-1}
Q_n^{-1}.
$$
Hence,
\begin{align*}
\lim_{n\to\infty}\frac{1}{n}\log \left\|(H_{W,n}-E)^{-1}(1,n)\right\|_{\mathrm{op}}
& =\lim_{n\to \infty}\frac{1}{n}\log \left\|Q_n^{-1}\right\|_{\mathrm{op}}\\
& =-\lim_{n\to \infty}\frac{1}{n}\log\sigma_W(Q_n).
\end{align*}

To estimate $\sigma_{W}(Q_n)$ we invoke Theorem \ref{thm:gaps} 
and classical results from Lyapunov theory. 
Proposition 2.2 of Chapter 6 
of \cite{BougerolLacroix1985} 
gives that if  
$M_1,M_2,...\in \mathbb{R}^{d\times d}$ 
are iid random matrices
having a probability distribution with
irreducible and proximal support then 
for any fixed unit vectors $v,w\in \mathbb{R}^{d}$
$$\lim_{n\to \infty} \frac{1}{n}\log |\langle v, M_n...M_1 w\rangle| = \lambda_1,$$
where the limit is meant almost surely and $\lambda_1$ is the top Lyapunov exponent associated to $M_1$. 
Since $T_E(A_1)...T_E(A_{1000})$ has a density in $\Sp_{2W}(\mathbb{R})$, 
we can apply this to the $W$-th and $(W-1)$-th exterior products of $T_E(A_n)...T_E(A_1)$ 
to get
$$\lim_{n\to\infty}\frac{1}{n}\log \sigma_{W}(Q_n) = \lim_{n\to\infty} \frac{1}{n}\log \frac{|\operatorname{det} Q_n|}{\|\wedge^{W-1} W_n\|} = L_W(E,W).$$

Now the Corollary follows by Theorem \ref{thm:gaps} 
since $L_{W}(E,W)-L_{W+1}(E,W) = 2L_{W}(E,W),$ 
as $T_E(A)$ is Symplectic. 
\end{proof}

\appendix

\section{Singular Value Estimate}

For any \(A_1,\ldots,A_m\in\mathbb R_{\mathrm{sym}}^{W\times W}\), 
define the linear mapping 
$$\mathcal L_{A_1,\ldots,A_m}:\mathbb{R}^{W\times W}\to (\mathbb{R}^{W\times W}_{sym})^{m}$$ 
by 
\[
  \mathcal L_{A_1,\ldots,A_m}(P)
  :=
  \left(A_iP+P^TA_i\right)_{i=1}^m .
\]
We prove the following least singular value estimate. 
\begin{proposition}\label{prp:structured-singular-value-estimate}
For every $p>0$ there exist $C:=C(p)>0$ and
$W_0:=W_0(p)\in\mathbb N$ such that the following holds.
Suppose $W\geq W_0$ and $A_1,\ldots,A_{128}$ are independent
normalized GOE matrices of size $W$. Then
\[
  \mathbb E\left[
  s_{min}(\mathcal L_{A_1,\ldots,A_{128}})^{-p}
  \right]
  \leq C.
\]
\end{proposition}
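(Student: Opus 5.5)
We aim for a tail bound $\mathbb P(s_{min}(\mathcal L)\le \varepsilon)\le C_p\varepsilon^{p+1}+C_pe^{-cW}$ for $\varepsilon\in(0,1)$, uniform in large $W$. Integrating this against $p\varepsilon^{-p-1}$, and combining it with a crude bound of the form $\mathbb E\, s_{min}^{-2p}\le W^{C}$ on the rare event, would give the proposition. The dimension count is favourable: $\mathcal L$ maps a space of dimension $W^2$ into one of dimension $128\,W(W+1)/2\approx 64W^2$, so each unit $P$ is tested by about $64W^2$ nearly independent Gaussian constraints. This is a Rudelson--Vershynin style problem for a very tall structured matrix, and the plan is a small-ball estimate for fixed $P$ combined with an $\varepsilon$-net, which is the route indicated in the introduction.

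\textbf{Step 1 (small ball for fixed $P$).} Fix $P$ with $\|P\|_{HS}=1$. For a single GOE matrix $A$, the map $A\mapsto AP+P^TA$ is linear in $A$, so $AP+P^TA$ is a centered Gaussian vector in $\mathbb R^{W\times W}_{sym}$. Its covariance is controlled by the quadratic form $\mathbb E\langle AP+P^TA,S\rangle^2=\frac{2}{W}\|\mathrm{sym}(PS)\|_{HS}^2$, up to constants. I would show that this covariance has at least $cW^2$ eigenvalues of size $\ge c/W$. The reason is that $S\mapsto \mathrm{sym}(PS)$ cannot have a large kernel when $\|P\|_{HS}=1$, although when $P$ has low rank (rank $r$) only about $rW$ directions are available. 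This is why a single $A$ is insufficient and why there are many independent copies, giving $128$ factors. For Gaussian vectors, the small-ball probability $\mathbb P(\|AP+P^TA\|_{HS}\le \varepsilon)$ is bounded by $(C\varepsilon)^{k}$, where $k$ is the number of eigenvalues of the covariance (rescaled by $W$) of size at least $c$. Independence then gives $(C\varepsilon)^{128k}$.

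\textbf{Step 2 (net, split by structure of $P$).} We decompose the sphere by the effective rank of $P$. For $P$ close to rank $\le r$, the net has size $(C/\varepsilon)^{CrW}$, while the small-ball exponent is at least $128\cdot c\,rW$. Here we must check that $\mathrm{sym}(PS)$ has $\gtrsim rW$ nondegenerate directions, and that $128c$ exceeds the net entropy constant. For generic $P$ the exponent is $\sim W^2$ and the net has size $(C/\varepsilon)^{W^2}$, and $128\cdot 32\gg 1$ should win. Approximation by the net uses $\|\mathcal L\|\le C\sum\|A_i\|_{op}$ together with the bound $\mathbb P(\|A_i\|_{op}>C)\le e^{-cW}$.

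\textbf{Main obstacle.} The hard step is the quantitative count of large covariance eigenvalues of $S\mapsto\mathrm{sym}(PS)$ uniformly over $P$, including nearly antisymmetric-like or ill-conditioned $P$, where $PS$ can nearly cancel under symmetrization. The kernel of $P\mapsto AP+P^TA$ for a fixed $A$ already has dimension about $W^2/2$ (it contains $P=A^{-1}K$ with $K$ antisymmetric). So one cannot simply restrict to a fixed $A$; instead the dependence of the kernel directions on $A$ must be exploited across the independent copies. I would first try to bound the rank deficiency of the covariance through the singular values of $P$: expanding in the singular basis of $P$ shows that the degenerate directions number at most $W^2-c\sum_j\min(1,W\sigma_j^2)\,W$. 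I would then handle $P$ with spread spectrum by the generic net, and $P$ with concentrated spectrum by the low-rank net.
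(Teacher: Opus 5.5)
Your skeleton (a Gaussian small-ball bound for fixed $P$, plus a net stratified by the singular-value structure of $P$) is the one the paper uses for small $t$. However, the proposal has genuine gaps, and the first one breaks the integration step. Your target tail bound, $\mathbb P(s_{min}(\mathcal L)\le\varepsilon)\le C_p\varepsilon^{p+1}+C_pe^{-cW}$, carries an additive $e^{-cW}$ from the event $\{\max_i\|A_i\|_{op}>C\}$. That term does not decay as $\varepsilon\to0$, so it cannot be integrated against $\varepsilon^{-p-1}$. You patch this with a ``crude'' bound $\mathbb E\,s_{min}^{-2p}\le W^{C}$, but nothing in the proposal produces it. It is also not crude: any finite negative moment needs small-ball control of $s_{min}$ at every scale $\varepsilon\to0$, and a net argument with a fixed operator-norm cutoff cannot give that. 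The paper instead lets the cutoff grow with the scale, $M_0=C(\log(1/t)/\sqrt W)^{1/2}$, so that $\mathbb P(\|\mathcal L\|_{op}>M_0)\le t^{c\sqrt W}$. It then proves $\mathbb P(s_{min}\le t)\le t^{c\sqrt W}$ for $t<e^{-C\sqrt W}$ (Lemma~\ref{lem:small-t-estimate}). The range $e^{-C\sqrt W}\le t<\tau$ is covered by a separate estimate $\mathbb P(s_{min}\le\tau)\le e^{-cW}$ (Lemma~\ref{lem:fixed-scale}). That estimate is proved not by a net but by Gaussian concentration of the $2$-Lipschitz function $s_{min}$ together with $\mathbb E\,s_{min}\ge c$. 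The latter comes from strong convergence applied to the two pieces of $\mathcal L^*\mathcal L(P)=2\sum_iA_i^2P+2\sum_iA_iP^TA_i$, and this is where $m_0=128$ is used, via $2(\sqrt{128}-1)^2>8(\sqrt{128}+1)$.

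Your net step also does not close as stated. Let $M$ bound $\|\mathcal L\|_{op}$. Each singular value $\sigma_j(P)$ adds roughly $W\log_+(\sigma_j/\varepsilon)$ to the entropy of an $\varepsilon$-net, but only roughly $W\log_+(\sigma_j/(CM\varepsilon))$ to the small-ball exponent at the radius $(M+1)\varepsilon$ forced by the approximation (up to the $\sqrt r$, $\sqrt h$ normalizations). So singular values within a factor $\mathrm{poly}(M)$ of the scale cost entropy and buy nothing. A $P$ with one singular value of order one and of order $W$ singular values in that window defeats the union bound at scale $\varepsilon$ when $\log(1/\varepsilon)\ll W$, and no single notion of ``effective rank'' repairs this. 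The paper stratifies over the full dyadic profile $Q_{\mathbf k}$ and compares $H(\mathbf k,\epsilon)$ with $S(\mathbf k,\epsilon)$. It then picks, by pigeonhole, a profile-dependent scale $2^jt\in[t,t^{1/4}]$ at which most of the entropy comes from singular values far above the scale. This requires $\log(1/t)\gg\log W$, which is another reason constant scales need Lemma~\ref{lem:fixed-scale}.

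Finally, the linear-algebra input you call the main obstacle is not supplied, and Step 1 is misnormalized. For $P=\Id/\sqrt W$ the covariance of $AP+P^TA$ is $8W^{-2}\Id$, with no eigenvalue of size $c/W$, so your rule gives no small-ball gain for the most typical $P$. Likewise, a count such as $W^2-cW\sum_j\min(1,W\sigma_j^2)$ records no scale. What is needed is Lemma~\ref{lem:LP-singular-value-amplification}: $L_P(X)=XP+P^TX$ has at least $\frac W{24}\#\{j:\sigma_j(P)\ge s\}$ singular values $\ge s/(4\sqrt2)$. This is fed into $\mathbb P(\|L_P(A)\|_{HS}\le\epsilon)\le(C\epsilon\sqrt{W/N})^N/\prod_{i\le N}\sigma_i(L_P)$, where the $\sqrt N$ volume factor is what handles generic $P$. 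The symmetrization cancellation is ruled out by Claim A.7 in the proof of that lemma: HS-orthonormal $B_1,\dots,B_N$ with $\sum_i\operatorname{Tr}(B_i^2)\ge0$ span a space containing a subspace of dimension $\ge N/3$ on which $\|B+B^T\|_{HS}\ge\|B\|_{HS}$. The paper applies this to $\{w_jv_i^T\}$, with $w_j$ orthogonal to the left singular vectors of $P$, when $\operatorname{rank}P\le W/2$. In general it applies it to the symmetric span of $\{u_iu_j^T\}$, using positivity of a Cauchy matrix.
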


The proof uses two lemmas. The first shows that
$s_{min}(\mathcal L_{A_1,\ldots,A_{128}})\geq c$ with high
probability.
\begin{lemma}
\label{lem:fixed-scale} 
There exist $c,\tau>0$ such that
$$
\mathbb P\left(
s_{min}(\mathcal L_{A_1,\ldots,A_{128}})\leq\tau
\right)
\leq e^{-cW}
$$
for all sufficiently large $W$.
\end{lemma}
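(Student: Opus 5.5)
Our plan is an $\varepsilon$-net argument on the unit sphere of $\mathbb R^{W\times W}$ (dimension $W^2$), using the $128W(W+1)/2$ independent Gaussian entries to beat the net size. Write $\mathcal L=\mathcal L_{A_1,\ldots,A_{128}}$. First, the operator norm is controlled. Indeed $\|\mathcal L(P)\|_{HS}\le 2\sqrt{128}\max_i\|A_i\|_{op}\|P\|_{HS}$, and by GOE concentration $\mathbb P(\max_i\|A_i\|_{op}>K_0)\le e^{-cW}$ for some absolute $K_0$. On the complementary event, $\mathcal L$ is $C K_0$-Lipschitz. So if $\|\mathcal L(P)\|\ge 2\tau$ for every $P$ in a $\tau/(CK_0)$-net $\mathcal N$ of the unit sphere, then $s_{min}(\mathcal L)\ge\tau$. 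The net has size $|\mathcal N|\le (C/\tau)^{W^2}$.

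Second, we need a fixed-$P$ small-ball estimate. Fix a unit $P\in\mathbb R^{W\times W}$. For a single normalized GOE $A$, the map $A\mapsto AP+P^TA$ is linear, so $X=AP+P^TA$ is a centered Gaussian in $\mathbb R^{W\times W}_{sym}$. We will show its covariance $\Sigma_P$ has many eigenvalues of order $1/W$ times a constant. Concretely, we aim to show that $\operatorname{Tr}\Sigma_P\ge c$ and that $\|\Sigma_P\|_{op}\le C/W$. For the latter, note the operator $H\mapsto HP+P^TH$ has norm at most $2\|P\|_{op}\le 2$ and $\operatorname{Cov}(A)=\frac{2}{W}\Id$. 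For the former, we compute $\mathbb E\|AP+P^TA\|_{HS}^2=\frac{2}{W}\|H\mapsto HP+P^TH\|_{HS}^2\ge c$. Heuristically, this holds because $\|H\mapsto HP\|_{HS}^2\approx W\|P\|^2_{HS}$ while the cross term is only $O(\operatorname{Tr}(P)^2+\|P\|_{HS}^2)$. Hence $\Sigma_P$ has at least $cW$ eigenvalues $\ge c'/W$, uniformly in $P$. A standard Gaussian small-ball bound (projecting onto those eigendirections) then gives $\mathbb P(\|AP+P^TA\|_{HS}\le 2\tau)\le (C\tau)^{cW}$.

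Third, we tensorize and take a union bound. By independence, $\mathbb P(\|\mathcal L(P)\|\le 2\tau)\le (C\tau)^{128cW}$. This is the main obstacle: $(C\tau)^{128 cW}$ against a net of size $(C/\tau)^{W^2}$ does not close, since we only gain exponent $W$ per matrix while the net costs $W^2$. We therefore have to refine the second step to use the full covariance. Namely, $\Sigma_P$ acts on a space of dimension $W(W+1)/2$, and we want a large number $\gtrsim W^2$ of eigenvalues $\ge c/W$, not just $cW$. The needed statement is that $H\mapsto HP+P^TH$, restricted to symmetric $H$, has at least $cW^2$ singular values $\ge c$ for every unit-HS $P$. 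This fails for some $P$: for example, a rank-one $P=uv^T$ has image of dimension $O(W)$. So we split into cases. For ``spread'' $P$ (stable rank $\ge \delta W$), we get a small-ball bound $(C\tau)^{cW^2}$ per matrix, and with 128 matrices the exponent $128c$ exceeds $1$, which beats the net. For ``compressible'' $P$ (close to rank $\le\delta W$), we use a much smaller net of size $(C/\tau)^{C\delta W^2}$ over low-rank matrices, and then the per-matrix bound $(C\tau)^{cW\cdot\text{rank}}$ suffices, in the spirit of Rudelson--Vershynin. The number $128$ is chosen to make the exponents dominate the entropy in both regimes.

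Finally, we choose $\tau$ small so that $(C/\tau)^{W^2}(C\tau)^{128 c W^2}\le e^{-W^2}$ (and similarly in the compressible regime). Adding the $e^{-cW}$ operator-norm failure probability then yields $\mathbb P(s_{min}\le\tau)\le e^{-cW}$.
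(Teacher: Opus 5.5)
The gap is in the compressible regime. Your bound $\mathbb P(\|L_P(A)\|_{HS}\le 2\tau)\le (C\tau)^{cW\operatorname{rank}(P)}$ is false. The small-ball exponent sees only the singular values of $P$ that are large on the scale set by $\tau$, not the rank. Take $P$ with one singular value of order $1$ and $r-1$ further singular values of size about $\tau/\sqrt r$. The part of $L_P(A)$ generated by the small ones is a Gaussian of norm about $\tau$, so $\mathbb P(\|\mathcal L(P)\|_{HS}\le 2\tau)$ is no smaller than roughly $(C\tau)^{cW}$. A $\tau/(CK_0)$-net must still resolve those $r-1$ directions, at a cost of about $(CK_0)^{Wr}$ points. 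For $r$ a large constant the union bound over the rank-$r$ stratum therefore already fails.

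The natural repair is to let ``compressible'' mean ``within Hilbert--Schmidt distance $\rho$ of rank $\le\delta W$'' and pass to the low-rank part $P_0$. That step costs $K_0\rho$, so you need $\inf\|\mathcal L(P_0)\|_{HS}\gtrsim K_0\rho$ over low-rank $P_0$. But your spread case only beats the $(C/\tau)^{W^2}$ net when $\rho\gg\tau$, since the tail mass must produce $\gtrsim W^2$ singular values of $L_P$ well above $\tau/\sqrt W$. So the low-rank class has to be handled at a strictly larger scale, and inside it the same non-flat-spectrum problem reappears. Iterating over ranks $\delta W,\delta^2W,\ldots$ takes $\asymp\log W$ steps, each inflating the scale. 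What comes out is $s_{\min}(\mathcal L)\ge t$ only for $t$ tending to $0$ with $W$.

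This is exactly why the paper's own net argument, Lemma~\ref{lem:small-t-estimate}, is restricted to $t<e^{-C\sqrt W}$. It pigeonholes over the dyadic scales $2^jt$ and needs $\log(1/t)$ much larger than $\log W$.

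There are also two smaller slips. First, $L_P$ cannot have $cW^2$ singular values $\ge c$, since $\|L_P\|_{HS}^2\le 2W+2$; the right threshold is $c/\sqrt W$. Second, ``stable rank $\ge\delta W$'' and ``close to rank $\le\delta W$'' do not cover the sphere. For example, one singular value $1/\sqrt2$ plus a flat remainder is in neither class.

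The missing idea is that a fixed-scale bound with only an $e^{-cW}$ tail needs no union bound at all. Your net forces you to beat $e^{cW^2}$ points, far more than the statement requires. The map $(A_1,\ldots,A_{128})\mapsto s_{\min}(\mathcal L)$ is $2$-Lipschitz in Hilbert--Schmidt norm, and the $A_i$ have covariance $\frac{2}{W}\Id$. Gaussian concentration therefore gives $\mathbb P(s_{\min}(\mathcal L)\le\mathbb E s_{\min}(\mathcal L)-s)\le e^{-Ws^2/16}$, and it suffices to show $\mathbb E s_{\min}(\mathcal L)\ge c$.

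The paper obtains this from $\mathcal L^*\mathcal L(P)=2\sum_iA_i^2P+2\sum_iA_iP^TA_i$:
\begin{enumerate}
\item Strong convergence gives $s_{\min}(P\mapsto\sum_iA_i^2P)\to(\sqrt{128}-1)^2$.
\item The cross term $\Phi(P)=\sum_iA_iP^TA_i$ has positive semidefinite mean, and $\mathbb E\|\Phi-\mathbb E\Phi\|_{op}$ is asymptotically at most $4(\sqrt{128}+1)$.
\item $m_0=128$ is chosen so that $2(\sqrt{m_0}-1)^2$ beats $8(\sqrt{m_0}+1)$, giving $\mathbb E s_{\min}^2\ge 1$.
\item This is upgraded to $\mathbb E s_{\min}\ge c$ using $(\mathbb E s_{\min}^2)^2\le\mathbb E s_{\min}\,\mathbb E s_{\min}^3$ and $\mathbb E s_{\min}^3\le C$.
\end{enumerate}
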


The second estimates the lower tail of $s_{min}$ near $0$.
\begin{lemma}
\label{lem:small-t-estimate}
There exist $c,C>0$ and $W_0\in\mathbb N$ such that for any
$W\geq W_0$ and $0<t<e^{-C\sqrt{W}}$ we have 
$$\mathbb{P}(s_{min}(\mathcal L_{A_1,...,A_{128}})\leq t)\leq t^{c\sqrt{W}}.$$
\end{lemma}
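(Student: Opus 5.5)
The statement immediately above is Lemma \ref{lem:small-t-estimate}, the lower-tail bound $\mathbb P(s_{min}(\mathcal L)\le t)\le t^{c\sqrt W}$ for $t<e^{-C\sqrt W}$. I will prove it by a net-free, dimension-reduction argument that exploits the independence of the $128$ GOE matrices. The tool is a conditioning (``one block at a time'') small-ball estimate.

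\textbf{Reduction to a small-ball bound.} If $s_{min}(\mathcal L)\le t$, there is a unit $P\in\mathbb R^{W\times W}$ with $\|A_iP+P^TA_i\|_{HS}\le t$ for every $i$. Let $V_i:=\ker$-near subspace of the map $P\mapsto A_iP+P^TA_i$, i.e. the span of the right singular vectors of $\mathcal L_{A_i}$ with singular values at most $t^{1/2}$. The first step is to show, for a single GOE $A$, that this near-kernel has dimension at most $W$ (the exact kernel of $P\mapsto AP+P^TA$ is $\{A^{-1}S: S\ \text{skew}\}$, of dimension $W(W-1)/2$, which is too big). So this route must instead use that $\mathcal L_{A_1}$ already kills only a $W(W-1)/2$-dimensional subspace $K_1=A_1^{-1}\mathfrak{so}_W$. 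Requiring $P$ to be nearly in $K_1\cap K_2\cap\cdots$ then shrinks the dimension.

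\textbf{Iterative dimension reduction.} First, condition on $A_1$. Near-kernel vectors $P$ satisfy $P=A_1^{-1}S+O(t\|A_1^{-1}\|)$ with $S$ skew. Second, the next matrix $A_2$ acts on $S$ via $S\mapsto A_2A_1^{-1}S - SA_1^{-1}A_2$. This is a ``commutator-type'' operator, and for generic $A_1,A_2$ its kernel on skew matrices has dimension $O(W)$, roughly the centralizer dimension. Third, a couple more independent matrices cut the near-kernel down to dimension $0$. To make each step quantitative I will use the following fact. For a fixed subspace $K$ of dimension $D$ and a Gaussian linear map $G$ whose image of each unit vector has covariance bounded below on an $N$-dimensional space, Gaussian anti-concentration gives
\[
\mathbb P\bigl(\exists P\in K,\ \|P\|=1,\ \|G P\|\le t\bigr)\le (Ct)^{N-D}\cdot(\text{net factor } (C/t)^{D}).
\]
Using the $64$ pairs of matrices, I will estimate the relevant singular values at each stage and obtain exponents $N-2D\gtrsim \sqrt W$. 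The $\sqrt W$ arises because the $e^{-C\sqrt W}$ restriction on $t$ absorbs the operator-norm factors $\|A_i^{-1}\|$. Those factors have only stretched-exponential tails, coming from GOE least singular value bounds: $\mathbb P(\|A^{-1}\|>s)\le Cs^{-1}W$ is too weak, so I instead intersect with the event $\|A_i\|\le C$ and use Lemma \ref{lem:fixed-scale} to handle moderate scales.

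\textbf{Main obstacle.} The main obstacle is the middle step: showing that the commutator-type operators on skew matrices have a near-kernel of dimension $O(\sqrt W)$ at scale $t$, with probability $1-t^{c\sqrt W}$. I would attack it by restricting to a random $\sqrt W$-dimensional coordinate compression, using Gaussian smoothing only through those coordinates, and applying a Rudelson--Vershynin compressible/incompressible split. Compressible $P$ are handled by a union bound over supports. Incompressible $P$ are handled by the invertibility-via-distance argument, which yields $t^{c\sqrt W}$ from $\sqrt W$ independent rows.
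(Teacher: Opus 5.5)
Your proposal has a genuine gap: its key steps either fail or are left open.

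First, the scheme is built on inverting $A_1$: you write $P=A_1^{-1}S+O(t\|A_1^{-1}\|)$ and then work with $A_2A_1^{-1}$. But $\|A_1^{-1}\|$ is heavy-tailed. A normalized GOE matrix has about $2\epsilon W/\pi$ eigenvalues in $[-\epsilon,\epsilon]$, so $\mathbb P(\|A_1^{-1}\|\geq s)$ is of order $W/s$. It is not stretched-exponentially small, as you assert. The event $\|A_1^{-1}\|\geq t^{-1/2}$ already has probability of order $Wt^{1/2}$, far larger than $t^{c\sqrt W}$. Even the event that all $128$ matrices are this badly conditioned has probability of order $W^{128}t^{64}$, which is a fixed power of $t$. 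Intersecting with $\{\|A_i\|\le C\}$ or appealing to Lemma \ref{lem:fixed-scale} says nothing about the bottom of the spectrum of the $A_i$.

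In the paper the $\sqrt W$ has a different source. Only the \emph{upper} tail is used, through $\Omega=\{\|\mathcal L\|_{op}\le M_0\}$ with $M_0=C(\log(1/t)/\sqrt W)^{1/2}$. This gives $\mathbb P(\Omega^c)\le Ce^{-cWM_0^2}\le Ct^{c\sqrt W}$. The restriction $t<e^{-C\sqrt W}$ is what makes $M_0$ large enough for the GOE norm tail, while $\log M_0$ stays negligible in the entropy-versus-small-ball comparison. Your opening claim that the near-kernel of $\mathcal L_{A_1}$ has dimension at most $W$ is also false, as you notice yourself: it contains $A_1^{-1}\mathfrak{so}_W$.

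Second, and more fundamentally, your counting fact needs every unit $P$ in the subspace to give $A_2P+P^TA_2$ a covariance bounded below on a high-dimensional space. This fails. The subspace $K_1=A_1^{-1}\mathfrak{so}_W$ contains $P=A_1^{-1}(xy^T-yx^T)$. For such $P$ one has $A_2P+P^TA_2=ay^T+ya^T+bx^T+xb^T$, with $a=A_2A_1^{-1}x$ and $b=-A_2A_1^{-1}y$. This Gaussian is supported on at most $2W$ dimensions, so its small-ball probability is at best $t^{O(W)}$ per matrix. Against it stands a net of size $(C/t)^{W(W-1)/2}$.

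Low-rank $P$ are cheap to cover but only weakly anti-concentrated, so the sphere must be stratified by singular-value profile. This is the missing idea, and the paper's argument runs as follows.
\begin{itemize}
\item It fixes $P$ and regards $\mathcal L(P)=(L_P(A_i))_{i\le 128}$ as a Gaussian in the $A_i$.
\item Lemma \ref{lem:LP-singular-value-amplification} shows that each singular value $\geq s$ of $P$ yields about $W/24$ singular values of $L_P$ of size at least $s/(4\sqrt2)$.
\item This gives a small-ball exponent of roughly $128W/25$ per significant singular value (Claim \ref{claim:small-ball-estimate}).
\item That exponent is compared with the entropy of about $3W$ per singular value of the profile class $Q_{\mathbf k}$ (Claim \ref{claim:profile-entropy}).
\item The scale $2^jt$ is chosen by pigeonholing.
\end{itemize}
The choice $m_0=128$ is exactly what makes the small-ball exponent beat the entropy.

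Finally, the step you call the main obstacle is left unresolved, and the tools you propose for it do not fit. The operator $S\mapsto A_2A_1^{-1}S-SA_1^{-1}A_2$ on skew matrices has no independent rows or columns: all its entries are correlated linear forms in the same $A_2$. So the Rudelson--Vershynin compressible/incompressible split and the invertibility-via-distance argument do not apply as stated.
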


Now we prove Proposition \ref{prp:structured-singular-value-estimate}
assuming the above two lemmas, which are proved afterward.
\begin{proof}[Proof of Proposition \ref{prp:structured-singular-value-estimate}]
By Lemmas \ref{lem:fixed-scale} and \ref{lem:small-t-estimate}, after
adjusting the absolute constants, for every $0<t<\tau$ we have
$$
\mathbb P\left(
s_{min}(\mathcal L_{A_1,\ldots,A_{128}})\leq t
\right)
\leq t^{c\sqrt W}.
$$
Indeed, this follows from Lemma \ref{lem:small-t-estimate} when
$t<e^{-C\sqrt W}$, from Lemma \ref{lem:fixed-scale} and monotonicity
when $e^{-C\sqrt W}\leq t<\tau$.

Hence, for any $p>0$, we can use the layer-cake decomposition to
estimate
\begin{align*}
\mathbb E s_{min}(\mathcal L_{A_1,\ldots,A_{128}})^{-p}
&=
p\int_0^\infty t^{-p-1}
\mathbb P\left(
s_{min}(\mathcal L_{A_1,\ldots,A_{128}})\leq t
\right)\,dt\\
&\leq
p\int_0^\tau t^{c\sqrt W-p-1}\,dt
+p\int_\tau^\infty t^{-p-1}\,dt\\
&\leq p + \tau^{-p},
\end{align*}
where in the last inequality 
we took $W$ large enough to make the first 
integral at most $1$. 
\end{proof}

\subsection{Proof of Lemma \ref{lem:fixed-scale}}

For convenience, put $m_0=128$ and take $\mathcal L$ to be the operator
$$\mathcal L := \mathcal L_{A_1,...,A_{m_0}}.$$

\begin{proof}[Proof of Lemma \ref{lem:fixed-scale}]
First note that $s_{min}(\mathcal L)$ is a $2$ 
Lipshitz function of $A_1,...,A_{m_0}$, viewed as elements 
of $\mathbb{R}^{W\times W}_{sym}$ normed by the standard Hilbert-Schmidt norm. 
Thus Gaussian concentration of measure estimates, see (2.35) in \cite{Ledoux2001} imply
$$\mathbb{P}(s_{min}(\mathcal L)\leq \mathbb{E}s_{min}(\mathcal L) - t)\leq \exp\left(-\frac{Wt^2}{16}\right).$$
Thus, it suffices to prove 
$$\mathbb{E}s_{min}(\mathcal L)\geq c$$
for some $c>0$ independent of $W$.

To do this compute
$$(\mathcal L^{*}\mathcal L) (P) = 2\sum_{i=1}^{m_0}A_i^2P + 2\sum_{i=1}^{m_0}A_iP^TA_i.$$
Hence
\begin{equation}
\label{eq:expected-least-singular-value}
\mathbb{E}s_{min}(\mathcal L)^2  = \mathbb{E}s_{min}(\mathcal L^{*}\mathcal L) \geq 2\mathbb{E}(s_{min}(S) - \|\Phi-E\Phi\|_{op}),
\end{equation}
where we have defined the operators 
\begin{align*}
S(P)& := \sum_{i=1}^{m_0}A_i^2P\\
\Phi(P) &:= \sum_{i=1}^{m_0} A_iP^TA_i,
\end{align*}
and used that $\mathbb{E}\Phi$ is positive semi-definite since $A_i$ is GOE so we have
$$\mathbb{E}\langle P, \Phi(P)\rangle = \frac{m_0}{W}(\|P\|_{HS}^2 + \operatorname{Tr}(P)^2)\geq 0.$$

Intuitively, $\mathbb{E}A_i^2 =\operatorname{Id}$, 
so we expect $s_{min}(S)$ to be of order $m_0$, 
and $\Phi-\mathbb{E}\Phi$ is a sum of $m_0$ 
independent random operators, so we expect its operator 
norm to be of order $\sqrt{m_0}$ (due to sqaure root cancellation
in the random sum). One could prove this with an 
$\epsilon$-net argument, however 
we need explicity constants, and obtaining those
is quite involved. 
Instead, we invoke
strong convergence of GOE matrices, namely 
from the paper \cite{ChenGarzaVargasVanHandel2026},
which gives much sharper results much easier. 

Applying Corollary 1.2 
in \cite{ChenGarzaVargasVanHandel2026} with the polynomial 
$$p(x_1,...,x_{m_0}) = 10m_0 - \sum_{i=1}^{m_0}x_{i}^2,$$
gives that 
$$\|p(A_1,...,A_{m_0})\|_{op} \to \|p(s_1,...,s_{m_0})\|,$$
in probability as $W\to \infty$, 
where $s_1,...,s_{m_0}$ is an independent semi-circular family. 
Since $10m_0 \geq \|\sum_{i=1}^{m_0}A_i^2\|,$ with probability tending to 
$1$ as $W\to \infty$, this implies 
$$10m_0 - s_{min}(S) \to \|p(s_1,...,s_{m_0})\|,$$
in probability as $W\to \infty$. 
Computing the RHS explicity gives 
$$s_{min}(S)\to (\sqrt{m_0}-1)^2,$$
in probability as $W\to \infty$. $s_{min}(S)$ 
is uniformly integrable in $W$, and so this implies 
$$\mathbb{E}s_{min}(S)\to \left(\sqrt{m_0}-1\right)^2,$$
as $W\to\infty.$

To estimate $\mathbb{E}\|\Phi-\mathbb{E}\Phi\|_{op}$, we again invoke 
\cite{ChenGarzaVargasVanHandel2026}. If $A_1',..., A_{m_0}'$ denote 
independent GOE matrices, then writing $A\otimes B$ for the operator 
$$P\mapsto APB,$$
we have 
\begin{align*}
\mathbb{E}\|\Phi-\mathbb{E}\Phi\|_{op}
& = \mathbb{E}\|\sum_{i=1}^{m_0}A_i\otimes A_i - \mathbb{E}\sum_{i=1}^{m_0}A'_i\otimes A'_i\|_{op}\\
& \leq \mathbb{E}\|\sum_{i=1}^{m_0}A_i\otimes A_i - \sum_{i=1}^{m_0}A'_i\otimes A'_i\|_{op}\\
& = 2\mathbb{E}\|\sum_{i=1}^{m_0} A_i\otimes A_i'\|.
\end{align*}
The first line used that we can replace $P^T$ with $P$ in 
the definition of $\Phi$ without changing its operator norm. 
The second line used Jensen's inequality, and the third line
used that $A_i\pm A_i'$ has the same law as $\sqrt{2}A_i$.
Now invoking Corollary 9.6 in \cite{ChenGarzaVargasVanHandel2026} 
and arguing similarly as above gives that 
$$\mathbb{E}\|\Phi-\mathbb{E}\Phi\|_{op}\leq 2\mathbb{E}\|\sum_{i=1}^{m_0} A_i\otimes A_i'\|\to 2\|\sum_{i=1}^{m_0}s_i\otimes t_i\|,$$
as $W\to \infty$ where $s_i$ and $t_i$ are independent semi-circular families. 
The RHS can be estimated as 
$$2\|\sum_{i=1}^{m_0}s_i\otimes t_i\| \leq  4(\sqrt{m_0}+1).$$
This follows from the noncommutative Khintchine inequality, 
see Proposition 4.8 in \cite{HaagerupPisier1993}. 

Combining the above estimates with \eqref{eq:expected-least-singular-value} gives 
that for any $\epsilon>0$ we have 
$$\mathbb{E}s_{min}(\mathcal L)^2 \geq 2(1-\epsilon)(\sqrt{m_0}-1)^2 - 8(1+\epsilon)(\sqrt{m_0}+1),$$
for all $W$ sufficiently large, depending on $\epsilon$.
One can check that taking $\epsilon$ sufficiently small and using that $m_0 = 128$ implies that 
$$\mathbb{E}s_{min}(\mathcal L)^2 \geq 1,$$
for all $W$ sufficiently large. 

To finish note that Cauchy-Schwarz,
operator norm estimate on GOE matriecs, 
and the estimate $s_{min}(\mathcal L)\leq 2\sum_{i=1}^{m_0}\|A_i\|_{op}$
implies 
$$1\leq (\mathbb{E}s_{min}(\mathcal L)^2 )^2 \leq (\mathbb{E} s_{min}(\mathcal L))(\mathbb{E} s_{min}(\mathcal L)^3)\leq 1000 m_0\mathbb{E} s_{min}(\mathcal L).$$
Hence $\mathbb{E}s_{min}(\mathcal L)\geq c $ for some $c>0$ independent of $W$. 
\end{proof}

\subsection{Proof of Lemma \ref{lem:small-t-estimate}}

For convenience, put $m_0=128$ and take $\mathcal L$ to be the operator
$$\mathcal L := \mathcal L_{A_1,...,A_{m_0}}.$$

\begin{proof}[Proof of Lemma \ref{lem:small-t-estimate}]
We take $W$ sufficiently large at various points in the proof. Let
$$
M_0
=
C\left(
\frac{\log(1/t)}{\sqrt W}
\right)^{1/2},
$$
and define the event
$$
\Omega:=\{\|\mathcal L\|_{op}\leq M_0\}.
$$
To estimate the probability $s_{min}(\mathcal L)\leq t$, we split the
unit sphere in
$\mathbb R^{W\times W}$ into pieces as follows. For any
$\mathbf k=(k_1,\ldots,k_r)\in\mathbb Z^r$, define
$$
Q_{\mathbf k}
:=
\left\{
P\in\mathbb R^{W\times W}:\|P\|_{HS}=1,
\ \operatorname{rank}(P)=r,
\ \sigma_i(P)\in (2^{-k_i-1},2^{-k_i}]
\right\}.
$$
Then we can estimate
\begin{equation}
\label{eq:first-reduction}
\begin{aligned}
\mathbb P\left(s_{min}(\mathcal L)\leq t\right)
&=
\mathbb P\left(
\inf_{\|P\|_{HS}=1}\|\mathcal L(P)\|_{HS}\leq t
\right)\\
&\leq
\mathbb P(\Omega^c) +
\sum_{\mathbf k}
\mathbb P\left(
\Omega\cap
\left\{
\inf_{P\in Q_{\mathbf k}}\|\mathcal L(P)\|_{HS}\leq2t
\right\}
\right).
\end{aligned}
\end{equation}
Here the last summation is over $\mathbf k = (k_1,...,k_r)\in \mathbb{N}^{r}$ with $1\leq r\leq W$,
$Q_{\mathbf k}\ne\varnothing$,
and $k_r\leq\lceil\log_2(CM_0\sqrt W/t)\rceil$. We can restrict 
to this set of $\mathbf{k}$ using the bound on $\|\mathcal L\|_{op}$ from $\Omega$, 
and truncating
any $P\in\mathbb R^{W\times W}$ to its singular values larger than
$t/(CM_0\sqrt W)$. 

The first term in \eqref{eq:first-reduction} can be estimated using
standard GOE operator-norm estimates as
\begin{equation}
\label{eq:GOE-operator-norm-estimate}
\mathbb P(\Omega^c)
\leq
m_0\mathbb P\left(
\|A_1\|_{op}>\frac{M_0}{2\sqrt{m_0}}
\right)
\leq
Ce^{-cWM_0^2}\leq Ct^{cW}.
\end{equation}

We estimate the probabilities in the sum in \eqref{eq:first-reduction} using an
$\epsilon$-net versus small-ball probability argument.
The entropy of $Q_{\mathbf k}$
is captured by the following
quantity. For any $\mathbf k=(k_1,\ldots,k_r)$ and $\epsilon>0$,
define
$$
H(\mathbf k,\epsilon)
:=
3W\sum_{i=1}^r
\log_+\left(
\frac{C_0 2^{-k_i}\sqrt r}{\epsilon}
\right).
$$
Here $C_0$ is a sufficiently large absolute constant. 
The following bounds the entropy of $Q_{\mathbf k}$ at scale
$\epsilon$ using $H(\mathbf k,\epsilon)$.
\begin{claim}
\label{claim:profile-entropy}
For any $\mathbf k$, with $Q_{\mathbf k} \neq \emptyset$, and $\epsilon>0$, there exists an $\epsilon$-net
$\mathcal N\subset Q_{\mathbf k}$ in the $\|\cdot\|_{HS}$ norm 
satisfying
$$
\log |\mathcal N|
\leq
H(\mathbf k,\epsilon).
$$
\end{claim}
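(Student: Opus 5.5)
We will build the net from the singular value decomposition. Every $P\in Q_{\mathbf k}$ can be written as
\[
P=\sum_{i=1}^r \sigma_i u_i w_i^T ,
\]
where $u_1,\ldots,u_r$ and $w_1,\ldots,w_r$ are orthonormal families in $\mathbb R^W$ and $\sigma_i\in(2^{-k_i-1},2^{-k_i}]$. The plan is to discretize each rank-one piece $\sigma_i u_i w_i^T$ separately, at a resolution adapted to its size $2^{-k_i}$, and to allocate the error budget so that the total error is at most $\epsilon$. We then pass from a net of arbitrary points near $Q_{\mathbf k}$ to a net inside $Q_{\mathbf k}$ by the standard halving trick. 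This replaces the net scale $\epsilon$ by $\epsilon/2$ and costs only a constant, which $C_0$ absorbs.

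\emph{Step 1: discretize the factors.} For each $i$ we take $\delta_i:=\epsilon/(C\sqrt r\,2^{-k_i})$. If $\delta_i\geq 2$, we discard the $i$-th term entirely: its contribution is $\sigma_i u_iw_i^T$, of HS norm at most $2^{-k_i}\leq \epsilon/(C\sqrt r)$.

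If $\delta_i<2$, we choose $\tilde u_i,\tilde w_i$ from a $\delta_i$-net of the unit sphere $S^{W-1}$ of size $(3/\delta_i)^W$. We also choose $\tilde\sigma_i$ from a grid of the interval $(2^{-k_i-1},2^{-k_i}]$ with mesh $\epsilon/(C\sqrt r)$, which has $O(1/\delta_i)$ points. The per-term error is then
\[
\|\sigma_iu_iw_i^T-\tilde\sigma_i\tilde u_i\tilde w_i^T\|_{HS}\leq 3\cdot 2^{-k_i}\delta_i+\epsilon/(C\sqrt r)\leq \epsilon/(C'\sqrt r).
\]

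\emph{Step 2: sum the errors.} The triangle inequality over the $r$ terms only gives a total error of $r\cdot\epsilon/(C'\sqrt r)=\sqrt r\,\epsilon/C'$. This is too weak, and fixing it is the main obstacle. Orthogonality should rescue the estimate. The errors $\sigma_i(u_i-\tilde u_i)w_i^T$ have mutually orthogonal row spaces, because the $w_i$ are orthonormal, so
\[
\Bigl\|\sum_i \sigma_i(u_i-\tilde u_i)w_i^T\Bigr\|_{HS}^2=\sum_i \sigma_i^2\|u_i-\tilde u_i\|^2 .
\]
The same holds for the $\tilde u_i(w_i-\tilde w_i)^T$ parts after grouping. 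The trouble is that the $\tilde u_i$ are not orthonormal, so the grouping does not work directly.

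To avoid this, we write $\sum_i\sigma_i u_iw_i^T - \sum_i\tilde\sigma_i\tilde u_i\tilde w_i^T = (U\Sigma-\tilde U\tilde\Sigma)W^T+\tilde U\tilde\Sigma(W-\tilde W)^T$ and bound the second term through the operator norm of $\tilde U$. That operator norm is at most $\|U\|_{op}+\|\tilde U-U\|_{HS}\leq 2$, and the Frobenius norm of $\tilde\Sigma(W-\tilde W)^T$ is $(\sum\tilde\sigma_i^2\|w_i-\tilde w_i\|^2)^{1/2}$. With the choice of $\delta_i$ this gives total error at most $(\sum_i\epsilon^2/(C r))^{1/2}\leq\epsilon/2$. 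The discarded terms contribute similarly, at most $(\sum_i\epsilon^2/(Cr))^{1/2}$. This is why the factor $\sqrt r$ appears in $H$.

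\emph{Step 3: count.} The log-cardinality is
\[
\sum_{i:\delta_i<2}\bigl(2W\log(3/\delta_i)+\log(C/\delta_i)\bigr)\leq 3W\sum_i\log_+\bigl(C_0 2^{-k_i}\sqrt r/\epsilon\bigr)=H(\mathbf k,\epsilon),
\]
for $C_0$ large, using $W\geq1$. Finally, we replace each net point by a point of $Q_{\mathbf k}$ within $\epsilon/2$ of it, whenever such a point exists. This yields an $\epsilon$-net contained in $Q_{\mathbf k}$, again at the cost of a constant absorbed into $C_0$.
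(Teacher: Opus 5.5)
There is a genuine gap in Step 2, at the inequality $\|\tilde U\|_{op}\le\|U\|_{op}+\|\tilde U-U\|_{HS}\le 2$. The only control you have on the columns is $\|u_i-\tilde u_i\|\le\delta_i$, so $\|\tilde U-U\|_{HS}^2\le\sum_i\delta_i^2$. But $\delta_i=\epsilon/(C\sqrt r\,2^{-k_i})$ is of order one, anywhere below $2$, for every index whose singular value is comparable to $\epsilon/\sqrt r$. Up to $r-1$ indices can be of that type while $\sum_i\sigma_i^2=1$.

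This is an actual failure, not a lossy estimate. For $\delta_i\ge\sqrt2$ the two-point set $\{\pm e_1\}$ is an admissible $\delta_i$-net of $S^{W-1}$. Suppose $r-1$ indices have $\delta_i\in[\sqrt2,2)$ and the corresponding $u_i,w_i$ all have positive inner product with $e_1$. Then $\tilde u_i=\tilde w_i=e_1$ for all of them. So $\tilde P$ contains the term $\bigl(\sum_{i\ge2}\tilde\sigma_i\bigr)e_1e_1^T$, whose norm is of order $\sqrt r\,\epsilon/C$, because each $\tilde\sigma_i>\epsilon/(4C\sqrt r)$. The corresponding part of $P$ has norm at most $\epsilon/C$. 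So your set is not an $\epsilon/2$-net, and the final halving step has nothing to act on.

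Refining the sphere nets does not repair this. Forcing $\sum_i\|u_i-\tilde u_i\|^2\le1$ needs resolution about $1/\sqrt r$ on those indices, which costs about $W\log\sqrt r$ each. Their share of $H(\mathbf k,\epsilon)$ is only $3W\log_+(C_02^{-k_i}\sqrt r/\epsilon)=O(W)$.

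The missing idea is to make the approximating frames themselves orthonormal, i.e.\ to apply your halving trick to the frames rather than to $P$. Cover the set of orthonormal $r$-frames $(u_1,\dots,u_r)$ by products of balls of radii $\delta_i$ on $S^{W-1}$. Keep one orthonormal frame from each nonempty product cell. This gives column errors at most $2\delta_i$ at the same cardinality, and the same applies to the $w_i$. Then $\tilde U$ and $\tilde W$ have orthonormal columns, and
\[
U\Sigma W^T-\tilde U\tilde\Sigma\tilde W^T=(U-\tilde U)\Sigma W^T+\tilde U(\Sigma-\tilde\Sigma)W^T+\tilde U\tilde\Sigma(W-\tilde W)^T .
\]
The three terms have HS norms $\bigl(\sum_i\sigma_i^2\|u_i-\tilde u_i\|^2\bigr)^{1/2}$, $\|\Sigma-\tilde\Sigma\|_{HS}$ and $\bigl(\sum_i\tilde\sigma_i^2\|w_i-\tilde w_i\|^2\bigr)^{1/2}$, each $O(\epsilon/C)$. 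Your count in Step 3 goes through unchanged.

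This is exactly the paper's construction. The paper takes the net $\mathcal N_1$ inside $O(W)$ and the diagonal net $\mathcal N_2$ on the constraint set. Then the products $\tilde U\tilde D\tilde V$ lie in $Q_{\mathbf k}$ directly, and the three-term triangle inequality uses orthogonal invariance of the HS norm.
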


\begin{proof}[Proof of Claim A.4]
Let $\mathbf k = (k_1,...,k_r)$, 
and take $\epsilon>0$. We construct 
our $\epsilon$-net through 
singular value decomposition. There exists a subset
of the orthogonal matrices $\mathcal N_1 \subset O(W)\subset \mathbb{R}^{W\times W}$ 
such that for each $U\in O(W)$ there is a $\tilde{U}\in \mathcal N_1$
such that for $i = 1,...,r$ the $i$-th columns of $\tilde{U}$
and $U$ are  within $2^{k_i}\epsilon/(3\sqrt{r})$ in $\|\cdot \|_{2}$ norm. 
By the usual volume counting argument, we can find such a set with cardinality at most 
$$\exp\left(\sum_{i=1}^{r}W\log_{+}(C2^{-k_i}\sqrt{r}/\epsilon)\right).$$
Now take a set $\mathcal N_2 \subset \mathbb{R}^{W\times W},$
of diagonal matrices,
such that for any diagonal $D\in \mathbb{R}^{W\times W}$
with $D_{ii}\in [2^{-k_i-1}, 2^{-k_i}]$ for $i = 1,...,r$
$D_{jj} = 0$ for $j>r$ and $\sum_{i=1}^{r} D_{ii}^2 = 1$,
there exists $\tilde{D}\in \mathcal N_2$, satisfying the same constraints, 
with 
$\sup_{i=1,...,r}|D_{ii}-\tilde{D}_{ii}|\leq \epsilon/(3\sqrt{r}).$
We can find such a set with cardinality at most
$$\exp\left(\sum_{i=1}^{r}\log_{+}(C2^{-k_i}\sqrt{r}/\epsilon)\right).$$
Now note that, by taking a singular value decomposition of $P$, 
$$\mathcal N := \{\tilde{U}\tilde{D}\tilde{V}:\tilde{U}, \tilde{V}^T\in \mathcal N_1, \tilde D\in \mathcal N_2\}\subset Q_{\mathbf k}$$
is an $\epsilon$-net for $Q_{\mathbf k}$ in the HS-norm, since for any $U,V,\tilde U, \tilde V\in O(W)$
$$\|UDV - \tilde U \tilde D \tilde V\|_{HS} \leq \|(U-\tilde U)D\|_{HS} + \|(D - \tilde D)\|_{HS}  + \|\tilde D (V-\tilde V)\|_{HS}.$$
Using the cardinality estimates above then gives
$$\log |\mathcal N|\leq (2W+1)\sum_{i=1}^{r} \log_{+}(C2^{-k_i}\sqrt{r}/\epsilon).$$
\end{proof}

For any $\mathbf k$, the
probability that $\mathcal L(P)$ lies in the $\epsilon$-ball for
$P\in Q_{\mathbf k}$
is measured by the following quantity,
$$
S(\mathbf k,\epsilon)
:=
-\log\sup_{P\in Q_{\mathbf k}}
\mathbb P\left(\|\mathcal L(P)\|_{HS}\leq\epsilon\right).
$$
Note that for any $\epsilon\geq2t$ and $\mathbf k$, 
Claim \ref{claim:profile-entropy} and
a union bound give
\begin{equation}
\label{eq:entropy-vs-prob}
\mathbb P\left(
\Omega\cap
\left\{
\inf_{P\in Q_{\mathbf k}}\|\mathcal L(P)\|_{HS}\leq2t
\right\}
\right)
\leq
\exp\left(
H(\mathbf k,\epsilon)
-S(\mathbf k,(M_0+1)\epsilon)
\right).
\end{equation}

Using Lemma \ref{lem:LP-singular-value-amplification}, we have the
following estimate on $S(\mathbf k,\epsilon)$.
\begin{claim}
\label{claim:small-ball-estimate}
For any $\epsilon>0$ and $\mathbf k=(k_1,\ldots,k_r)$, we have
\begin{equation}
\label{eq:S-profile-bound}
S(\mathbf k,\epsilon)
\geq
\frac{m_0W}{25}
\max_{1\leq h\leq r}
\left[
\sum_{i=1}^h
\log\left(
\frac{2^{-k_i}\sqrt h}{C\epsilon}
\right)
\right]_+.
\end{equation}
\end{claim}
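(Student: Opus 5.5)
We first reduce to a single matrix and then use independence of the $m_0$ blocks. Fix $P\in Q_{\mathbf k}$. Since $A_1,\ldots,A_{m_0}$ are independent and $\|\mathcal L(P)\|_{HS}\leq\epsilon$ forces $\|A_iP+P^TA_i\|_{HS}\leq\epsilon$ for every $i$, we get
\[
\mathbb P(\|\mathcal L(P)\|_{HS}\leq\epsilon)\leq \mathbb P(\|AP+P^TA\|_{HS}\leq\epsilon)^{m_0},
\]
with $A$ a single normalized GOE matrix. So it suffices to show, for each $1\leq h\leq r$,
\[
-\log\mathbb P(\|AP+P^TA\|_{HS}\leq\epsilon)\geq \frac{W}{25}\sum_{i=1}^h\log\Bigl(\frac{2^{-k_i}\sqrt h}{C\epsilon}\Bigr),
\]
whenever the right side is positive. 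The bound is then applied for each $h$ separately and the maximum is taken.

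The map $A\mapsto AP+P^TA$ is linear in the Gaussian vector $A$. The image is therefore a centered Gaussian vector in $\mathbb R^{W\times W}_{sym}$. Its small-ball probability is controlled by the singular values of the linear map $\mathcal T_P:H\mapsto HP+P^TH$ from $\mathbb R^{W\times W}_{sym}$ to itself, with $A$ having covariance of order $W^{-1}$. This is where I expect to use Lemma \ref{lem:LP-singular-value-amplification}, which is cited but not shown. Presumably it says: if a linear map $\mathcal T$ has at least $N$ singular values $\geq s$, and $G$ is standard Gaussian, then
\[
\mathbb P(\|\mathcal T G\|\leq\epsilon)\leq (C\epsilon/s)^{N}.
\]
This follows by projecting onto the corresponding $N$-dimensional singular subspace and bounding the Gaussian density there. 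Applied with the $W^{-1/2}$ scaling, the threshold becomes $s\gtrsim\sqrt W\,\epsilon$.

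So the core is a deterministic count of the singular values of $\mathcal T_P$. Write $P=U\Sigma V^T$ in its singular value decomposition and conjugate, $H\mapsto U^THV$ and outputs by $V$. For the top $h$ singular directions $i\leq h$ (with $\sigma_i\geq 2^{-k_i-1}$), I would test $\mathcal T_P$ on symmetric matrices supported in rows and columns indexed by the right-singular directions. On a test matrix $H=e_ie_j^T+e_je_i^T$ adapted to column $i$, for roughly $W$ choices of $j$, the output $HP+P^TH$ has a component of size about $\sigma_i$ that is orthogonal across different $(i,j)$ up to triangular corrections. Ordering the pairs by $i$ gives a block-triangular structure. This yields about $W/2$ singular values of size $\gtrsim\sigma_i$ for each $i\leq h$, with the $i$-th group counted after discarding overlaps with earlier groups. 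A cruder but sufficient version: by interlacing, $\mathcal T_P$ has at least $W(h-i+1)/5$ singular values $\geq c\,\sigma_h$ for each $i$.

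Combining group by group, the product of the small-ball factors gives
\[
\prod_{i\leq h}\bigl(C\epsilon/(\sqrt W\sigma_i)\bigr)^{W/5},
\]
up to constants in the exponent, which explains the factor $1/25$. The extra $\sqrt h$ would come from a sharper choice. Since $\|P\|_{HS}=1$ and only $h$ directions are used, one can instead use the $hW$-dimensional space of test matrices directly. The threshold then scales with the Frobenius mass, giving $\sqrt h$ relative to $\epsilon$ after normalization; the $W^{-1/2}$ Gaussian scaling and $\sqrt W$ cancel into $\sqrt h$.

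The main obstacle is this deterministic singular value count for $\mathcal T_P$. The symmetrization $P^TH$ can cancel $HP$ on antisymmetric-like directions: for instance if $P$ is antisymmetric, then $H=\Id$ gives a zero output. So one must exhibit an explicit subspace of dimension of order $W$ per singular direction on which the cancellation is harmless. I would first choose $H$ supported on the block pairing right-singular vector $v_i$ with vectors in the orthogonal complement of the span of the top $h$ right-singular vectors. There $P^TH$ lands in a direction orthogonal to $HP$, so no cancellation occurs, and this costs only $h\leq W/2$ dimensions per direction.
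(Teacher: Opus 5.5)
Your reduction to a single block by independence matches the paper. So does the small-ball step of projecting $L_P(A)=AP+P^TA$ onto its top $N$ singular directions and bounding the Gaussian density. The gap is the step you flag as the main obstacle. That step is exactly the content of Lemma~\ref{lem:LP-singular-value-amplification}, which is not a small-ball bound. It is the deterministic count $|\{i:\sigma_i(L_P)\ge s/(4\sqrt2)\}|\ge\frac{W}{24}|\{j:\sigma_j(P)\ge s\}|$, and the substitute you sketch for it does not work.

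\emph{Wrong test matrices.} Pairing right singular vectors $v_i$ with $w\perp v_1,\dots,v_h$ fails because $v_i^TP$ has nothing to do with $\sigma_i$. For $P=e_1e_3^T+e_2e_4^T$ (so $v_1=e_3$, $v_2=e_4$) and any $w\in\operatorname{span}(e_5,\dots,e_W)$, $H=e_3w^T+we_3^T$ gives $HP=0$, hence $L_P(H)=0$. The identity that works is $u_i^TP=\sigma_iv_i^T$. The paper uses $H=wu_i^T+u_iw^T$ with $w\perp\operatorname{range}(P)$, for which $L_P(H)=\sigma_i(wv_i^T+v_iw^T)$.

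\emph{Cancellation on the span.} Even with the right test matrices, non-cancellation for each one says nothing about their span. Since $L_P(H)=B+B^T$ with $B=HP$, a combination can make $B$ antisymmetric. With the same $P$, $H=e_1e_4^T+e_4e_1^T-e_2e_3^T-e_3e_2^T$ gives $HP=e_4e_3^T-e_3e_4^T$ and $L_P(H)=0$, although $\|L_P(H')\|_{HS}=\|H'\|_{HS}$ for each of its two summands $H'$. The paper handles this with the trace-averaging claim inside the proof of Lemma~\ref{lem:LP-singular-value-amplification}: if $\sum_i\operatorname{Tr}(B_i^2)\ge0$ over an orthonormal basis, some subspace of a third of the dimension satisfies $\|B+B^T\|_{HS}\ge\|B\|_{HS}$.

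\emph{High rank.} The paper also needs a separate argument when $\operatorname{rank}(P)>W/2$: a $k(k+1)/6$ count on $\operatorname{span}\{u_iu_j^T+u_ju_i^T\}$ using positivity of a Cauchy matrix, plus truncation of $P$. Your sketch tacitly assumes $h\le W/2$, while $h$ ranges up to $r\le W$.

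\emph{Thresholds.} Your ``cruder'' count, with every threshold equal to $c\sigma_h$, is not sufficient. You need the nested bound $\sigma_m(L_P)\ge c2^{-k_i}$ for $m\le i\lfloor W/24\rfloor$ to get $\sum_{i\le h}\log 2^{-k_i}$ rather than $h\log 2^{-k_h}$.

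The $\sqrt h$ is also not accounted for. It comes not from Frobenius mass but from the volume of the $\epsilon$-ball in $\mathbb R^N$, which is at most $(C\epsilon/\sqrt N)^N$; your bound $(C\epsilon/s)^N$ drops this factor. With covariance $\frac2W\operatorname{diag}(\sigma_j(L_P)^2)$ one gets $\mathbb P(\|L_P(A)\|_{HS}\le\epsilon)\le(C\epsilon\sqrt{W/N})^N/\prod_{j\le N}\sigma_j(L_P)$. Choosing $N=h\lfloor W/24\rfloor$ turns $\sqrt{W/N}$ into $C/\sqrt h$, and $\lfloor W/24\rfloor\ge W/25$ gives the constant.

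Your two formulas instead put $\sqrt W$ on opposite sides. The bound $(C\epsilon\sqrt W/s)^N$ loses a factor of order $\sqrt{Wh}$ inside each logarithm. The bound $\prod_i(C\epsilon/(\sqrt W\sigma_i))^{W/5}$ is false already for $P=e_1e_1^T$, $\epsilon=1$, where $-\log\mathbb P(\|L_P(A)\|_{HS}\le1)=O(W)$. With Lemma~\ref{lem:LP-singular-value-amplification} invoked as stated and the volume factor kept, the claim follows in a few lines.
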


Heuristically, Lemma A.6 below implies that for each
$\sigma_i(P)$, 
$\mathcal L$ has $m_0W/25$ singular directions of comparable size. 
Hence each $\sigma_i(P)$
reduces the small ball probability roughly by the factor 
$$\left(\frac{C\epsilon}{\sigma_i(P)}\right)^{m_0 W/25}$$

\begin{proof}[Proof of Claim] 
Fix $P\in Q_{\mathbf k}$ and define 
$L_P:\mathbb{R}^{W\times W}_{sym} \to \mathbb{R}^{W\times W}_{sym}$ 
by 
$$L_P(A) = AP + P^TA.$$ 
For any $N\in [1,\operatorname{rank}(L_P)]$,
the projection of $L_P(A)$ onto the top $N$ 
left singular vectors of $L_P$
is a centered Gaussian in $\mathbb{R}^{N}$ with covariance 
$$\frac{2}{W}\operatorname{diag}(\sigma_1(L_P)^2, ..., \sigma_{N}(L_P)^2).$$
The standard Gaussian small ball estimate thus implies 
$$\mathbb{P}(\|L_P(A_1)\|_{HS}\leq \epsilon)\leq 
\left(\frac{C\epsilon}{\sqrt{N}}\right)^{N}\left(\frac{(C\sqrt{W})^N}{\prod_{i=1}^{N}\sigma_i(L_P)}\right) = \frac{(C\epsilon \sqrt{W/N})^{N}}{\prod_{i=1}^{N}\sigma_i(L_P)}.$$
The first factor bounds the volume of the $\epsilon$-ball in $\mathbb{R}^{N}$
while the second factor uses that a Gaussian 
with covariance matrix $\Sigma\in \mathbb{R}^{N\times N}$ has density bounded above by 
$C^N/\sqrt{\operatorname{det(\Sigma)}}$. 
Since $A_1,...,A_{m_0}$ are independent
and $\mathcal L(P) = (L_P(A_i))_{i=1}^{m_0}$ we have 
\begin{align*}
\log \mathbb{P}\left(\|\mathcal L(P)\|_{HS}\leq \epsilon\right)
& \leq 
m_0\log \left( \frac{(C\epsilon \sqrt{W/N})^{N}}{\prod_{i=1}^{N}\sigma_i(L_P)}\right)\\
& = m_0\sum_{i=1}^{N} \log \frac{C\epsilon \sqrt{W}}{\sqrt{N}\sigma_i(L_P)}
\end{align*}
Now Lemma \ref{lem:LP-singular-value-amplification}
lets us estimate the $\sigma_i(L_P)$. It  
implies that 
$$\sigma_{1}(L_P),...,\sigma_{\lfloor W/24\rfloor}(L_P)\geq c 2^{-k_1},$$
and 
$$\sigma_{\lfloor W/24\rfloor+1}(L_P),...,\sigma_{2\lfloor W/24\rfloor}(L_P)\geq c 2^{-k_2},$$
and so on. Thus for any $h= 1,...,\operatorname{rank}(P),$ 
combining this with the above and taking $N = h\lfloor W/24\rfloor$ gives 
\begin{align*}
\log \mathbb{P}\left(\|\mathcal L(P)\|_{HS}\leq \epsilon\right)
& \leq \frac{m_0 W}{25}\sum_{i=1}^{h} \log \frac{C\epsilon}{\sqrt{h}2^{-k_i}}. 
\end{align*}
This holds for any $P\in Q_{\mathbf k}$ and $h = 1,..., \operatorname{rank}(P) = r,$
so the claim follows by negating both sides, 
and noting that we always have the bound $S(k,\epsilon)\geq 0$
\end{proof}

At this point we would like to apply \eqref{eq:entropy-vs-prob} with
$\epsilon=t$. However, this does not work 
for all $\mathbf k$. The issue is that a $P\in Q_{\mathbf k}$ may have many singular values 
at scale $\epsilon = t$. These singular values increase the entropy $H(\mathbf k, \epsilon)$, 
but may not improve the small-ball estimate
since it has radius $(M_0+1)\epsilon\gg \epsilon$. 
Hence, we need to 
apply \eqref{eq:entropy-vs-prob} at a scale $\epsilon$
for which most of the entropy comes from singular values 
at a scale $\gg\epsilon$. \eqref{eq:entropy-change}
captures this. To find such a scale
we use a pigeon hole princple argument.

For any $\mathbf k$ with $Q_{\mathbf k}\neq \emptyset$, 
we find an
integer $j\in [0, \lfloor \frac{3}{4}\log_2(1/t)\rfloor]$ such that
$$
S(\mathbf k,(M_0+1)2^jt)-H(\mathbf k,2^jt)
\geq
\frac12H(\mathbf k,2^jt).
$$
To do this note that
for all $j\in [0,\frac{3}{4}\log_2(1/t)]$
$$
cW\log(1/t)\leq H\left(
\mathbf k,
2^{j}t
\right)
\leq
CW^2\log(1/t),
$$
for some absolute constants $c,C>0$. The upper bound 
follows since $H$ has at most $W$ summands, and 
the lower bound because $Q_{\mathbf k}$ is non-empty 
so $1\leq \sum 2^{-2k_i}$
and $2^j t$ is small enough that 
$H$ must have at least $1$ summand larger than $c\log(1/t)$. 
This implies 
$$\prod_{j=1}^{\lfloor \frac{3}{4}\log_2(1/t)\rfloor}\frac{H(\mathbf k, 2^{j-1}t)}{H(\mathbf k, 2^{j}t)} = \frac{H(\mathbf k, t)}{H(\mathbf k, 2^{\lfloor \frac{3}{4}\log_2(1/t)\rfloor}t)}\leq CW.$$
Hence, there an integer exists $1\leq j\leq \lfloor\frac34\log_2(1/t)\rfloor$ satisfying
\begin{equation}
\label{eq:entropy-change}
H(\mathbf k,2^{j-1}t)
-H(\mathbf k,2^jt)
\leq
\frac{C\log W}{\log(1/t)}
H(\mathbf k,2^jt),
\end{equation}
otherwise the product above would be too large. 
If $h$ is the number of terms contributing to $H(\mathbf k,2^jt)$
then each of these terms contributes $3W\log2$ to the difference on the
left, so we have 
$$
Wh\frac{c\log(1/t)}{\log W}\leq H(\mathbf k,2^jt).
$$
Thus, using the definitions of $S$ and $H$, we may estimate 
\begin{align*}
&S(\mathbf k,(M_0+1)2^jt)-H(\mathbf k,2^jt)\\
&\quad\geq
\frac{m_0}{75}
\left(
H(\mathbf k,2^jt)
-CWh\left(1+\log M_0+\log\frac rh\right)
\right)
-H(\mathbf k,2^jt)\\
&\quad\geq
\frac12H(\mathbf k,2^jt)\\
&\quad\geq
cW\log(1/t).
\end{align*}
The first inequality follows by comparing the definition of $H$ with
the first $h$ terms in \eqref{eq:S-profile-bound}. The second uses the
preceding lower bound on $H$, taking $W$ sufficiently large, and $m_0=128$, while the last uses the monotonicity
of $H$ and its lower bound noted above.
It follows from \eqref{eq:entropy-vs-prob} that
\begin{equation}
\label{eq:probability-estimate} 
\mathbb P\left(
\Omega\cap
\left\{
\inf_{P\in Q_{\mathbf k}}\|\mathcal L(P)\|_{HS}\leq2t
\right\}
\right)
\leq
\exp\left(-cW\log(1/t)\right).
\end{equation}

Finally, since the sum in \eqref{eq:first-reduction} is over $\mathbf k$ satisfying
$-1\leq k_i\leq\lceil\log_2(CM_0\sqrt W/t)\rceil$, 
the number of terms in the sum is at most
$$
W\left(
C\left\lceil\log_2\left(\frac{CM_0\sqrt W}{t}\right)\right\rceil
\right)^W
\leq
\exp\left(CW\log\bigl(2+\log(1/t)\bigr)\right). 
$$
Combining with \eqref{eq:probability-estimate}
and summing over the admissible
profiles in \eqref{eq:first-reduction}, gives
\begin{align*}
\mathbb P\left(s_{min}(\mathcal L)\leq t\right)
&\leq
Ct^{c\sqrt W}
+\exp\left(-cW\log(1/t)\right)\\
&\leq
(Ct)^{c\sqrt W}.
\end{align*}
\end{proof}

\subsection{Auxiliary estimate for Lemma \ref{lem:small-t-estimate}}
We prove the singular-value estimate used in the proof
of Claim \ref{claim:small-ball-estimate} above

\begin{lemma}
\label{lem:LP-singular-value-amplification}
For $P\in \mathbb{R}^{W\times W}$, define
$L_P:\mathbb{R}^{W\times W}_{sym}\to \mathbb{R}^{W\times W}_{sym}$
by
$$
L_P(M)=MP+P^TM. 
$$
Then, for every $s>0$,
$$
\left|\left\{i:\sigma_i(L_P)\geq \frac{s}{4\sqrt{2}}\right\}\right|
\geq
\frac{W}{24}
\left|\left\{j:\sigma_j(P)\geq s\right\}\right|.
$$
\end{lemma}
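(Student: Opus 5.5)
Let $m:=|\{j:\sigma_j(P)\geq s\}|$; we may assume $m\geq 1$, since otherwise there is nothing to prove. By the min--max principle it suffices to produce a subspace $\mathcal V\subset\mathbb R^{W\times W}_{sym}$ of dimension at least $mW/24$ on which $\|L_P(M)\|_{HS}\geq \frac{s}{4\sqrt2}\|M\|_{HS}$. First reduce using the singular value decomposition. Write $P=U\Sigma V^T$. Then
\[
L_P(M)=MU\Sigma V^T+V\Sigma U^TM .
\]
Conjugating by the orthogonal map $M\mapsto V^TMV$ preserves both the HS norm and symmetry. This turns $L_P$ into
\[
M'\mapsto M'O\Sigma+\Sigma O^TM',
\]
where $O=V^TU$ is orthogonal. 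So without loss of generality $P=O\Sigma$, with $\Sigma=\operatorname{diag}(\sigma_1,\ldots,\sigma_W)$ and $\sigma_1,\ldots,\sigma_m\geq s$.

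The natural candidate subspace is built from matrices whose columns live on the large singular directions. Concretely, take $M$ of the form
\[
M=xy^T+yx^T,
\]
or more generally the space of symmetric $M$ whose range meets the span of $Oe_1,\ldots,Oe_m$ nontrivially. Then $MP=MO\Sigma$ picks up a factor of size at least $s$ on the first $m$ columns. The difficulty is that $P^TM$ might cancel $MP$, since $L_P$ can genuinely have a kernel: for $P$ antisymmetric, say, $L_P(\Id)=P+P^T=0$. So instead of a direct lower bound I will use a pairing argument.

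For each $j\leq m$, consider the column $L_P(M)e_j'$ in the rotated coordinates, and split it into the terms $\sigma_j MOe_j$ and $\Sigma O^TMe_j$. I will restrict to $M$ supported on a block chosen to kill the second term. Choose a subspace $Y$ of dimension about $W/3$ orthogonal to $Oe_j$ for the relevant $j$'s. Set
\[
M=Oe_j z^T+z(Oe_j)^T,\qquad z\in Y.
\]
Then $MOe_j=Oe_j\langle z,Oe_j\rangle+z=z$ has norm $\|z\|$, so $\|L_P(M)e_j\|$ contains $\sigma_j z$. The cross term $\Sigma O^TMe_j$ is a combination of $\Sigma O^T Oe_j(z^Te_j)$ and $\Sigma O^Tz\,(Oe_j)^Te_j$. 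I will force both coefficients to vanish by additionally imposing $z\perp e_j$ and $z\perp O^Te_j$ (after rotating, $e_j$ means the $j$-th coordinate vector).

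Doing this for all $j\leq m$ at once does not give a direct sum, because the blocks for different $j$ interact. The remedy is to first choose a subset $J\subset\{1,\ldots,m\}$ and a subspace $Y$ of codimension at most $3|J|$ or so. A cheaper alternative is a greedy or probabilistic selection: take $Y$ a random subspace of dimension $W/6$, and use orthogonality estimates so that the family $\{Oe_jz^T+z(Oe_j)^T: j\leq m,\ z\in Y'_j\}$ is nearly orthogonal. The counting constant $W/24$ together with the factor $4\sqrt2$ suggests exactly such losses: a factor of $2$ from symmetrization, $\sqrt2$ from the HS norm of $xy^T+yx^T$, and a $1/24$ fraction of dimensions surviving the constraints.

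The main obstacle is this simultaneous choice. I need a $\geq mW/24$ dimensional space that is nearly orthogonal and on which the cross terms from distinct $j,j'$ (through $\Sigma O^T$) cannot cancel the main term. This is where I expect the real work. My first attempt is to handle it column by column. Split coordinates into $\{1,\ldots,m\}$ and a complementary block. If $m\leq W/2$, restrict $z$ to a subspace of dimension at least $W/3-m$ orthogonal to all $e_i$ and $O^Te_i$, and lower bound by looking only at the columns $j\leq m$ of $L_P(M)$. If $m>W/2$, the problem is essentially full rank, and I would instead argue via $\|L_P(M)\|\geq\|\Pi(MP)\|-\|\Pi(P^TM)\|$ on a suitable half-space, accepting the crude constants.
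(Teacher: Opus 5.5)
Your proposal has a genuine gap. The mechanism for removing the cross term is wrong, and the regime where $P$ has many large singular values is not handled at all.

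\textbf{The cross term.} In your coordinates $P=O\Sigma$ one has $u_j=Oe_j$ and $v_j=e_j$. For $M=u_jz^T+zu_j^T$ with $z\perp u_j,e_j$,
\[
L_P(M)e_j=\sigma_j z+\langle u_j,e_j\rangle\,P^Tz,\qquad P^Tz=\Sigma O^Tz .
\]
The coefficient $(Oe_j)^Te_j$ does not depend on $z$, so imposing $z\perp O^Te_j$ does nothing. For general $O$ the term vanishes only if $P^Tz=0$, i.e.\ $z\perp\operatorname{range}(P)$. This includes the directions $u_\ell$ with $0<\sigma_\ell<s$. Orthogonality to $u_1,\dots,u_m$ alone leaves a tail of norm comparable to $s\|z\|$, so no lower bound of order $s$ survives when $\sigma_{m+1}$ is close to $s$.

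A large space of such $z$ exists only when $\operatorname{rank}P$ is small, so a truncation step is unavoidable. Suppose fewer than $W/2$ singular values exceed $s/(4\sqrt2)$. Replace $P$ by its rank-$\lfloor W/2\rfloor$ truncation $Q$, prove the count for $L_Q$ at level $s/\sqrt2$, and transfer it via Weyl and $\|L_P-L_Q\|_{op}\le 2\|P-Q\|_{op}\le s/(2\sqrt2)$. This is where the factor $4\sqrt2$ comes from.

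Even after truncating, making the cross term vanish column by column also forces $z\perp v_1,\dots,v_m$. That leaves only $m(W-\operatorname{rank}P-m)$ directions, which degenerates as $\operatorname{rank}P\to W/2$. Your count $m(W/3-m)$ is already vacuous for $m\ge W/3$.

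\textbf{The high-rank regime.} When at least $W/2$ singular values exceed $s/(4\sqrt2)$, there is no room for $z\perp\operatorname{range}(P)$. The cancellation between $MP$ and $P^TM$ is then real. For even $W$ and a complex structure $J$ ($J^T=-J$, all $\sigma_j(J)=1$), $L_J$ annihilates the $W^2/4$-dimensional space of symmetric matrices commuting with $J$. So an unspecified ``half-space'' triangle inequality cannot close this case.

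\textbf{The missing idea.} For symmetric $M$ you have $L_P(M)=B+B^T$ with $B=MP$. You never need to kill cross terms, only to keep $B$ away from the antisymmetric matrices. The paper's key claim: if $B_1,\dots,B_N$ are Hilbert--Schmidt orthonormal with $\sum_i\operatorname{Tr}(B_i^2)\ge 0$, then $B\mapsto (B+B^T)/2$ has squared HS norm at least $N/2$ on their span and operator norm at most $1$. Hence it has at least $N/3$ singular values $\ge 1/2$, giving an $N/3$-dimensional subspace on which $\|B+B^T\|_{HS}\ge\|B\|_{HS}$.

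\emph{Low rank.} Apply the claim to $VP$ with $V=\operatorname{span}(w_ju_i^T+u_iw_j^T)$ and $w_j\perp\operatorname{range}(P)$. The trace condition reads $\sum\langle v_i,w_j\rangle^2\ge 0$, so no orthogonality to the $v_i$ is needed. This gives $Wk/6$ directions at level $s/\sqrt2$ when $\operatorname{rank}P\le W/2$.

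\emph{High rank.} Apply the claim to $\{\sum_{i,j\le k}m_{ij}u_iu_j^T:\ m_{ij}=m_{ji}\}$; here the trace condition follows from positivity of the Cauchy matrix $\bigl((\sigma_i^2+\sigma_j^2)^{-1}\bigr)$. This gives a rank-free count $k(k+1)/6$. At threshold $s/(4\sqrt2)$ with $k\ge W/2$, that is at least $W^2/24\ge mW/24$, which closes the case your proposal leaves open.
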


We note that if $L_P$ were simply $M\mapsto MP$ this lemma, 
would be immediate. However, the danger is that 
the image of the symmetric matrices under $P$ 
is very close to the anti-symmetric matrices 
so that $\|MP + P^TM\|_{HS}\ll \|MP\|_{HS}$. Claim 
A.7 below is used to rule this out. 

\begin{proof}[Proof of Lemma \ref{lem:LP-singular-value-amplification}]
Fix $P\in\mathbb{R}^{W\times W}$, $s>0$, and define 
$$k:= \left|\left\{i: \sigma_{i}(P)\geq s\right\}\right|,$$
The idea is to construct a subspace of dimension at least $Wk/24$, 
such that for all $M$ in that subspace 
$$\|L_P(M)\|_{HS} \geq \frac{s}{4\sqrt{2}}\|M\|_{HS}.$$

Suppose $\operatorname{rank}(P) \leq W/2$. 
Let 
$$P:= \sum_{i=1}^{\operatorname{rank}(P)}\sigma_{i}u_iv_i^T,$$
be a singular value decomposition of $P$ wiht $\sigma_1\geq ...\geq \sigma_{\operatorname{rank}(P)}$. 
Since $\operatorname{rank}(P)\leq W/2$ there exists orthonormal vectors 
$w_1,...,w_{\rfloor W/2\lfloor}\in \mathbb{R}^{W}$, such that 
$$\langle w_i, u_j\rangle = 0$$
for all $i = 1,...,W/2$ and $j= 1,..., \operatorname{rank}(P).$
Now consider 
$$V= \operatorname{span}(w_ju_i^{T}+ u_iw_j^T)_{i\in [1,k], j\in [1,W/2]}.$$
For any $M= \sum_{j=1}^{W/2}\sum_{i=1}^{k}m_{ij}(w_ju_i^{T}+ u_iw_j^T)\in V,$
we can compute 
\begin{align*}
\|MP\|_{HS}^2
& = \left\|\sum_{j=1}^{W/2}\sum_{i=1}^{k}m_{ij}(w_ju_i^{T}+ u_iw_j^T)P\right\|_{HS}^2\\
& = \left\|\sum_{j=1}^{W/2}\sum_{i=1}^{k}m_{ij}\sigma_{i}w_jv_i^{T}\right\|_{HS}^2\\
& = \sum_{j=1}^{W/2}\sum_{i=1}^{k}m_{ij}^2\sigma_i^2\\
& \geq \frac{s^2}{2}\|M\|_{HS}^2.
\end{align*}
In the second equality we used the $w_j$ are orthogonal to the $u_i$, and 
that $u_i^TP = \sigma_{i}v_i^T$. In the third equality we used that $w_jv_i^T$ are 
orthonormal, and to pass to the last line we used that $\sigma_i\geq s$ for all $i\in[1,k]$, 
and that the $w_ju_i^T+ u_iw_j^T$ are orthogonal and have HS-norm $\sqrt{2}$. 

Now we find a large subpace of $V'$ of $V$ such that 
for any $M\in V'$ we have 
$$\|MP + P^TM\|_{HS} \geq \|MP\|_{HS}.$$
For this we use the following lemma. 
The idea is that the kernel of $B+B^T$ is 
the anti-symmetric matrices. The condition below
ensures that the span of $B_1,...,B_N$ doesn't 
have large intersection with the space of anti-symmetric 
matrices.

\begin{claim}
If $B_1,...,B_N\in \mathbb{R}^{W\times W}$ are orthonormal
with respect to the standard Hilbert Schmidt inner product
$\langle A, B\rangle_{HS} = \operatorname{Tr}(AB^T)$, 
and 
$$\sum_{i=1}^{N} \operatorname{Tr}(B_i^2)\geq 0,$$
then there exists a subspace $V'\subset \operatorname{span}(B_1,...,B_N)$ 
with $\operatorname{dim}(V')\geq N/3$ such that 
for any $B\in V'$
$$\|B+B^T\|_{HS} \geq \|B\|_{HS}.$$
\end{claim}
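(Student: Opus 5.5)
The plan is to diagonalize the transpose map on $\operatorname{span}(B_1,\dots,B_N)$ via a quadratic form. Write $\mathcal V:=\operatorname{span}(B_1,\dots,B_N)$, so $\dim\mathcal V=N$ by orthonormality. For $B\in\mathcal V$ we have
\[
\|B+B^T\|_{HS}^2=2\|B\|_{HS}^2+2\operatorname{Tr}(B^2),
\]
since $\langle B,B^T\rangle_{HS}=\operatorname{Tr}(BB)$. Hence $\|B+B^T\|_{HS}\geq\|B\|_{HS}$ is equivalent to
\[
q(B):=\operatorname{Tr}(B^2)\geq -\tfrac12\|B\|_{HS}^2 .
\]
So it suffices to find a subspace $V'\subset\mathcal V$ of dimension at least $N/3$ on which the symmetric bilinear form $q(B,B')=\operatorname{Tr}(BB')$ satisfies $q\geq -\frac12\|\cdot\|_{HS}^2$.

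Restrict $q$ to $\mathcal V$ and write it, in the orthonormal basis $B_i$, as a symmetric $N\times N$ matrix $G$ with $G_{ij}=\operatorname{Tr}(B_iB_j)$. Let $\lambda_1\geq\cdots\geq\lambda_N$ be its eigenvalues. There are two facts to use:
\begin{itemize}
\item $|\lambda_\ell|\leq 1$, since $|\operatorname{Tr}(BB)|=|\langle B,B^T\rangle_{HS}|\leq\|B\|_{HS}^2$ by Cauchy--Schwarz.
\item $\sum_\ell\lambda_\ell=\operatorname{Tr}G=\sum_i\operatorname{Tr}(B_i^2)\geq 0$ by hypothesis.
\end{itemize}
Let $V'$ be the span of the eigenvectors with $\lambda_\ell\geq -\frac12$. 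On $V'$ we then have $q(B)\geq -\frac12\|B\|_{HS}^2$, which is exactly the required inequality.

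It remains to count. Let $n_-$ be the number of eigenvalues below $-\frac12$. These contribute less than $-\frac12 n_-$ to the trace, while the others contribute at most $N-n_-$, since each is at most $1$. Trace nonnegativity therefore gives
\[
0\leq (N-n_-)-\tfrac12 n_- ,
\]
so $n_-\leq\frac{2N}{3}$ and $\dim V'=N-n_-\geq N/3$.

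I expect no real obstacle. The only points needing care are the identity $\langle B,B^T\rangle_{HS}=\operatorname{Tr}(B^2)$, which gives both the bound $|\lambda_\ell|\leq1$ and the reformulation of the norm inequality, and the fact that the $B_i$ are orthonormal so that $G$ represents $q$ in an orthonormal basis.
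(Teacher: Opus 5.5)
Your proof is correct and is essentially the same argument as the paper's. The paper works with the singular values of the symmetrization map $\Pi(B)=(B+B^T)/2$ restricted to the span. Since $\|\Pi B\|_{HS}^2=\tfrac12\bigl(\|B\|_{HS}^2+\operatorname{Tr}(B^2)\bigr)$, its squared singular values are exactly $\tfrac12(1+\lambda_\ell)$ for your eigenvalues $\lambda_\ell$. So the paper's threshold $\sigma\geq\tfrac12$ and its count from $\sum_i\sigma_i^2\geq N/2$, $\sigma_i\leq 1$ become your threshold $\lambda\geq-\tfrac12$ and your trace count after an affine change of variables.
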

\begin{proof}
Let $V:= \operatorname{span} (B_1,...,B_N)$, and consider the 
orthogonal projection $\Pi_{V}:V\to \mathbb{R}^{sym}_{W\times W}$ 
given by 
$$\Pi_V(M) = \frac{M + M^T}{2}.$$
Using the ONB for $V$ given by $B_1,...,B_N$ we have 
\begin{align*}
\|\Pi_{V}\|_{HS(V,\mathbb{R}^{W\times W}_{sym})}^2 
& = \sum_{i\in [1,N]} \|\Pi_{V}(B_i)\|_{HS}^2 \\
& = \sum_{i\in [1,N]}\|\frac{B_i + B_i^T}{2}\|_{HS}^2\\
& = \frac{1}{2}\sum_{i\in [1,N]} 1 + \operatorname{Tr}(B_i^2)\\
& \geq \frac{N}{2}
\end{align*}
Hence, $\sum_{i=1}^{N}\sigma_i(\Pi_V)^2 = \|\Pi_{V}\|_{HS}^2\geq N/2$. 
But also for all $i\in [1,N],$ we have $\sigma_i(\Pi_V)\leq \|\Pi_V\|_{op}\leq 1,$
so $\Pi_{V}$ has at least $N/3$ singular values, larger than $1/2$.
If $V'$ is the subspace spanned by the right singular directions 
corresponding to those singular values we have 
$$\|B^T + B\|_{HS} = 2 \|\Pi_{V}(B)\|_{HS} \geq \|B\|_{HS},$$
for any $B\in V'$. So the claim follows. 
\end{proof}

We apply this to the subspace $VP \subset \mathbb{R}^{W\times W}$. 
As noted above $(w_jv_i^T)_{j\in [1,W/2], i\in [1,k]}$, 
is an orthonormal basis for $VP$. Moreover, we have 
\begin{align*}
\sum_{j=1}^{W/2}\sum_{i=1}^k\operatorname{Tr}\left((w_jv_i^T)(w_jv_i^T)\right) 
& = \sum_{j=1}^{W/2}\sum_{i=1}^k \langle v_i,w_j\rangle_{\mathbb{R}^{W}}^2\geq 0.
\end{align*}
Since $\operatorname{dim}(V) = \operatorname{dim}(VP) = Wk/2,$
there exists a subspace $V'\subset V$ with $\operatorname{dim}(V')\geq Wk/6$
such that 
\begin{align*}
\inf_{M\in V', \|M\|_{HS} = 1}\|L_P(M)\|_{HS}
& = \inf_{M\in V', \|M\|_{HS} = 1}\|MP + P^TM\|_{HS}\\
& \geq \inf_{M\in V', \|M\|_{HS} = 1}\|MP\|_{HS}\\
& \geq \frac{s}{\sqrt{2}}. 
\end{align*}
Applying the max-min characterization of singular values
to $L_P$ proves the lemma when $\operatorname{rank}(P)\leq W/2$. 

Now we prove that regardless of the rank of $P$ we always have
\begin{equation}
\label{eq:rank-free-estimate}
\left|\left\{i:\sigma_i(L_P)\geq s\right\}\right|
\geq
\frac{k(k+1)}{6}.
\end{equation}
and together with the above estimate this will imply the lemma. 
For this, define the subspace
$$V := \left\{\sum_{i,j\in [1,k]}m_{ij}u_iu_j^T : m_{ij} = m_{ji}\right\}.$$
For any $M = \sum_{i,j\in [1,k]} m_{ij}u_iu_j^T \in V,$
we can compute 
\begin{align*}
\|MP\|_{HS}^2
& = \left\|\sum_{i,j\in [1,k]}m_{ij}u_iu_j^{T}P\right\|_{HS}^2\\
& = \left\|\sum_{i,j\in [1,k]}m_{ij}\sigma_{j}u_iv_j^{T} \right\|_{HS}^2\\
& \geq s^2 \sum_{i,j\in [1,k]}m_{ij}^2\\
& \geq s^2\|M\|_{HS}^2.
\end{align*}

To pass to the third line we used that $\sigma_j\geq s$ for 
$j\in [1,k]$ and that $u_iv_j^T$ are orthonormal in the HS-norm.

To apply Claim A.7 as above, note that the vectors 
$E_{ii} = u_iv_i^T$
for $1\leq i\leq k$ together with the vectors
$$
E_{ij} = \frac{
\sigma_j u_iv_j^T+\sigma_i u_jv_i^T
}{
\sqrt{\sigma_i^2+\sigma_j^2}
}
$$
for $1\leq i<j\leq k$ form an orthonormal basis
for $VP$. Hence 
\begin{align*}
\sum_{i=1}^k\operatorname{Tr}(E_{ii}^2)
+\sum_{1\leq i< j\leq k} \operatorname{Tr}(E_{ij}^2) 
& = \sum_{i\in [1,k]} \langle u_i,v_i\rangle^2 + \sum_{1\leq i< j\leq k} \frac{\sigma_j^2\langle u_i,v_j\rangle^2 + \sigma_i^2\langle u_j,v_i\rangle^2 + 2\sigma_i\sigma_j
\langle u_i,v_i\rangle
\langle u_j,v_j\rangle
}{
\sigma_i^2+\sigma_j^2
}\\
& \geq \sum_{i\in [1,k]}\langle u_i,v_i\rangle^2 + \sum_{1\leq i<j\leq k}\frac{2\sigma_i\sigma_j}{\sigma_i^2+\sigma_j^2}\langle u_i,v_i\rangle\langle u_j, v_j\rangle.\\
& \geq 0.
\end{align*}

The second line follows by throwing away the first two positive 
terms in the numerator in the second summation. For the last line, we use that
the matrix $\left(\frac{\sigma_i\sigma_j}{\sigma_i^2+\sigma_j^2}\right)_{i,j\in[1,k]}$
is positive semidefinite. This follows from the positivity of the Cauchy matrix
$\left(\frac{1}{\sigma_i^2+\sigma_j^2}\right)_{i,j\in[1,k]}$.

Thus there exists a subspace 
$V'\subset V$ such that $\operatorname{dim}(V') \geq \operatorname{dim}(V)/3\geq \frac{k(k+1)}{6}$
such that
\begin{align*}
\inf_{M\in V', \|M\|_{HS} = 1} \|L_P(M)\|_{HS} \geq s.
\end{align*}
Now applying max-min as above implies \eqref{eq:rank-free-estimate}. 

To finish note that 
if $P$ has at least $W/2$ singular values 
larger than $s/4\sqrt{2}$ then the above implies 
$$
\left|\left\{i:
\sigma_i(L_P)\geq \frac{s}{4\sqrt{2}}
\right\}\right|
\geq
\frac{
(W/2)(W/2+1)
}{6}\geq \frac{Wk}{24}
$$
which implies the Lemma. 

Otherwise if $P$ has $\leq W/2$ 
singular values $> s/4\sqrt{2}$, let $Q$ be the truncation of $P$ to its first
$\lfloor W/2\rfloor$ singular values. We must have 
$$
\|P-Q\|_{\mathrm{op}}
\leq
\frac{s}{4\sqrt{2}},
$$
and moreover $Q$ must have at least
$k$ singular values larger than $s$. 
Hence, the first estimate proved above applied 
to $Q$ yields
$$
\left|\left\{i:
\sigma_i(L_Q)\geq \frac{s}{\sqrt{2}}
\right\}\right|
\geq
\frac{Wk}{6}.
$$
But since $\|L_P-L_Q\|_{\mathrm{op}}\leq 2\|P-Q\|_{op}\leq \frac{s}{2\sqrt{2}},$
the same estimate follows for $P$, with 
$\frac{s}{\sqrt{2}}$ replaced by $\frac{s}{2\sqrt{2}}$, 
which implies the lemma. 
\end{proof}

\nocite{Forrester1,Forrester2,ForresterZhang2020}
\printbibliography

@article{ChenGarzaVargasVanHandel2026,
  author       = {Chen, Chi-Fang and Garza-Vargas, Jorge and van Handel, Ramon},
  date         = {2026},
  doi          = {10.1007/s00039-026-00744-2},
  journaltitle = {Geom. Funct. Anal.},
  title        = {A new approach to strong convergence {II}: The classical ensembles},
}

@article{HaagerupPisier1993,
  author       = {Haagerup, Uffe and Pisier, Gilles},
  date         = {1993},
  journaltitle = {Duke Math. J.},
  number       = {3},
  pages        = {889--925},
  title        = {Bounded linear operators between {$C^*$}-algebras},
  volume       = {71},
}

@article{BinderGoldsteinVoda2015,
  author       = {Binder, Ilia and Goldstein, Michael and Voda, Mircea},
  date         = {2015},
  journaltitle = {J. Spectral Theory},
  number       = {2},
  pages        = {355--395},
  title        = {On fluctuations and localization length for the {A}nderson model on a strip},
  volume       = {5},
}

@article{RudelsonVershynin2009,
  author       = {Rudelson, Mark and Vershynin, Roman},
  date         = {2009},
  doi          = {10.1002/cpa.20294},
  journaltitle = {Comm. Pure Appl. Math.},
  number       = {12},
  pages        = {1707--1739},
  title        = {The smallest singular value of a random rectangular matrix},
  volume       = {62},
}

@online{TaoLeastSingularValue2010,
  author  = {Tao, Terence},
  url     = {https://terrytao.wordpress.com/2010/03/05/254a-notes-7-the-least-singular-value/},
  date    = {2010-03-05},
  title   = {{254A}, Notes 7: The least singular value},
  urldate = {2026-07-23},
}

@misc{QuasDewitt,
  author      = {Bednarski, Sam and DeWitt, Jonathan and Quas, Anthony},
  date        = {2025},
  doi         = {10.48550/arXiv.2507.04058},
  eprint      = {2507.04058},
  eprintclass = {math.PR},
  eprinttype  = {arXiv},
  title       = {Effective gaps between singular values of non-stationary matrix products subject to non-degenerate noise},
}

@article{Bourgain2012,
  author       = {Bourgain, Jean},
  date         = {2012},
  journaltitle = {J. Anal. Math.},
  pages        = {273--286},
  title        = {On the {F}urstenberg measure and density of states for the {A}nderson--{B}ernoulli model at small disorder},
  volume       = {117},
}

@article{BourgainGamburd2008,
  author       = {Bourgain, Jean and Gamburd, Alex},
  date         = {2008},
  journaltitle = {Ann. of Math.},
  number       = {2},
  pages        = {625--642},
  title        = {Uniform expansion bounds for {C}ayley graphs of $\mathrm{SL}_2(\mathbb{F}_p)$},
  volume       = {167},
}

@misc{BourgainGamburd2012,
  author      = {Bourgain, Jean and Gamburd, Alex},
  date        = {2012},
  eprint      = {1108.6264},
  eprintclass = {math.GR},
  eprinttype  = {arXiv},
  title       = {A spectral gap theorem in $\mathrm{SU}(d)$},
}

@book{BougerolLacroix1985,
  author    = {Bougerol, Philippe and Lacroix, Jean},
  location  = {Boston},
  publisher = {Birkh\"{a}user},
  date      = {1985},
  series    = {Progress in Probability and Statistics},
  title     = {Products of Random Matrices with Applications to {S}chr\"{o}dinger Operators},
  volume    = {8},
}

@article{CoopermanRowan2025,
  author       = {Cooperman, William and Rowan, Keefer},
  date         = {2025},
  doi          = {10.1007/s00222-025-01384-3},
  journaltitle = {Invent. Math.},
  number       = {3},
  pages        = {863--959},
  title        = {Exponential scalar mixing for the 2D {N}avier--{S}tokes equations with degenerate stochastic forcing},
  volume       = {243},
}

@article{Furstenberg1963,
  author       = {Furstenberg, Harry},
  date         = {1963},
  journaltitle = {Trans. Amer. Math. Soc.},
  pages        = {377--428},
  title        = {Noncommuting random products},
  volume       = {108},
}

@article{FurstenbergKesten1960,
  author       = {Furstenberg, Harry and Kesten, Harry},
  date         = {1960},
  journaltitle = {Ann. Math. Statist.},
  pages        = {457--469},
  title        = {Products of random matrices},
  volume       = {31},
}

@article{FurstenbergKifer1983,
  author       = {Furstenberg, Harry and Kifer, Yuri},
  date         = {1983},
  doi          = {10.1007/BF02760620},
  journaltitle = {Israel J. Math.},
  number       = {1--2},
  pages        = {12--32},
  title        = {Random matrix products and measures on projective spaces},
  volume       = {46},
}

@article{FyodorovMirlin1991,
  author       = {Fyodorov, Yan V. and Mirlin, Alexander D.},
  date         = {1991},
  journaltitle = {Phys. Rev. Lett.},
  number       = {18},
  pages        = {2405--2409},
  title        = {Scaling properties of localization in random band matrices: {A} $\sigma$-model approach},
  volume       = {67},
}

@article{Forrester1,
  author       = {Forrester, Peter J.},
  date         = {2013},
  doi          = {10.1007/s10955-013-0735-7},
  journaltitle = {J. Stat. Phys.},
  pages        = {796--808},
  title        = {{L}yapunov exponents for products of complex {G}aussian random matrices},
  volume       = {151},
}

@article{Forrester2,
  author       = {Forrester, Peter J.},
  date         = {2015},
  doi          = {10.1088/1751-8113/48/21/215205},
  journaltitle = {J. Phys. A: Math. Theor.},
  number       = {21},
  pages        = {215205},
  title        = {Asymptotics of finite system {L}yapunov exponents for some random matrix ensembles},
  volume       = {48},
}

@article{ForresterZhang2020,
  author       = {Forrester, Peter J. and Zhang, Jiyuan},
  date         = {2020},
  doi          = {10.1007/s10955-019-02474-2},
  journaltitle = {J. Stat. Phys.},
  pages        = {558--575},
  title        = {{L}yapunov exponents for some isotropic random matrix ensembles},
  volume       = {180},
}

@article{GorodetskiKleptsyn2026,
  author       = {Gorodetski, Anton and Kleptsyn, Victor},
  date         = {2026},
  doi          = {10.56994/JAMR.004.001.001},
  journaltitle = {Journal of the Association for Mathematical Research},
  number       = {1},
  pages        = {1--41},
  title        = {Non-stationary version of {F}urstenberg's theorem on random matrix products},
  volume       = {4},
}

@book{Hall2015,
  author    = {Hall, Brian C.},
  publisher = {Springer},
  date      = {2015},
  doi       = {10.1007/978-3-319-13467-3},
  edition   = {2},
  series    = {Graduate Texts in Mathematics},
  title     = {Lie Groups, Lie Algebras, and Representations: An Elementary Introduction},
  volume    = {222},
}

@article{LindenstraussVarju2016,
  author       = {Lindenstrauss, Elon and Varj\'{u}, P\'{e}ter P.},
  date         = {2016},
  doi          = {10.1215/00127094-3167490},
  journaltitle = {Duke Math. J.},
  number       = {6},
  pages        = {1061--1127},
  title        = {Random walks in the group of {E}uclidean isometries and self-similar measures},
  volume       = {165},
}

@article{Kittle2025,
  author       = {Kittle, Samuel},
  date         = {2025},
  doi          = {10.1112/plms.70072},
  eid          = {e70072},
  journaltitle = {Proc. Lond. Math. Soc.},
  number       = {1},
  title        = {Absolutely continuous {F}urstenberg measures},
  volume       = {131},
}

@article{Kogler2025,
  author       = {Kogler, Constantin},
  date         = {2025},
  doi          = {10.1007/s11854-025-0369-0},
  journaltitle = {J. Anal. Math.},
  number       = {2},
  pages        = {743--798},
  title        = {Local limit theorem for random walks on symmetric spaces},
  volume       = {155},
}

@article{Hairer2011,
  author       = {Hairer, Martin},
  date         = {2011},
  doi          = {10.1016/j.bulsci.2011.07.007},
  journaltitle = {Bull. Sci. Math.},
  number       = {6--7},
  pages        = {650--666},
  title        = {On Malliavin's proof of {H}\"{o}rmander's theorem},
  volume       = {135},
}

@article{Hormander1967,
  author       = {H\"{o}rmander, Lars},
  date         = {1967},
  doi          = {10.1007/BF02392081},
  journaltitle = {Acta Math.},
  pages        = {147--171},
  title        = {Hypoelliptic second order differential equations},
  volume       = {119},
}

@article{IsopiNewman1992,
  author       = {Isopi, Marco and Newman, Charles M.},
  date         = {1992},
  journaltitle = {Comm. Math. Phys.},
  pages        = {591--598},
  title        = {The triangle law for {L}yapunov exponents of large random matrices},
  volume       = {143},
}

@article{BreuillardGamburd2010,
  author       = {Breuillard, Emmanuel and Gamburd, Alex},
  date         = {2010},
  journaltitle = {Geom. Funct. Anal.},
  number       = {5},
  pages        = {1201--1209},
  title        = {Strong uniform expansion in $\mathrm{SL}(2,p)$},
  volume       = {20},
}

@article{Shapiro,
  author       = {Shapiro, Jacob},
  date         = {2024},
  doi          = {10.1007/s11005-024-01796-x},
  eid          = {47},
  journaltitle = {Lett. Math. Phys.},
  title        = {Chiral random band matrices at zero energy},
  volume       = {114},
}

@article{Lacroix1984,
  author       = {Lacroix, Jean},
  date         = {1984},
  journaltitle = {Ann. Inst. H. Poincar\'{e} Sect. A},
  number       = {1},
  pages        = {97--116},
  title        = {Localisation pour l'op\'{e}rateur de {S}chr\"{o}dinger al\'{e}atoire dans un ruban},
  volume       = {40},
}

@article{KleinLacroixSpeis1990,
  author       = {Klein, Abel and Lacroix, Jean and Speis, Athanasios},
  date         = {1990},
  doi          = {10.1016/0022-1236(90)90031-F},
  journaltitle = {J. Funct. Anal.},
  number       = {1},
  pages        = {135--155},
  title        = {Localization for the {A}nderson model on a strip with singular potentials},
  volume       = {94},
}

@article{KusuokaStroock1985,
  author       = {Kusuoka, Shigeo and Stroock, Daniel},
  date         = {1985},
  doi          = {10.15083/00039520},
  journaltitle = {J. Fac. Sci. Univ. Tokyo Sect. IA Math.},
  number       = {1},
  pages        = {1--76},
  title        = {Applications of the Malliavin calculus. {II}},
  volume       = {32},
}

@article{Lessa2021,
  author       = {Lessa, Pablo},
  date         = {2021},
  doi          = {10.1090/btran/60},
  journaltitle = {Trans. Amer. Math. Soc. Ser. B},
  pages        = {105--129},
  title        = {Entropy and dimension of disintegrations of stationary measures},
  volume       = {8},
}

@incollection{Ledrappier-1982,
  author    = {Ledrappier, Fran\c{c}ois},
  editor    = {Hennequin, P. L.},
  location  = {Berlin},
  publisher = {Springer},
  booktitle = {\'{E}cole d'\'{E}t\'{e} de Probabilit\'{e}s de Saint-Flour XII---1982},
  date      = {1984},
  doi       = {10.1007/BFb0099434},
  pages     = {305--396},
  series    = {Lecture Notes in Mathematics},
  title     = {Quelques propri\'{e}t\'{e}s des exposants caract\'{e}ristiques},
  volume    = {1097},
}

@article{Margulis87,
  author       = {Gol'dsheid, I. Ya. and Margulis, G. A.},
  date         = {1987},
  journaltitle = {Dokl. Akad. Nauk SSSR},
  number       = {2},
  pages        = {297--301},
  title        = {The condition of simplicity for the spectrum of Lyapunov exponents},
  volume       = {293},
}

@article{Mirlin2000,
  author       = {Mirlin, Alexander D.},
  date         = {2000},
  journaltitle = {Phys. Rep.},
  number       = {5--6},
  pages        = {259--382},
  title        = {Statistics of energy levels and eigenfunctions in disordered systems},
  volume       = {326},
}

@article{Newman1986,
  author       = {Newman, Charles M.},
  date         = {1986},
  journaltitle = {Comm. Math. Phys.},
  number       = {1},
  pages        = {121--126},
  title        = {The distribution of {L}yapunov exponents: Exact results for random matrices},
  volume       = {103},
}

@article{SchlagShubinWolff2002,
  author       = {Schlag, Wilhelm and Shubin, Catherine and Wolff, Thomas},
  date         = {2002},
  journaltitle = {J. Anal. Math.},
  pages        = {173--220},
  title        = {Frequency concentration and localization lengths for the {A}nderson model at small disorders},
  volume       = {88},
}

@article{SchulzBaldes2004,
  author       = {Schulz-Baldes, Hermann},
  date         = {2004},
  journaltitle = {Geom. Funct. Anal.},
  number       = {5},
  pages        = {1089--1117},
  title        = {Perturbation theory for {L}yapunov exponents of an {A}nderson model on a strip},
  volume       = {14},
}

@article{ShubinWolff1998,
  author       = {Shubin, Catherine and Vakilian, Ramin and Wolff, Thomas},
  date         = {1998},
  journaltitle = {Geom. Funct. Anal.},
  number       = {5},
  pages        = {932--964},
  title        = {Some harmonic analysis questions suggested by {A}nderson--{B}ernoulli models},
  volume       = {8},
}

@article{GoldsheidMolchanovPastur1977,
  author       = {Goldsheild, Iosif Ya. and Molchanov, Stanislav A. and Pastur, Leonid A.},
  date         = {1977},
  journaltitle = {Funct. Anal. Appl.},
  number       = {1},
  pages        = {1--8},
  title        = {A pure point spectrum of the stochastic one-dimensional {S}chr\"{o}dinger operator},
  volume       = {11},
}

@article{PeledSchenkerShamisSodin2019,
  author       = {Peled, Ron and Schenker, Jeffrey H. and Shamis, Mira and Sodin, Sasha},
  date         = {2019},
  doi          = {10.1093/imrn/rnx145},
  journaltitle = {Int. Math. Res. Not. IMRN},
  number       = {4},
  pages        = {1030--1058},
  title        = {On the {W}egner orbital model},
  volume       = {2019},
}

@misc{DroginRBMLocalization2025,
  author      = {Drogin, Reuben},
  date        = {2025},
  doi         = {10.48550/arXiv.2508.05802},
  eprint      = {2508.05802},
  eprintclass = {math.PR},
  eprinttype  = {arXiv},
  title       = {Localization of one-dimensional random band matrices},
}

@article{Schenker,
  author       = {Schenker, Jeffrey H.},
  date         = {2009},
  doi          = {10.1007/s00220-009-0798-0},
  journaltitle = {Comm. Math. Phys.},
  pages        = {1065--1097},
  title        = {Eigenvector localization for random band matrices with power law band width},
  volume       = {290},
}

@misc{SchenkerPeledCipolloni,
  author      = {Cipolloni, Giorgio and Peled, Ron and Schenker, Jeffrey H. and Shapiro, Jacob},
  date        = {2022},
  doi         = {10.48550/arXiv.2206.05545},
  eprint      = {2206.05545},
  eprintclass = {math-ph},
  eprinttype  = {arXiv},
  title       = {Dynamical Localization for Random Band Matrices up to {$W\ll N^{1/4}$}},
}

@misc{SmartChen,
  author      = {Chen, Nixia and Smart, Charles K.},
  date        = {2022},
  doi         = {10.48550/arXiv.2206.06439},
  eprint      = {2206.06439},
  eprintclass = {math.PR},
  eprinttype  = {arXiv},
  title       = {Random band matrix localization by scalar fluctuations},
}

@article{BedrossianBlumenthalPunshonSmith2022,
  author       = {Bedrossian, Jacob and Blumenthal, Alex and Punshon-Smith, Samuel},
  date         = {2022},
  doi          = {10.1007/s00222-021-01069-7},
  journaltitle = {Invent. Math.},
  number       = {2},
  pages        = {429--516},
  title        = {A regularity method for lower bounds on the {L}yapunov exponent for stochastic differential equations},
  volume       = {227},
}

@misc{HaninJiang2025,
  author     = {Hanin, Boris and Jiang, Tianze},
  date       = {2025},
  eprint     = {2503.07872},
  eprinttype = {arXiv},
  title      = {Global universality of singular values in products of many large random matrices},
}

@misc{Goldstein2022RBM,
  author     = {Goldstein, Michael},
  date       = {2022},
  doi        = {10.48550/arXiv.2210.04346},
  eprint     = {2210.04346},
  eprinttype = {arXiv},
  title      = {Fluctuations and localization length for random band {GOE} matrix},
}

@misc{PillaiSmith2026Kac,
  author     = {Pillai, Natesh S. and Smith, Aaron},
  date       = {2026},
  eprint     = {2604.23828},
  eprinttype = {arXiv},
  title      = {{Kac}'s walk on rotation matrices mixes in {$n^2\log n$} steps},
}

@article{JitomirskayaSchulzBaldesStolz2003,
  author       = {Jitomirskaya, Svetlana and Schulz-Baldes, Hermann and Stolz, G\"{u}nter},
  date         = {2003},
  journaltitle = {Comm. Math. Phys.},
  number       = {1},
  pages        = {27--48},
  title        = {Delocalization in random polymer models},
  volume       = {233},
}

@article{HairerPunshonSmithRosatiYi,
  author       = {Hairer, Martin and Punshon-Smith, Sam and Rosati, Tommaso and Yi, Jaeyun},
  date         = {2026},
  journaltitle = {Duke Math. J.},
  note         = {To appear},
  title        = {Lower bounds on the top {L}yapunov exponent for linear {PDE}s driven by the 2D stochastic {N}avier--{S}tokes equations},
}

@unpublished{HairerStacy2026,
  author = {Hairer, Martin and Stacy, Declan},
  date   = {2026},
  note   = {Preprint},
  title  = {Asymptotics of {L}yapunov exponents and phase transitions for fluids with degenerate forcing},
}

@book{Ledoux2001,
  author    = {Ledoux, Michel},
  location  = {Providence, RI},
  publisher = {American Mathematical Society},
  date      = {2001},
  series    = {Mathematical Surveys and Monographs},
  title     = {The Concentration of Measure Phenomenon},
  volume    = {89},
}

\end{document}